\newcommand{\nor}[1]{\left\|#1\right\|}
\newcommand{\ZZ}{\mathbb{Z}}
\newcommand{\ZZgeq}{\mathbb{Z}_{\geq 0}}
\newcommand{\QQ}{\mathbb{R}}
\newcommand{\QQgeq}{\mathbb{R}_{\geq 0}}
\newcommand{\pp}{b}
\newcommand{\Ptop}{\mathcal{P}}
\newcommand{\PP}{\Pi}
\newcommand{\XX}{X}
\theoremstyle{plain}
\newtheorem*{theorem*}{Theorem}
\newtheorem*{lemma*}{Lemma}
\newtheorem{theorem}{Theorem}
\newtheorem{corollary}{Corollary}
\newtheorem{lemma}{Lemma}
\newtheorem*{claim}{Claim}
\newtheorem*{case}{Case}
\newtheorem{observation}{Observation}
\newtheorem{definition}{Definition}
\newtheorem{mainthm*}{Main Theorem}
\acrodef{afptas}[AFPTAS]{asymptotic fully polynomial time approximation
  scheme}
  \acrodef{aptas}[APTAS]{asymptotic polynomial time approximation
  scheme}
\acrodef{ilp}[ILP]{integer linear program}
\acrodef{lp}[LP]{linear program}
\acrodef{ptas}[PTAS]{polynomial time approximation scheme}
\acrodef{amfptas}[AMFPTAS]{asymptotic (fully) polynomial time approximation scheme}
\title{About the Structure of the Integer Cone and its Application to Bin Packing\footnote{This work was partially supported by DFG Project, Entwicklung und Analyse von effizienten polynomiellen Approximationsschemata f\"ur Scheduling- und verwandte Optimierungsprobleme, Ja 612/14-2}}
\author[1]{Klaus Jansen}
\author[1]{Kim-Manuel Klein}
\affil[1]{Department of Computer Science, University of Kiel\\ \{kj,kmk\}@informatik.uni-kiel.de}
\date{}
\begin{document}
\maketitle
\begin{abstract}
    We consider the bin packing problem with $d$ different item sizes and revisit the structure theorem given by Goemans and Rothvo\ss~\cite{goemans2015} about solutions of the integer cone. We present new techniques on how solutions can be modified and give a new structure theorem that relies on the set of vertices of the underlying integer polytope. As a result of our new structure theorem, we obtain an algorithm for the bin packing problem with running time $|V|^{2^{O(d)}} \cdot enc(I)^{O(1)}$, where $V$ is the set of vertices of the integer knapsack polytope and $enc(I)$ is the encoding length of the bin packing instance. The algorithm is fixed parameter tractable, parameterized by the number of vertices of the integer knapsack polytope $|V|$. This shows that the bin packing problem can be solved efficiently when the underlying integer knapsack polytope has an easy structure, i.e. has a small number of vertices.
    
    Furthermore, we show that the presented bounds of the structure theorem are asymptotically tight. We give a construction of bin packing instances using new structural insights and classical number theoretical theorems which yield the desired lower bound.
\end{abstract}

\section{Introduction}
Given the polytope $\Ptop = \{ x \in \QQ^d \mid Ax \leq c\}$ for some matrix $A \in \ZZ^{m \times d}$ and a vector $c \in \ZZ^d$. We consider the integer cone
\begin{align*}
int.cone(\Ptop \cap \ZZ^d) = \{ \sum_{p \in \Ptop \cap \ZZ^d} \lambda_p p  \mid \lambda \in \ZZgeq^{\Ptop \cap \ZZ^d} \}
\end{align*}
of integral points inside the polytope $\Ptop$.
Let $\Ptop_I = Conv(\Ptop \cap \ZZ^d)$ be the convex hull of all integer points inside $\Ptop$, where for given set $X \subset \QQ^d$, the convex hull of $X$ is defined by $Conv(X) = \{\sum_{p \in X} x_p p \mid x \in [0,1]^X, \nor{x}_1 = 1 \}$.
Let $V_I$ be the vertices of the integer polytope $\Ptop_I$ i.e. $\Ptop_I = Conv(V_I)$. 
In case of the (fractional) cone $Cone(\Ptop \cap \ZZ^d) = \{ \sum_{p \in \Ptop \cap \ZZ^d} \lambda_p p  \mid \lambda \in \QQgeq^{\Ptop \cap \ZZ^d} \}$, we know by Caratheodory's Theorem (see e.g. \cite{schrijver86}) that each $\gamma \in \Ptop_I$ can be written as a convex combination of at most $d+1$ points in $V_I$ and hence $Cone(\Ptop \cap \ZZ^d) = Cone(V_I)$.

In this paper we investigate the structure of the integer cone $int.cone(\Ptop \cap \ZZ^d)$ versus $int.cone(V_I)$. Therefore, we define the \emph{vertex distance} of a point $b \in int.cone(\Ptop \cap \ZZ^d)$ which describes how many extra points from $(\Ptop \cap \ZZ^d) \setminus V_I$ are needed to represent $b$. The vertex distance is defined by
\begin{align*}
Dist(b) = \min \{ \nor{\gamma}_1 \mid \gamma \in \ZZgeq^{\Ptop \cap \ZZ^d}, \lambda \in \ZZgeq^{V_I} \text{ such that } b = \sum_{v \in V_I} \lambda_v v + \sum_{p \in \Ptop_I} \gamma_p p \}
\end{align*}
In this paper we show that for every point $b \in int.cone(\Ptop \cap \ZZ^d)$, the vertex distance $Dist(b)$ is bounded by $2^{2^{O(d)}}$. Hence, every $b$ can be written by $b = \sum_{v \in V_I} \lambda_v v + \sum_{p \in \Ptop \cap \ZZ^d} \gamma_p p$ for some $\lambda \in \ZZgeq^{V_I}$ and some $\gamma \in \ZZgeq^{\Ptop \cap \ZZ^d}$, where $\nor{\gamma}_1 \leq 2^{2^{O(d)}}$.

A related result concerning the structure of the integer cone was given by Eisenbrand and Shmonin~\cite{eisenbrand2006}. They proved that every $b \in int.cone(\Ptop \cap \ZZ^d)$ can be written by a vector $\lambda \in \ZZgeq^{\Ptop \cap \ZZ^d}$ with $b = \sum_{p \in \Ptop \cap \ZZ^d} \lambda_p p$ such that $\lambda$ has a bounded support (=number of non-zero components). Therefore, for a given set $M$ and given vector $\lambda \in \QQgeq^M$, let $supp(\lambda)$ be the set of non-zero components of $\lambda$, i.e. $supp(\lambda) = \{s \in M \mid x_s \neq 0 \}$.
\begin{theorem}[Eisenbrand, Shmonin~\cite{eisenbrand2006}] \label{thm-supp}
Given polytope $\Ptop \subset \QQ^d$. For any integral point $\pp \in int.cone(P \cap \ZZ^d)$, there exists an integral vector $\lambda \in \ZZgeq^{\Ptop \cap \ZZ^d}$ such that $\pp = \sum_{p \in \Ptop \cap \ZZ^d} \lambda_p p$ and $|supp(\lambda)| \leq 2^d$.
\end{theorem}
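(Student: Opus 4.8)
\emph{Proof sketch.} The plan is an exchange argument built around a strictly convex monovariant. Since $\pp \in int.cone(\Ptop \cap \ZZ^d)$, there is at least one $\lambda \in \ZZgeq^{\Ptop \cap \ZZ^d}$ with $\pp = \sum_{p} \lambda_p p$; among all such representations I would fix one minimizing the potential $\Phi(\lambda) := \sum_{p} \lambda_p \nor{p}_2^2$. A minimizer exists because $\Phi$ takes only values in $\ZZgeq$ (every integral point $p$ has $\nor{p}_2^2 \in \ZZgeq$, and every $\lambda_p \in \ZZgeq$). Put $S := supp(\lambda)$. The goal is $\betr{S} \leq 2^d$, so suppose for contradiction that $\betr{S} > 2^d$.

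First I would run a parity pigeonhole. The reduction $p \mapsto (p \bmod 2)$ maps $\ZZ^d$ onto $(\ZZ/2\ZZ)^d$, a set of size exactly $2^d$, so among the more than $2^d$ points of $S$ there are two distinct $p, q$ with $p \equiv q \pmod 2$ coordinatewise. Then $r := \tfrac{1}{2}(p+q)$ is integral, and being the midpoint of $p, q \in \Ptop$ it lies in the convex set $\Ptop$, so $r \in \Ptop \cap \ZZ^d$; note $r \notin \menge{p,q}$ because $p \neq q$. Assuming without loss of generality that $\lambda_p \leq \lambda_q$, I would replace $\lambda$ by $\lambda'$ defined by $\lambda'_p := 0$, $\lambda'_q := \lambda_q - \lambda_p$, $\lambda'_r := \lambda_r + 2\lambda_p$, all other coordinates unchanged. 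Since $\lambda_p \leq \lambda_q$ this keeps $\lambda' \in \ZZgeq^{\Ptop \cap \ZZ^d}$, and the identity $\lambda_p p + \lambda_q q = 2\lambda_p r + (\lambda_q - \lambda_p) q$ shows $\sum_{p} \lambda'_p p = \pp$, so $\lambda'$ is again a valid representation.

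The crux is that $\Phi$ strictly decreased. Using the parallelogram-type identity $2\nor{r}_2^2 = \tfrac{1}{2}\nor{p+q}_2^2 = \nor{p}_2^2 + \nor{q}_2^2 - \tfrac{1}{2}\nor{p-q}_2^2$, one gets $\Phi(\lambda') - \Phi(\lambda) = \lambda_p\klammer{2\nor{r}_2^2 - \nor{p}_2^2 - \nor{q}_2^2} = -\tfrac{\lambda_p}{2}\nor{p-q}_2^2$, which is strictly negative because $\lambda_p \geq 1$ (as $p \in S$) and $p \neq q$. This contradicts the minimality of $\Phi(\lambda)$; hence $\betr{S} \leq 2^d$, which is the assertion.

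I expect the one point that needs genuine care — and the reason the naive approach fails — to be the choice of potential. The obvious monovariant $\nor{\lambda}_1$ (total multiplicity) is merely \emph{preserved} by the exchange $\lambda \mapsto \lambda'$ (one copy each of $p$ and $q$ is traded for two copies of $r$), so there is no guarantee of termination and no contradiction. The strictly convex weight $\nor{p}_2^2$ is precisely what turns the exchange into a strict decrease whenever $p \neq q$; after spotting this monovariant, the remaining checks (nonnegativity and integrality of $\lambda'$, the short potential computation) are routine.
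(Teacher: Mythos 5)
The paper cites this result from Eisenbrand and Shmonin without reproducing a proof, so there is no ``paper's own proof'' to compare against; I therefore assess your argument on its own terms and against the cited source.

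Your proof is correct. The minimizer of $\Phi(\lambda)=\sum_p \lambda_p\nor{p}_2^2$ exists because $\Phi$ takes values in a nonempty subset of $\ZZgeq$; with $|supp(\lambda)|>2^d$ the parity pigeonhole on $(\ZZ/2\ZZ)^d$ yields $p\equiv q\pmod 2$ with $p\neq q$, and the midpoint $r=\tfrac{1}{2}(p+q)$ is integral and, by convexity of $\Ptop$, lies in $\Ptop\cap\ZZ^d$; the exchange $\lambda_p p+\lambda_q q = 2\lambda_p r+(\lambda_q-\lambda_p)q$ (after WLOG $\lambda_p\leq\lambda_q$) preserves feasibility and nonnegativity, $r\notin\menge{p,q}$ so the coordinate update is well defined, and $\Phi(\lambda')-\Phi(\lambda)=-\tfrac{\lambda_p}{2}\nor{p-q}_2^2<0$ (indeed an integer, since $p-q$ has all even coordinates and hence $4\mid\nor{p-q}_2^2$). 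Your remark that $\nor{\lambda}_1$ alone is only preserved, so a strictly convex weight is needed, is exactly the right observation.

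Where your route diverges from Eisenbrand and Shmonin's: their theorem is proved for an \emph{arbitrary} finite set $X\subseteq\ZZ^d$ of generators, not only $X=\Ptop\cap\ZZ^d$. In that generality the midpoint $\tfrac{1}{2}(p+q)$ need not lie in $X$, so the step that is ``free'' for you (convexity of $\Ptop$ handing you a new generator) is unavailable. Their argument instead applies the mod-$2$ pigeonhole to partial sums of the generating vectors in a minimal representation and performs an exchange that stays within the original generating set. Your version buys a very short, self-contained argument by exploiting the convexity present in the paper's statement; what it gives up is the generality of the original theorem. Since the paper only invokes the theorem for $\Ptop\cap\ZZ^d$, this restriction is harmless here, but it is worth stating explicitly that your proof uses the polytope hypothesis in an essential way.
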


%Consider the Cone generated by vectors $B = \{ B_1, \ldots , B_d \}$ called $Cone(B) = \{ \lambda_1 B_1 + \ldots \lambda_d B_d \mid \lambda_i \geq 0\}$ and all integral points $Cone(B) \cap \ZZ^d$ inside the cone.
Let $(s,b)$ be an instance of the bin packing problem with item sizes $s_1, \ldots, s_d \in (0,1]$ and multiplicities $b \in \ZZgeq^d$ of the respective item sizes. The objective of the bin packing problem is to pack all items $b$ into as few unit sized bins as possible.
When we choose $\Ptop$ to be the knapsack polytope, i.e. $\Ptop = \{ x \in \ZZgeq^d \mid s^T x \leq 1 \}$, then a vector $\lambda \in \ZZgeq^{\Ptop_I}$ of $int.cone(\Ptop \cap \ZZ^d)$ yields a packing for the bin packing problem.
A long standing open question was, if the bin packing problem can be solved in polynomial time when the number of different item sizes $d$ is constant. This problem was recently solved by Goemans and Rothvo\ss~\cite{goemans2015} using similar structural properties of the integer cone. They proved the existence of a distinguished set $X \subset \Ptop$ of bounded size such that for every vector $\pp \in int.cone(\Ptop \cap \ZZ^d)$ there exists an integral vector $\lambda \in \ZZgeq^{\Ptop \cap \ZZ^d}$ where most of the weight lies in $X$. More precisely, they proved the following structure theorem:
\begin{theorem}[Goemans, Rothvo\ss~\cite{goemans2015}]
Let $\Ptop = \{ x \in \mathbb{R}^{d} \mid Ax \leq c \}$ be a polytope with $A \in \ZZ^{m \times d}, c \in \ZZ^{d}$ such that all coefficients are bounded by $\Delta$ in absolute value. Then there exists a set $X \subseteq \Ptop \cap \ZZ^{d}$ with $|X| \leq m^{d}d^{O(d)} (\log \Delta)^{d}$ such that for any point $\pp \in int.cone(\Ptop \cap \ZZ^d)$, there exists an integral vector $\lambda \in \ZZgeq^{\Ptop \cap \ZZ^d}$ such that $\pp = \sum_{p \in \Ptop \cap \ZZ^d} \lambda_p p$ and
\begin{enumerate}
\item $\lambda_p \leq 1 \qquad \forall p \in (\Ptop \cap \ZZ^d) \setminus X$
\item $|supp(\lambda) \cap X| \leq 2^{2d}$
\item $|supp(\lambda) \setminus X| \leq 2^{2d}$
\end{enumerate}
\end{theorem}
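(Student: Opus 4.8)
\emph{Proof plan.} Since $\pp \in int.cone(\Ptop \cap \ZZ^d)$, it has \emph{some} integral representation $\pp = \sum_p \mu_p p$; the idea is to massage such a representation into one obeying (1)--(3), using a carefully designed target set $X$ and repeated local exchanges. It is cleanest to work with an \emph{extremal} representation: among all integral representations of $\pp$, fix one minimizing, lexicographically, first the number of distinct vectors used outside $X$, then the total multiplicity those outside vectors carry, then the number of distinct vectors used inside $X$. I will then argue that a violation of (1), (2) or (3) would allow an exchange strictly lowering this potential, a contradiction.

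\emph{Construction of $X$.} A vertex of $\Ptop$ is pinned down by $d$ of the $m$ facet hyperplanes, so $\Ptop$ has at most $\binom{m}{d} \le m^d$ vertices, and near a vertex $v$ the polytope lies in a translate $v + C$ of a polyhedral cone; after a $d^{O(d)}$ choice of a simplicial refinement one may take $C = \pos(r_1,\dots,r_d)$ with integral edge directions $r_i$ of entries bounded by $\poly(\Delta)$. For every vertex $v$, every such choice of directions, and every dyadic scale vector $s \in \{0,1,\dots,\lceil\log_2(d\Delta)\rceil\}^d$, I place into $X$ the (at most $d^{O(d)}$) integer points of $\Ptop$ nearest to $v + \sum_i 2^{s_i} r_i$, and also place into $X$ all integer points of $\Ptop$ within distance $\poly(d)$ of a vertex; the counts multiply to $|X| \le m^d d^{O(d)}(\log\Delta)^d$. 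The point of this choice: any integer point $p \in \Ptop$ sits in a corner region $v + C$, and writing the cone coordinates of $p - v$ in binary exhibits $p$ as $v$ plus a nonnegative integer combination of $O(d\log\Delta)$ of the ``scaled-ray'' points that went into $X$.

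\emph{Enforcing the three conditions.} Suppose $\mu_p \ge 2$ for some $p = v + w \notin X$. Consider the $\mu_p$ copies of $w$ sitting above $v$; their joint contribution $\mu_p v + \mu_p w$ can be rewritten, via the binary expansion of the cone coordinates of $\mu_p w$, as a sum of at most $\mu_p$ integer points of $\Ptop$, each a vertex or a scaled-ray point and hence in $X$ --- here one first peels off enough copies of $v$ and expands only boundedly many copies at a time, because the corner cone is truncated by the remote facets and the produced points must remain inside $\Ptop$. This trades a multiplicity-$\ge 2$ vector outside $X$ for vectors inside $X$, creates no new outside vectors, and therefore strictly decreases the potential; iterating establishes (1). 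With (1) in force, a counting argument of Eisenbrand--Shmonin type (Theorem~\ref{thm-supp}), executed so as not to disturb (1), caps both the number of distinct vectors of $\pp$ inside $X$ and the number outside $X$ by $2^{2d}$, giving (2) and (3); one does not simply inherit the sharper bound $2^d$ of Theorem~\ref{thm-supp} precisely because the re-expression is constrained to preserve (1), which couples the inside and outside parts.

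\emph{Main obstacle.} The real difficulty is that (1), (2) and (3) pull against one another: repairing a large multiplicity outside $X$ tends to manufacture fresh distinct vectors inside $X$, while shrinking the support inside $X$ (or outside) can resurrect large multiplicities outside $X$. This is exactly what forces the rigid design of $X$ --- it must hold a representative at \emph{every} dyadic scale in \emph{every} coordinate direction near \emph{every} vertex, so the binary-expansion exchange never reaches for a vector outside $X$ --- and it is also where the bookkeeping behind the explicit constants $2^{2d}$ must be done. I expect the genuinely delicate points to be (a) degenerate vertices together with the truncation of the corner cones, so that the binary-expansion step always yields legitimate points of $\Ptop$, and (b) checking that at the single extremal representation every possible violation of (1)--(3) really does admit a potential-decreasing exchange, i.e.\ that the moves tailored to the three conditions are compatible with one common potential.
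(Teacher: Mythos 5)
The paper does not prove this theorem; it is a citation of Goemans and Rothvo\ss, accompanied only by a one-sentence remark that $X$ is taken to be the vertices of a covering of $\Ptop$ by integral parallelepipeds. So there is no in-paper argument to compare against, and I can only measure your sketch against the published GR proof.

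Your sketch lands the right accounting for $|X|$ (vertices or simplices contribute $m^d$, simplicial refinement $d^{O(d)}$, dyadic scales $(\log\Delta)^d$) and correctly identifies the overall shape: a lexicographic potential plus local exchanges that migrate mass from high-multiplicity non-$X$ points into $X$. But the central exchange step, the one that is supposed to enforce~(1), is not sound as written. You claim that binary-expanding the cone coordinates of $\mu_p w$ rewrites $\mu_p p$ as a sum of at most $\mu_p$ integer points of $\Ptop$, each lying in $X$. Two problems. First, the cone coordinates of $w=p-v$ in $\pos(r_1,\dots,r_d)$ are in general rational with denominator $|\det(r_1,\dots,r_d)|$, not integral, so a binary expansion of them does not by itself yield a decomposition into lattice vectors; arranging for the right lattice structure at each scale is precisely what GR's covering by \emph{integral} parallelepipeds accomplishes, and your corner-cone-plus-scaled-ray-points version does not supply it. Second, even if the coordinates were integral, rewriting $\mu_p p$ as \emph{exactly} $\mu_p$ integer points of $\Ptop$ (rather than a conic combination with possibly larger total weight) is an integer averaging statement; the actual engine in GR is a parallelepiped/group-theoretic swap (two integer points inside a parallelepiped can be replaced by a vertex and a compensating integer point with the same sum) and that lemma is absent from your plan. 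Finally, your handling of~(2) and~(3) --- ``an Eisenbrand--Shmonin-type counting argument executed so as not to disturb~(1)'' --- names the obstacle you yourself flag as the main difficulty but does not resolve it. In short: right contour, but the key exchange lemma is not established, and without the parallelepiped structure it is not clear it can be.
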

The set $X$ is constructed in \cite{goemans2015} by covering $\Ptop$ by a set of integral parallelepipedes. The set $X$ consists of the vertices of the integral parallelepiped and can be computed in a preprocessing step. Note that by the construction of Goemans and Rothvo\ss, we have that $V_I \subset X$ as the set of vertices of some inner centrally symmetric polytopes is computed.

%A well-known relation between the integer cone and the fractional cone is the so called integrality gap. In our context we can define the integrality gap as follows:  Given some $b \in int.cone(\Ptop_I)$ let $\lambda \in \ZZgeq^{\Ptop_I}$ be the optimal integral solution  such that $\sum_{p \in \Ptop_I} \lambda_p p = b$ and $\nor{\lambda}_1$ is minimal and $\lambda^f \in \QQgeq^{\Ptop_I}$ be the optimal integral solution such that $\sum_{p \in \Ptop_I} \lambda_p p = b$ and $\nor{\lambda}_1$. The integrality gap of an instance $b \in int.cone(\Ptop_I)$ is defined by $int.gap(b) = \nor{lambda}_1 - \nor{\lambda^f}_1$ In the case of bin packing the integrality gap is bounded by $O(\log(d))$ for any knapsack polytope $\Ptop_I$ and $b \in int.cone(b)$. We discuss the relation between $Dist$ and the integrality gap in section 2.

\subsection{Our results:}
At first, we study the special case when $\Ptop$ is given by the convex hull of integral points $B_0, B_1, \ldots , B_d \in \ZZ^d$ i.e. $\Ptop$ is the simplex $S = Conv(B_0, B_1, \ldots , B_d)$. This is for example the case in the knapsack polytope when all items sizes are of the form $s_i = 1 / a_i$ for some $a_i \in \ZZ_{\geq 1}$. In this case, all vertices of the knapsack polytope are of the form $B_0 = (0, \ldots , 0)^T$ and $B_i = (0, \ldots ,0 , a_i , 0, \ldots ,0)^T$ for $1 \leq i \leq d$ and therefore integral. We prove the following theorem:
%In this case, every integral point $\pp \in int.cone(S \cap \ZZ^d)$ is in the $cone(B)$ and hence can be expressed by the fractional vector $\pp = \sum_{s \in B} \lambda^{f}_s s$ for some $\lambda^f \in \QQ^B$. Our theorem states that there exists an integral vector $\lambda \in int.cone(S \cap \ZZ^d)$ where most of the weight lies on the vertices $B_1, \ldots , B_d$. Hence the integral vector $\lambda$ is close to the fractional vector $\lambda^f$.
%\begin{mainthm*} \label{thm-main}
%Given simplex $S = Conv(B_1, \ldots , B_d)$ for $B_i \in \ZZ^{d}_{\geq 0}$ and let $B$ be the set of vertices $B = \{B_1, \ldots , B_d \}$. For any vector $\pp \in int.cone(S \cap \ZZ^d)$ with fractional vector $\lambda^f \in \QQ^{d}_{\geq 0}$ such that $\pp = \sum_{s \in B} \lambda^{f}_s s$,
%with integral vector $\lambda \in \ZZ^{S \cap \ZZ^d}_{\geq 0}$ such that $a = \sum_{x \in S \cap \ZZ^d} \lambda_x x$, then 
%there exists an integral vector $\lambda \in \ZZ^{S \cap \ZZ^d}_{\geq 0}$ such that $\pp = \sum_{s \in S \cap \ZZ^d} \lambda_s s$ and 
%\begin{enumerate}
%\item $\lambda_s \leq Int.Gap(\PP) \cdot 2^{2^{O(d)}} \forall s \in (S \cap \ZZ^d) \setminus B$
%\item $|\lambda_s - \lambda^{f}_{s}| \leq Int.Gap(\PP) \cdot 2^{2^{O(d)}} \forall s \in B$.
%\end{enumerate}
%\end{mainthm*}
\begin{theorem} \label{thm-main}
Let $S$ be the simplex defined by $S = Conv(B_0, B_1, \ldots , B_d)$ for $B_i \in \ZZ^{d}$ and let $B$ be the set of vertices $B = \{B_0, B_1, \ldots , B_d \}$. For any vector $\pp \in int.cone(S \cap \ZZ^d)$, there exists an integral vector $\lambda \in \ZZ^{S \cap \ZZ^d}_{\geq 0}$ with $\pp = \sum_{s \in S \cap \ZZ^d} \lambda_s s$ and 
\begin{enumerate}
\item $\lambda_s \leq  2^{2^{O(d)}} \quad \forall s \in (S \cap \ZZ^d) \setminus B$
\item $|supp(\lambda) \setminus B| \leq 2^{d}$
\end{enumerate}
\end{theorem}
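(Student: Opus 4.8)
The plan is to start from an arbitrary representation $\pp = \sum_{s \in S \cap \ZZ^d} \mu_s s$ with $\mu \in \ZZgeq^{S \cap \ZZ^d}$ and iteratively push weight onto the vertices $B$ while controlling the multiplicities and the support off $B$. Since $S$ is a simplex with vertices $B_0,\dots,B_d$, every point $s \in S \cap \ZZ^d$ has a \emph{unique} barycentric representation $s = \sum_{i=0}^d \beta_i(s) B_i$ with $\beta_i(s) \geq 0$, $\sum_i \beta_i(s) = 1$. The key idea is that a bundle of interior points can be "rounded" to vertices: if we have a multiset of interior points whose barycentric coordinates sum (componentwise) to an integral vector $(\ell_0,\dots,\ell_d)$, then that bundle can be replaced by $\ell_i$ copies of $B_i$, and $\pp$ is unchanged because $s = \sum_i \beta_i(s) B_i$ is an identity in $\reell^d$. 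So the task reduces to: given the multiplicity vector $\mu$ restricted to interior points, find a sub-multiset whose total barycentric coordinate vector is integral, peel it off (replacing it by vertices), and repeat until the remaining interior multiplicities are small.

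First I would handle the \emph{support bound}. Group the interior points used; if more than $2^d$ distinct interior points appear, I would argue — exactly as in the Eisenbrand--Shmonin argument of Theorem~\ref{thm-supp}, applied to the points viewed via their barycentric coordinates in $\ZZ^{d+1}$ scaled by a common denominator, together with the extra coordinate tracking $\nor{\gamma}_1$ — that one can re-solve for a sparse nonnegative integer combination with support at most $2^d$ among the interior points, while the vertex part absorbs the difference. Concretely, consider the lattice generated by the (scaled, homogenized) interior points; an integer point in the cone they generate lies in the cone generated by a sub-family of size $\le 2^d$ by the pigeonhole/parallelepiped argument of \cite{eisenbrand2006}. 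This gives item 2.

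For the \emph{multiplicity bound} (item 1), once the support off $B$ is fixed at $k \le 2^d$ interior points $s^{(1)},\dots,s^{(k)}$, I would bound the needed multiplicities $\gamma_j$. Let $N$ be a common denominator of all barycentric coordinates of the $s^{(j)}$ (so $N \le$ something like $(d\,\Delta')^{O(d)}$ where $\Delta'$ bounds the vertex coordinates, via Cramer's rule). Any integer combination $\sum_j \gamma_j s^{(j)}$ has integral barycentric total iff $\sum_j \gamma_j \beta(s^{(j)}) \in \ZZ^{d+1}$, i.e. iff $\gamma \bmod N$ lies in a certain sublattice of $(\ZZ/N\ZZ)^{k}$; a coset representative of minimal $\ell_1$-norm has norm at most $kN \le 2^d N$. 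Reducing each $\gamma_j$ modulo that period and pushing the removed chunks onto vertices leaves $\gamma_j \le 2^d N = 2^{2^{O(d)}}$. The one subtlety is circular dependence between the two reductions — shrinking multiplicities might in principle reintroduce points, and sparsifying might inflate multiplicities — so I would order the operations (first make the off-$B$ multiplicities bounded by the period argument, \emph{then} sparsify the bounded-size combination, observing sparsification only decreases multiplicities) and check that feasibility $\gamma \in \ZZgeq$, $\lambda \in \ZZgeq$ is maintained throughout.

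The main obstacle I expect is making the denominator bound $N$ explicit and of the right order: $N$ must be singly exponential in $d$ for the final $\gamma_j \le 2^{2^{O(d)}}$ bound to hold, and this requires carefully bounding the determinants arising from the simplex $B_0,\dots,B_d$ and from the sublattice defining the integrality condition. A second, more technical point is the Eisenbrand--Shmonin-style sparsification in the barycentric/homogenized coordinates: one must verify that the "leftover" after extracting the sparse sub-combination is again expressible on the vertex set $B$ with nonnegative integer coefficients, which uses precisely that integral barycentric totals convert exactly to vertex copies. I would present these as two lemmas (a denominator bound, and a "round a bundle to vertices" lemma) and then assemble Theorem~\ref{thm-main} from them.
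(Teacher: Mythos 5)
Your approach has a genuine gap that prevents it from delivering the bound the theorem actually claims, and the gap is exactly at the point you flag as ``the main obstacle.''

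Your reduction scheme requires that some bundle of interior points (or one interior point taken with multiplicity) have an \emph{integral} total barycentric vector, so that it can be swapped for vertices. The period of that integrality condition is governed by the common denominator $N$ of the barycentric coordinates, which for a simplex with vertices $B_i\in\ZZ^d$ is governed by $\det(B_1-B_0,\dots,B_d-B_0)$; this can be as large as $d!\,(\Delta')^d$ where $\Delta'$ is the magnitude of the vertex coordinates. Reducing the multiplicities modulo this period therefore yields a bound of order $N$, hence one that \emph{depends on $\Delta'$}. But the whole point of Theorem~\ref{thm-main} (and what distinguishes it from the Goemans--Rothvo\ss\ bound, which has a $\log\Delta$ dependence) is that the multiplicity bound $2^{2^{O(d)}}$ is an absolute function of $d$, independent of the size of the entries describing the polytope. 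There is no way to make the denominator $N$, or the index of the sublattice $L=\{\gamma\mid\sum_j\gamma_j\beta(s^{(j)})\in\ZZ^{d+1}\}$, small independently of $\Delta'$, so the period argument cannot give the theorem as stated. Your phrase ``$(d\,\Delta')^{O(d)}$, singly exponential in $d$'' is not of the form $2^{2^{O(d)}}$ unless $\Delta'$ is itself a priori bounded, which it is not.

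The paper avoids this by asking for something strictly weaker than integrality. For a single interior point $\gamma=Bx$ with $\sum_i x_i=1$, it shows there is a multiplicity $1<K\le 2^{2^{O(d)}}$ such that the \emph{fractional parts} of the barycentric coordinates of $K\gamma$ again sum to exactly $1$ (Theorem~\ref{thm-main1}), i.e.\ the fractional residue lies in $S$; this suffices to write $K\gamma=\delta+\sum_i\Lambda_iB_i$ with $\delta\in S\cap\ZZ^d$ and $\Lambda\in\ZZgeq^{d+1}$ (Lemma~\ref{lem-konfig}), reducing $\lambda_\gamma$ and the total off-vertex weight. Crucially, the pigeonhole that produces $K$ does \emph{not} run over all $N$ residues: the coordinates are sorted, a cut-off index $\bar d$ is chosen so that $x_{\bar d}$ is not too small and the ceilings $p_i=\lceil 1/x_i\rceil$ for $i\le\bar d$ have product $\XX(\bar d)\le 2^{2^{O(d)}}$, and the pigeonhole is applied only to the ``jump phases'' of those $\bar d+1$ dominant coordinates, while a case analysis on $Level((Z-1)x)$ shows the ignored tiny coordinates cannot spoil $Level(\cdot)=1$. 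That restriction to dominant coordinates is the mechanism that decouples the bound from $\Delta'$; your proposal has no analogue of it. Your ``round a bundle to vertices'' observation and the Eisenbrand--Shmonin sparsification step are fine and both appear in the paper's proof, but the core estimate needs the relaxed Level-$1$ condition and the dominant-coordinate pigeonhole, not a full-period denominator reduction.
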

This theorem shows that in the case that the integer polytope $\Ptop_I$ is a simplex, the vertex distance $Dist(b)$ can be bounded by a term $2^{2^{O(d)}}$ for any $b \in int.cone(\Ptop \cap \ZZ^d)$.
In Section \ref{lower_bound}, we complement this result by giving a matching lower bound for $Dist(b)$. We prove that the double exponential bound for $Dist(b)$ is tight, even in the special case of bin packing, where the simplex $S$ is a specific knapsack polytope. The lower bound is based on the sylvester sequence $S_i$ which is inductively defined by $S_1 = 2$ and $S_{i+1} = (\prod_{j=1}^i S_j) + 1$ \cite{Graham1994}.
\begin{restatable}{theorem}{lowerbound}\label{thm-lower}
    There exists a bin packing instance with sizes $\frac{1}{a_1}, \ldots , \frac{1}{a_d}$ for $a_i \in \ZZ_{\geq 1}$ and multiplicities $b \in \ZZgeq^d$ corresponding to a point $b \in int.cone(\Ptop \cap \ZZ^d)$, where $\Ptop$ is the knapsack polytope such that 
    \begin{align*}
        Dist(b) \geq S_d-2 = 2^{2^{\Omega(d)}}
    \end{align*}
\end{restatable}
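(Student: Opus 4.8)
The plan is to pin down, for each $d$, one explicit bin packing instance $(a,b)$ and then to show that every admissible representation of $b$ must place at least $S_d-2$ units of weight outside $V_I$. I would take the $a_i$ to be built from the Sylvester numbers $S_1,\dots,S_d$ (for instance $a_i=S_i$, up to a common rescaling that makes everything integral), so that the knapsack polytope $\Ptop$ is the simplex with integral vertices $0$ and $S_ie_i$, whence $V_I=\{0,S_1e_1,\dots,S_de_d\}$. Two arithmetic features of the Sylvester sequence drive the whole argument, and I would isolate them first: (i) the $S_i$ are pairwise coprime, since $S_{i+1}\equiv 1\pmod{S_j}$ for $j\le i$ — this is what forces the only \emph{exactly filled} configurations to be the pure vertices $S_ie_i$, so that every other $p\in\Ptop\cap\ZZ^d$ carries strictly positive slack; and (ii) the partial products telescope, $\prod_{j<i}S_j=S_i-1$, equivalently $\sum_{i\le k}\frac1{S_i}=1-\frac1{S_{k+1}-1}$ — this is what collapses the multiplicative cascade below to the closed form $\prod_{j<d}S_j-1=S_d-2$.

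For the target I would choose $b$ so that subtracting \emph{any} nonnegative integer combination $\sum_i\lambda_{S_ie_i}\,S_ie_i$ of the vertices leaves a residue that can only be closed by a long chain of non-vertex configurations: closing the first coordinate modulo $S_1$ forces configurations touching coordinate $2$, closing coordinate $2$ modulo $S_2$ (again using coprimality) forces configurations touching coordinate $3$, and so on down the sequence, the cost of passing each level being multiplied by roughly $S_i$. I would make this precise by induction on $d$: from any representation of $b$, reduce the first coordinate modulo $S_1$ and collect the induced weights on the remaining coordinates to obtain a representation of a derived $(d-1)$-dimensional instance (a $\Theta(S_1)$-fold dilation of the $(d-1)$-dimensional target), apply the inductive bound to it, and combine; the telescoping product $\prod_{j<d}S_j=S_d-1$ of the per-level costs produces the stated $S_d-2$. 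A cleaner packaging of the same idea would be to exhibit a linear functional $\Phi$ on $\ZZ^d$ — built from the residues $x_i\bmod S_i$ weighted by the partial products $\prod_{j<i}S_j$ — that vanishes on every $v\in V_I$, is $O(1)$ on every non-vertex configuration in $\Ptop\cap\ZZ^d$, yet satisfies $\Phi(b)\ge S_d-2$; then any representation $b=\sum_v\lambda_v v+\sum_p\gamma_p p$ gives $S_d-2\le\Phi(b)=\sum_p\gamma_p\Phi(p)\le\|\gamma\|_1\cdot O(1)$, and one rescales $b$ to absorb the constant. That $b\in int.cone(\Ptop\cap\ZZ^d)$ is automatic because each $e_i$ is itself a configuration, and the matching $Dist(b)\le S_d-2$ is then a direct check that the bound is attained.

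I expect the lower bound — that \emph{no} representation does better — to be the main obstacle, precisely because the vertices are ``free'': a clever representation may pour arbitrarily large multiplicities onto the $S_ie_i$ and a priori exploit cancellations between the vertex part and the non-vertex part, so the estimate cannot be a crude volume, $\ell_1$-norm, or ``one bin per size suffices'' argument. It must use the arithmetic of the $S_i$ — pairwise coprimality to decouple coordinate $i$ modulo $S_i$, and the telescoping partial products to show the forced chain has length $\prod_{j<d}S_j-1$. This is also exactly where the double-exponential growth $S_d=2^{2^{\Theta(d)}}$, and hence the $2^{2^{\Omega(d)}}$ in the statement, enters.
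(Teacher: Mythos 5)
Your overall impulse---use Sylvester numbers, exploit pairwise coprimality and the telescoping $\sum_i 1/S_i=1-1/(S_{d+1}-1)$, and aim at one ``bad'' configuration $g$ whose repeated use is forced---matches the paper's spirit, but the concrete plan has two genuine gaps. The explicit choice $a_i=S_i$ does not work. With $a_i=S_i$ the distinguished configuration (the paper's ``full generator'', the element of the quotient group whose size has maximal fractional part $(\det B-1)/\det B$) is exactly $g=(1,\ldots,1)$, whose barycentric coordinate in direction $B_1=S_1e_1$ is $x_1=1/S_1=1/2$ with \emph{equality}. Consequently already $2g=(2,\ldots,2)=B_1+(0,2,\ldots,2)$, and $(0,2,\ldots,2)$ is a valid configuration since $\sum_{i\ge 2}2/S_i=1-2/(S_{d+1}-1)<1$; more generally, peeling off $\lfloor K/S_i\rfloor$ copies of each vertex $B_i$ leaves the residue $(K\bmod S_1,\ldots,K\bmod S_d)$, of total size about $d-1$, which can be packed with only $O(d)$ non-vertex configurations. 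So $Dist(Kg)$ stays polynomial in $d$ for this $a$, far from $S_d-2$. The paper escapes this precisely by \emph{not} taking $a_i=S_i$: the $a_i$ are built inductively via the Chinese Remainder Theorem so that the full generator's coordinates satisfy the \emph{strict} inequality $x_i<1/S_i$ (the ``long-run property''), and it is this strictness that delays the first multiple of $g$ that is again a configuration until beyond $K=S_d-2$ (the jump/level counting in Lemma~\ref{lem-longrun}).

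The ``linear functional $\Phi$'' shortcut also cannot be carried out as stated. An additive $\Phi$ vanishing on all of $V_I=\{0,S_1e_1,\ldots,S_de_d\}$ vanishes identically because these span $\mathbb{Q}^d$; and a map built from residues $x_i\bmod S_i$ is additive only modulo $\det B=\prod_iS_i$, so the inequality $\Phi(b)=\sum_p\gamma_p\Phi(p)\le\|\gamma\|_1\cdot O(1)$ is broken by carries---moreover $\Phi(g)$ on the critical configuration is of order $\det B$, not $O(1)$. The paper's actual lower bound is not algebraic of this form: Lemma~\ref{lem-unique} is a direct bin-packing argument. The total free space of $K$ bins packing $Kg$ is exactly $K/\det B$; any configuration $c\neq g$ appearing in such a packing equals $K'g$ in the cyclic group for some $K'$, and since $2g,\ldots,Kg$ are shown \emph{not} to be configurations one has $K'>K$, so $c$ alone has free space $K'/\det B>K/\det B$---a contradiction. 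Your sketch has no replacement for this uniqueness lemma, which is what rules out the cancellations between the vertex part and the non-vertex part that you yourself flag as the main obstacle, and your ``reduce modulo $S_1$ and recurse'' step is not well defined on a mixed representation $\sum_v\lambda_vv+\sum_p\gamma_pp$ because reducing a coordinate can change which points are configurations.
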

%We present a construction of bin packing instances of itemsizes $s_i = \frac{1}{a_i}$ for some $a_i \in \ZZ_{\geq 1}$ and multiplicities $b_1, \ldots , b_d$ such that $Dist(b) = \Omega(2^{2^{O(d)}})$. 
%Note that in the specific case when item sizes are of the form $s_i = \frac{1}{a_i}$, the integer knapsack polytope is a simplex as all edges $(0, \ldots 0, 1/s_i, 0, \ldots , 0)^T$ of the polytope $\Ptop$ are integral. 
Furthermore, in the end of Section 4, we discuss the difficulty of finding instances with large vertex distance and we show a connection to the modified roundup property (see \cite{Scheithauer199793}).

As a direct consequence of our main Theorem \ref{thm-main}, we obtain a structure theorem that is similar to the one given by Goemans and Rotho\ss \cite{goemans2015} but uses a different set $X \subset \Ptop$ of distinguished points. Instead of the set of vertices of integral parallelepipedes, our theorem uses the set of vertices $V_I$ of the integer polytope.
\begin{theorem} \label{thm-structure}
Let $P = \{ x \in \mathbb{R}^{d} \mid Ax \leq c \}$ be a polytope with $A \in \ZZ^{m \times d}, c \in \ZZ^{d}_{\geq 0}$ and let $V_I \subseteq \Ptop \cap \ZZ^{d}$ be the set of vertices of the integer polytope $P_I$ with $Conv(V_I) \cap \ZZ^d = \Ptop \cap \ZZ^d$. Then for any vector $\pp \in int.cone(\Ptop \cap \ZZ^d)$, there exists an integral vector $\lambda \in \ZZ^{\Ptop \cap \ZZ^d}_{\geq 0}$ such that $\pp = \sum_{p \in \Ptop \cap \ZZ^d} \lambda_p p$ and 
\begin{enumerate}
\item $\lambda_p \leq 2^{2^{O(d)}} \quad \forall p \in (\Ptop \cap \ZZ^d) \setminus V_I$
\item $|supp(\lambda) \cap V_I| \leq d \cdot 2^d$
\item $|supp(\lambda) \setminus V_I| \leq 2^{2d}$
\end{enumerate}
\end{theorem}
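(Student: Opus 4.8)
The plan is to deduce Theorem~\ref{thm-structure} from the simplex case, Theorem~\ref{thm-main}, by writing $\pp$ as a sum of boundedly many pieces, each of which lies inside a simplex spanned by vertices of $P_I$ together with the origin. Write $P_I=Conv(V_I)$; by hypothesis its integer points are exactly $P\cap\ZZ^d$, and since $P$ is a polytope $P_I$ is bounded. First I would apply Theorem~\ref{thm-supp} to fix an integral representation $\pp=\sum_{p\in T}\lambda_p p$ with $T\subseteq P\cap\ZZ^d$, $|T|\le 2^d$, and (without loss) $0\notin T$. The vertices in $T\cap V_I$ are kept unchanged; all the work goes into re-expanding each non-vertex point $p\in T\setminus V_I$.

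Next I would fix such a $p$ (so $p\ne 0$). Because $c\ge 0$ we have $0\in P\cap\ZZ^d\subseteq P_I$, so $0,p\in P_I$, and as $P_I$ is bounded the ray $\{tp:t\ge 0\}$ leaves $P_I$ at a boundary point $t_p p$ with $t_p\ge 1$, which lies on a proper face $F_p$ of $P_I$. Applying Carath\'eodory inside the affine hull of $F_p$ expresses $t_p p$ as a convex combination of at most $\dim F_p+1\le d$ vertices $v^p_1,\dots,v^p_{k_p}\in V_I$ of $F_p$ (vertices of a face of $P_I$ are vertices of $P_I$), hence $p\in S_p:=Conv(0,v^p_1,\dots,v^p_{k_p})\subseteq P_I$, so $S_p\cap\ZZ^d\subseteq P_I\cap\ZZ^d=P\cap\ZZ^d$. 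Padding its vertex list with copies of $0$ to length $d+1$, $S_p$ is a simplex of the form required by Theorem~\ref{thm-main}. Since $\lambda_p p\in int.cone(S_p\cap\ZZ^d)$ ($\lambda_p$ copies of $p$), that theorem yields $\lambda_p p=\sum_{s\in S_p\cap\ZZ^d}\nu^p_s s$ with $\nu^p_s\le 2^{2^{O(d)}}$ whenever $s$ is not a vertex of $S_p$, and with at most $2^d$ such non-vertex points in $supp(\nu^p)$. Discarding the irrelevant coefficient on $0$ and summing, $\pp=\sum_{p\in T\cap V_I}\lambda_p p+\sum_{p\in T\setminus V_I}\sum_s\nu^p_s s$; collecting equal points defines the vector $\lambda$.

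For the three bounds the key observation is that a vertex $v\in V_I$ that lies inside some $S_p$ is automatically a vertex of $S_p$: otherwise $v$ would be a proper convex combination of the remaining vertices of $S_p$, all of which lie in $P_I$, contradicting extremality of $v$ in $P_I$. Hence every point $s$ produced by Theorem~\ref{thm-main} that is not among $0,v^p_1,\dots,v^p_{k_p}$ lies outside $V_I$. From this: (1) for $s\in(P\cap\ZZ^d)\setminus V_I$ at most $|T|\le 2^d$ of the $\nu^p_s$ are nonzero, each $\le 2^{2^{O(d)}}$, so $\lambda_s\le 2^d\cdot 2^{2^{O(d)}}=2^{2^{O(d)}}$; (2) $|supp(\lambda)\cap V_I|\le|T\cap V_I|+\sum_{p\in T\setminus V_I}k_p\le|T\cap V_I|+d\,|T\setminus V_I|\le d\cdot 2^d$; (3) $|supp(\lambda)\setminus V_I|\le\sum_{p\in T\setminus V_I}2^d\le 2^{2d}$.

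The only genuinely geometric step is the construction of the enclosing simplices $S_p$: their vertices must be chosen inside $V_I$ (this is exactly what makes the extremality argument work, keeping the freshly created points out of $V_I$), and the origin must be included as an extra vertex so that Carath\'eodory costs only $d$ rather than $d+1$ vertices of $V_I$ — which is precisely where the hypothesis $c\ge 0$ enters, and what lets the count in (2) hit $d\cdot 2^d$ rather than $(d+1)2^d$. Everything else — the initial support reduction via Theorem~\ref{thm-supp} and the double-exponential arithmetic $2^d\cdot 2^{2^{O(d)}}=2^{2^{O(d)}}$ — is routine bookkeeping.
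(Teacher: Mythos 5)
Your proof is correct and follows the same route as the paper's: reduce to a support of size at most $2^d$ via Theorem~\ref{thm-supp}, cover each surviving non-vertex point by a simplex with vertices in $V_I$ via Carath\'eodory, and apply Theorem~\ref{thm-main} simplex by simplex. Your write-up is more careful than the paper's brief sketch — in particular, you exploit $c \ge 0$ to put the origin in every simplex so that only $d$ (not $d+1$) vertices of $V_I$ are needed per simplex, which is what actually delivers the bound $d \cdot 2^d$ in property~(2), and you verify via extremality of vertices of $P_I$ that the fresh non-vertex points produced inside each simplex indeed lie outside $V_I$, a point the paper leaves implicit.
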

This theorem finally shows that for arbitrary polytopes $\Ptop$ and any $b \in int.cone(\Ptop \cap \ZZ^d)$, the vertex distance $Dist(b)$ is bounded by $2^{2^{O(d)}}$ and hence independent of the number of inequalities $m$ and the largest entry $\Delta$ in the description of $\Ptop$.

Recall that a parameterized problem with parameter $p$ and input $I$ is called fixed parameter tractable (fpt) if there exists an algorithm with running time $O(f(p) \cdot enc(I)^{O(1)})$ for some computable function $f$ of $p$ which is independent of $I$ and $enc(I)$ is the encoding length of instance $I$. We refer to the book of Downey and Fellows \cite{Downey99} for more details on parameterized complexity.
As a consequence of our structure theorem, we present in Section 2 an algorithm for the bin packing problem with a running time of $|V_I|^{2^{O(d)}} \cdot \log(\Delta)^{O(1)}$, where $\Delta$ is the maximum over all multiplicities $b$ and denominators in $s$. Since $|V_I| \geq d+1$ this is an fpt-algorithm parameterized by the number of vertices of the integer knapsack polytope $V_I$.
\begin{theorem} \label{thm-fpt}
    The bin packing problem can be solved in fpt-time parameterized by the number of vertices $V_I$ of the integer knapsack polytope.
\end{theorem}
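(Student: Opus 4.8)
The plan is to combine the structure theorem for the knapsack polytope (Theorem~\ref{thm-structure}) with the fact that integer programs with a bounded number of variables can be solved efficiently (Lenstra, Kannan). Fix the bin packing instance $(s,b)$ and let $\Ptop=\{x\in\ZZgeq^{d}\mid s^{T}x\le 1\}$ be the knapsack polytope. A packing is exactly a vector $\lambda\in\ZZgeq^{\Ptop\cap\ZZ^{d}}$ with $\sum_{p}\lambda_{p}p=b$, and it uses $\nor{\lambda}_{1}$ bins, so $\OPT=\min\nor{\lambda}_{1}$. By Theorem~\ref{thm-structure} there is an \emph{optimal} such $\lambda$, which I will write as
\[
b=\sum_{v\in V_I}\lambda_{v}\, v+\sum_{p\in(\Ptop\cap\ZZ^{d})\setminus V_I}\gamma_{p}\, p ,
\]
with $|supp(\lambda)\cap V_I|\le d\cdot 2^{d}$, $|supp(\gamma)|\le 2^{2d}$ and $\gamma_{p}\le 2^{2^{O(d)}}$ for all $p$. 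The essential point here is that this representation can be taken to be of \emph{minimum} cost $\nor{\lambda}_{1}+\nor{\gamma}_{1}=\OPT$; I expect this to follow from the fact that the modifications used to prove Theorem~\ref{thm-main} preserve the total number of parts (replacing a configuration by a barycentric combination of, or swapping it within, the enclosing integral simplex does not change the $\ell_{1}$-norm), and this needs to be made explicit.

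The algorithm would first compute the vertex set $V_I$ of the integer knapsack polytope in a preprocessing step, and then enumerate guesses of two ingredients of the structured optimum: (i) a subset $W\subseteq V_I$ with $|W|\le d\cdot 2^{d}$ meant to contain the vertices of unbounded multiplicity, for which there are at most $|V_I|^{d\cdot 2^{d}}=|V_I|^{2^{O(d)}}$ choices; and (ii) the tuple of nonzero entries of $\gamma$, i.e.\ $(m_{1},\dots,m_{k})$ with $k\le 2^{2d}$ and $m_{i}\in\{1,\dots,2^{2^{O(d)}}\}$, for which there are at most $2^{2^{O(d)}}$ choices. For a fixed guess $(W,m_{1},\dots,m_{k})$ I would solve the integer program
\[
\sum_{v\in W}\mu_{v}\, v+\sum_{i=1}^{k}m_{i}\, c^{(i)}=b,\qquad s^{T}c^{(i)}\le 1\ \ (1\le i\le k),\qquad \mu_{v}\in\ZZgeq,\ c^{(i)}\in\ZZgeq^{d},
\]
minimizing $\sum_{v\in W}\mu_{v}$ (so that the number of bins $\sum_{v}\mu_{v}+\sum_{i}m_{i}$ is minimized, the second sum being fixed by the guess), treating an infeasible guess as contributing $+\infty$, and output the smallest such value over all guesses. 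The crucial observation is that this program is \emph{linear}: the high-multiplicity part uses only the \emph{known} vectors $v\in W\subseteq V_I$, so each term $\mu_{v}v$ is linear in $\mu_{v}$, and the correction part has \emph{known, bounded} multiplicities $m_{i}$, so each term $m_{i}c^{(i)}$ is linear in the unknown configuration $c^{(i)}$. The number of variables is $|W|+k\cdot d\le d\cdot 2^{d}+d\cdot 2^{2d}=2^{O(d)}$, all coefficients are bounded by $\max\{\Delta,2^{2^{O(d)}}\}$, so by a fixed-dimension integer programming algorithm together with binary search on the objective, each guess is processed in time $2^{2^{O(d)}}\cdot\log(\Delta)^{O(1)}$.

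For correctness I would argue both inequalities. Any feasible solution of the program for any guess yields a genuine packing of $b$ into $\sum_{v}\mu_{v}+\sum_{i}m_{i}$ bins (one bin for each of the $\mu_{v}$ copies of each $v\in W$ and one for each of the $m_{i}$ copies of each $c^{(i)}$; feasibility of the bins uses $s^{T}v\le 1$ and $s^{T}c^{(i)}\le 1$, correctness of the contents uses the equality constraint), so the algorithm never returns a value below $\OPT$. Conversely, the guess $W=supp(\lambda)\cap V_I$ together with $(m_{i})=$ the nonzero entries of $\gamma$ coming from the structured optimum admits the feasible solution $\mu:=\lambda|_{W}$, $c^{(i)}:=$ the corresponding configurations, of objective value $\nor{\lambda}_{1}+\nor{\gamma}_{1}=\OPT$, so the output is at most $\OPT$. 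Multiplying the numbers of guesses by the per-guess time gives $|V_I|^{2^{O(d)}}\cdot 2^{2^{O(d)}}\cdot\log(\Delta)^{O(1)}=|V_I|^{2^{O(d)}}\cdot\log(\Delta)^{O(1)}$, where the $2^{2^{O(d)}}$ factors are absorbed using $|V_I|\ge 2$; since $|V_I|\ge d+1$ we also have $2^{O(d)}\le 2^{O(|V_I|)}$, so the running time has the form $f(|V_I|)\cdot\log(\Delta)^{O(1)}$ with $f$ depending only on $|V_I|$, i.e.\ the algorithm is fixed parameter tractable when parameterized by $|V_I|$.

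The main obstacle I anticipate is twofold. First, one must make rigorous that Theorem~\ref{thm-structure} yields a \emph{minimum-cost} structured representation rather than merely an arbitrary one; without this cost-preservation the enumeration above could overlook the optimum. Second, and conceptually central, one must exploit that the unbounded-multiplicity part of the structured solution is supported on the \emph{explicitly available} set $V_I$, and not on a $\poly(m,\log\Delta)$-sized auxiliary set as in the Goemans--Rothvo\ss{} construction: this is exactly what keeps the integer program linear and of dimension $2^{O(d)}$, and what trades the $(\log\Delta)^{2^{O(d)}}$ dependence of the earlier algorithm for the $|V_I|^{2^{O(d)}}\cdot\log(\Delta)^{O(1)}$ bound. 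A secondary point to settle is the cost of producing $V_I$ in the preprocessing step, which must either be shown to fit within the stated budget or else $V_I$ must be regarded as part of the input.
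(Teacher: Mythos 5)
Your proposal is correct in substance and follows essentially the same line as the paper: compute $V_I$, guess the support of the structured solution, and solve one fixed-dimension ILP per guess via Lenstra/Kannan. Two technical points are worth comparing.

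First, you flag the issue of whether Theorem~\ref{thm-structure} yields a \emph{minimum-cost} structured representation, and you try to resolve it by arguing the modification steps preserve $\nor{\lambda}_1$. This works for the parallelepiped-reduction step of Theorem~\ref{thm-main} (since $\gamma$ and $\delta$ lie on the affine hyperplane $\sum_i x_i=1$ of $S$, one has $\sum_i \Lambda_i = K-1$), but it does not obviously carry over to the Eisenbrand--Shmonin support-reduction step, which only preserves $\sum_p \lambda_p p$, not $\nor{\lambda}_1$. The paper sidesteps the entire issue by a lifting: it sets $\Ptop = \{\begin{pmatrix}x\\1\end{pmatrix}\in\QQgeq^{d+1}\mid s^Tx\le 1\}$ and $\mathcal{Q}=\{b\}\times[0,a]$, so every configuration has an extra coordinate equal to $1$, $\nor{\lambda}_1$ becomes the last coordinate of the target $y$, and the structure theorem (a pure existence statement for a fixed $y$) automatically preserves the bin count. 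Combined with binary search on $a$, this removes any need to argue $\ell_1$-preservation. You should adopt this lifting; it is the clean fix for exactly the gap you anticipated, and it also makes the $\ell_1$-invariance of the Eisenbrand--Shmonin step automatic.

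Second, a genuine (minor) difference: to make the ILP linear, you guess the multiplicities $m_1,\dots,m_k$ of the non-vertex configurations explicitly, at an extra cost of $2^{2^{O(d)}}$ guesses (absorbed since $|V_I|\ge 2$). The paper instead keeps the multiplicities as unknowns by encoding them in binary via variables $x^{(j)}_i$, so that the single ILP per guess of $supp(\lambda)\cap V_I$ already has only $2^{O(d)}$ variables. Both give the stated running time $|V_I|^{2^{O(d)}}\cdot\log(\Delta)^{O(1)}$; your explicit enumeration is arguably simpler to state rigorously, since the binary-encoded ILP needs care to enforce that the bits $x^{(j)}_i$ for a fixed $j$ all correspond to the same configuration, whereas your $c^{(i)}$ are unconstrained configuration variables with fixed known coefficients. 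Finally, the paper does devote a short subsection to computing $V_I$ in time $|V_I|\cdot d^{O(d)}\cdot(m\log\Delta)^{O(1)}$, which closes the preprocessing point you raised.
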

This theorem shows that the bin packing problem can be solved efficiently when the underlying knapsack polytope has an easy structure i.e. has not too many vertices. However, since the total number of vertices is bounded by $O(\log \Delta)^d$ \cite{hayes} the algorithm has a worst case running time of $(\log \Delta)^{2^{O(d)}}$, which is identical to the running time of the algorithm by Goemans and Rothvo\ss~\cite{goemans2015}.

\subsection{Related results}
The bin packing problem is one of the most fundamental combinatorial problems in computer science. It has been very well studied in the literature, mostly in the context of approximation. A major contribution was given by Karmarkar and Karp \cite{karmarkar1982}. They presented a polynomial time approximation algorithm with a guarantee of $OPT + O(\log^2 (OPT))$. Very recently, this famous result by Karmarkar and Karp was improved by Rothvo\ss~\cite{Rothvoss2013} who presented an algorithm with guarantee $OPT + O(\log OPT \log \log (OPT))$ and later by Hoberg and Rothvo\ss~\cite{rothvoss2015} who improved the guarantee further to $OPT + O(\log (OPT))$. Concerning the bin packing problem when the number of different item sizes $d$ is constant, Jansen and Solis-Oba~\cite{jansen2011} presented an approximation algorithm with a guarantee of $OPT+1$. Their algorithm has a running time of $2^{2^{O(d)}} \cdot enc(I)^{O(1)}$ and therefore is fpt in the number of different item sizes $d$. Finally, as mentioned above, Goemans and Rothvo\ss~\cite{goemans2015} presented their polynomial time algorithm for the bin packing problem with running time $(\log \Delta)^{2^{O(d)}}$. 

In a very recent work, Onn \cite{Onn15} discussed the problem of finding a vector $\lambda \in \ZZgeq^{\Ptop \cap \ZZ^d}$ with $b = \sum_{p \in \Ptop \cap \ZZ^d} \lambda_p p$ for given $b \in int.cone(\Ptop \cap \ZZ^d)$. He presented an algorithm for the case that the polytope $\Ptop = \{ x \in \QQ^d \mid Ax \leq c\}$ has a specific shape. In the case that the matrix $A$ is totally unimodular he gave a polynomial time algorithm even in the case that the dimension $d$ is variable.

\section{Proof of the main theorem}
Given simplex $S = Conv(B_0, B_1, \ldots , B_d)$ for $B_i \in \ZZ^{d}_{\geq 0}$ and a vector $\pp \in int.cone(S \cap \ZZ^d)$. We consider integral points in $cone(B)$ generated by the vertices $B = \{B_0, B_1, \ldots , B_d\} = V_I$ of the simplex $S$. For convenience, we denote by $B$ also the matrix with columns $B_0, B_1 , \ldots , B_d$. As our main subject of investigation, we consider the parallelepiped
\begin{align*}
    \PP = \{ x_0 B_0 + x_1 B_1 + \ldots + x_d B_d \mid x_i \in [0,1] \}.
\end{align*}
By definition of $S = \{ x_0 B_0 + x_1 B_1 + \ldots + x_d B_d \mid x_i \in [0,1], \sum_i x_i = 1 \}$ we have that $S \subset \Pi$.
Furthermore, one can easily see that $cone(B)$ can be partitioned into parallelepipedes $\Pi$ (see figure \ref{fig-cone} with $B_0 = 0$), as each point $\pp = x_0 B_0 + x_1 B_1 + \ldots + x_d B_d \in cone(B) \cap \ZZ^d$ for some $x \in \QQgeq^d$ can be written as the sum of an integral part $B x^{int} = \lfloor x_0 \rfloor B_0 + \ldots + \lfloor x_d \rfloor B_d$ and a fractional part $[ B x ] = \{ x_0 \} B_0 + \ldots + \{ x_d \} B_d \in \PP$ (we denote the fractional part of some $v \in \QQ$ by $\{ v \} = v - \lfloor v \rfloor$ and for some vector $x \in \QQ^d$ we denote by $\{ x \}$ the vector $(\{ x_0 \}, \ldots , \{ x_d\})^T $.
\begin{figure}
    \centering
\begin{tikzpicture}[scale=0.88]
        \tikzset{
        %Define standard arrow tip
        >=stealth',
        %Define style for different line styles
        help lines/.style={dashed, thick},
        axis/.style={<->},
        important line/.style={thick},
        connection/.style={thick, dotted},
        }
    %Draw axis
    \coordinate (y) at (0,5.2);
    \coordinate (x) at (7.2,0);
    \coordinate (b1) at (0.5,1.5);
    \coordinate (b14) at (2,6);
    \coordinate (b12) at (1,3);
    \coordinate (b2) at (2.5,.5);
    \coordinate (b24) at (10,2);
    \coordinate (b1b2) at (3,2);
    \coordinate (g1) at (1.2,0.85);
    \coordinate (g2) at (2.4,1.7);
    \coordinate (g3) at (3.6,2.55);
    \coordinate (b22) at (5,1);
    \coordinate (b13) at (1.5,4.5);
    \coordinate (b23) at (7.5,1.5);
    \coordinate (b12b2) at (3.5,3.5);
    \coordinate (b1b22) at (5.5,2.5);

    \draw[axis] (y) -- (0,0) --  (x);
    %Important coordinates. These are used in both figures and can be
    %moved to a seperate settings files
    %% These coordinates deside where boxes start on the y axis
    %\draw[]  (3.2,0) -- (0, 2.4);
    %\draw[] (0,0) -- b1;

    \draw[dotted] (0,0) -- (g1);
    \draw[dotted] (g2) -- (g1);
    \draw[dotted] (g2) -- (g3);

    \draw[] (0,0) -- (b1);
    \draw[] (0,0) -- (b2);
    \draw[] (b1) -- (b1b2);
    \draw[] (b2) -- (b1b2);
    \draw[] (b1b22) -- (b1b2);
    \draw[] (b2) -- (b22);
    \draw[] (b1) -- (b12);
    \draw[] (b22) -- (b23);
    \draw[] (b22) -- (b1b22);
    \draw[] (b12b2) -- (b1b2);
    \draw[] (b12) -- (b12b2);
    \draw[] (b12) -- (b13);

    \fill(b1) circle (1.5pt);
    \fill(b2) circle (1.5pt);
    \fill(g1) circle (1.5pt);
    \fill(g2) circle (1.5pt);
    \fill(g3) circle (1.5pt);

    \fill[black,font=\footnotesize]
                    (b1) node [left] {$B_1$}
                    (b2) node [right] {$B_2$}
                    (0,0) node [left,below] {$B_0 = 0$}

                    (g1) node [left] {$\gamma$}
                    (g2) node [left] {$2 \gamma$}
                    (g3) node [left] {$3 \gamma$}
                    (2,1) node [left] {\bf \large  $\Pi$};

    \fill[fill opacity=0.2] (b1) -- (b2) -- (0,0) -- cycle;
    \fill[fill opacity=0.2] (b1) -- (b12) -- (b1b2) -- cycle;
    %\fill[fill opacity=0.2] (b2) -- (b1b2) -- (b22) -- cycle;
    \fill[fill opacity=0.2] (b2) -- (b1b2) -- (b22) -- cycle;
    \fill[fill opacity=0.2] (b22) -- (b1b22) -- (b23) -- cycle;
    \fill[fill opacity=0.2] (b1b22) -- (b1b2) -- (b12b2) -- cycle;
    \fill[fill opacity=0.2] (b12) -- (b12b2) -- (b13) -- cycle;

    \draw[thick, dotted] (b13) -- (b14);
    \draw[thick, dotted] (b23) -- (b24);
    
  \end{tikzpicture}

      \caption{Partitioning $Cone(B)$}
    \label{fig-cone}
\end{figure}
For vector $\pp \in Cone(B)$ with $b = Bx$ let $[ \pp ] = [ B x ]$. We say that two points $\pp, \pp' \in Cone(B)$ are \emph{equivalent} if $[ \pp ] = [ \pp' ]$.

For the proof of the main theorem, we consider a $\lambda \in \ZZgeq^{\Ptop \cap \ZZ^d}$ with $b = \sum_{s \in S} \lambda_s s$ and suppose that $\lambda$ does not fulfill property (1) of Theorem \ref{thm-main}. Then there exists a $\gamma \in (\Ptop \cap \ZZ^d) \setminus V_I$ with big weight i.e. $\lambda_\gamma \geq 2^{2^{\Omega(d)}}$.
The key idea of the proof is that we consider the set of multiplicities $\gamma, 2 \gamma , 3 \gamma , \ldots $ of the vector $\gamma$. Our goal is to find a possibly small multiplicity $K>1$ such that $K \gamma$ is equivalent to a point $\delta$ in the convex hull $S$. Hence, weight on $\gamma$ can be shifted to the vertices $B_0, \ldots , B_d$ of $S$. Then $K \gamma$ can be written as the sum of vertices $\sum \Lambda_i  B_i$ plus some $\delta \in S \cap \ZZ^d$ (see Lemma \ref{lem-konfig} for a detailed proof). In figure \ref{fig-cone} we have that $3 \gamma$ is equivalent to a point in the simplex (as remarked by the grey areas) and hence in that case $3 \gamma = \delta B_1 + B_2$ for some $\delta \in S$. Before we are ready to prove the existence of a small multiplicity $K$, we give some definitions and observations.

%Consider the vector $\pp \in int.cone(S \cap \ZZ^d)$ with integral vector $\gamma \in \ZZ^{S \cap \ZZ^d}_{\geq 0}$ such that $\pp = \sum_{s \in S \cap \ZZ^d} \gamma_s s$. Suppose there is a component $\delta \in (S \cap \ZZ^d)$ with $\gamma_{\delta} > 2^{2^{O(d)}}$. Since $\delta \in S =  Conv(0,B_1, \ldots , B_d)$ there exists a vector $x \in \QQ^{d}_{\geq 0}$ such that $\delta = \sum_{1 \leq i \leq d} x_i B_i $ and $\nor{x}_1 \leq 1$.

%We consider the vector $x \in \mathbb{Q}^{d}_{\geq 0}$ such that $Bx = \delta$. Since $A_i \in Conv(B_1, \ldots, B_d)$ and hence $A_i = x_1 B_1 + \ldots + x_d B_d$. We obtain that $x_j \geq 0$ for every component $j$ and $\nor{x}_1 \leq 1$.\\

Instead of multiplicities of $\gamma \in S \cap \ZZ^d$, we consider multiplicities of a vector $x \in [0,1)^{d+1}$ in the unit cube with $\gamma = x_0 B_0 + x_1 B_1 + \ldots + x_d B_d = Bx$.
\begin{definition}
Consider multiplicities $x, 2x, 3x, \ldots$ of a vector $x \in [0,1)^{d+1}$ with $\sum x_i = 1$. We say components $i$ \emph{jumps} at $K$ if $\lceil K x_i \rceil > \lceil (K-1) x_i \rceil$.
We define 
\begin{align*}
Level(Kx) =  \sum_{i=0}^d \{ K x_i \}.
\end{align*}
\end{definition}
Note that $Level(Kx)$ is always integral as $Level(Kx) = \sum_{i=0}^d K x_i - \sum_{i=0}^d \lfloor K x_i \rfloor$ and both terms $\sum_{i=0}^d K x_i$ and $\sum_{i=0}^d \lfloor K x_i \rfloor$ are integral.
The following lemma shows that we obtain the desired decomposition of $K \gamma$ if $Level(K x) = 1$.
\begin{lemma} \label{lem-konfig}
Let $\gamma \in S \cap \ZZ^d$ be the vector with $\gamma = Bx$ for $x \in [0,1)^{d+1}$. If $Level(Kx) = 1$, then there exists a $\Lambda \in \ZZ^{d+1}_{\geq 0}$ and a $\delta \in S \cap \ZZ^d$ such that
\begin{align*}
K \gamma = \delta + \sum_{i=0}^d \Lambda_i B_i.
\end{align*}
\end{lemma}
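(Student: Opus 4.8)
The plan is to obtain $\Lambda$ and $\delta$ by simply splitting each coordinate of the scaled vector $Kx$ into its integer and fractional part. First I would write $K x_i = \lfloor K x_i\rfloor + \{K x_i\}$ for each $i \in \{0,\dots,d\}$ and set $\Lambda_i := \lfloor K x_i\rfloor$ together with $\delta := \sum_{i=0}^d \{K x_i\}\, B_i$ (which is exactly $[K\gamma]$ in the notation introduced above). Since $K$ is a positive integer and $x_i \geq 0$, each $\Lambda_i$ is a nonnegative integer, so $\Lambda \in \ZZ_{\geq 0}^{d+1}$. Multiplying $\gamma = Bx$ by $K$ and regrouping then gives $K\gamma = \sum_{i=0}^d K x_i B_i = \sum_{i=0}^d \Lambda_i B_i + \delta$, which is already the claimed form of the decomposition; the only remaining task is to check that $\delta$ lies in $S \cap \ZZ^d$.

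For integrality I would argue by subtraction: $K\gamma \in \ZZ^d$ because $\gamma \in \ZZ^d$ and $K \in \ZZ$, and $\sum_{i=0}^d \Lambda_i B_i \in \ZZ^d$ because the $B_i$ are integral and the $\Lambda_i$ are integers; hence $\delta = K\gamma - \sum_{i=0}^d \Lambda_i B_i \in \ZZ^d$.

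For membership in $S$ I would use the description $S = \{\sum_i y_i B_i \mid y_i \in [0,1],\ \sum_i y_i = 1\}$. The weights $y_i := \{K x_i\}$ attached to the $B_i$ in $\delta$ automatically satisfy $y_i \in [0,1)$, and they obey $\sum_{i=0}^d y_i = \sum_{i=0}^d \{K x_i\} = Level(Kx) = 1$ precisely by the hypothesis of the lemma. Hence $\delta$ is a genuine convex combination of $B_0,\dots,B_d$, i.e. $\delta \in S$, and combined with the previous step we get $\delta \in S \cap \ZZ^d$, as required.

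I do not expect a real obstacle here; the entire content of the lemma is the bookkeeping observation that the condition $Level(Kx) = 1$ is exactly what forces the fractional-part weights to sum to $1$, which is what distinguishes a point of the simplex $S$ from an arbitrary point of the surrounding parallelepiped $\PP$. (Without this hypothesis the same computation only yields $[K\gamma] \in Level(Kx)\cdot S$, a dilated simplex, from which one would still need to peel off further copies of the vertices $B_i$; the role of the hypothesis is to make that extraction unnecessary.)
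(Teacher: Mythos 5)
Your proposal is correct and takes essentially the same approach as the paper: both set $\Lambda_i = \lfloor K x_i \rfloor$, $\delta = \sum_i \{K x_i\} B_i$, derive integrality of $\delta$ by subtraction, and use $Level(Kx)=1$ to show $\delta$ is a convex combination of the $B_i$ and hence lies in $S$.
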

\begin{proof}
As above, we split every component $i$ of $Kx \in \QQ^{d+1}_{\geq 0}$ into an integral part $Kx^{int}_i = \lfloor K x_i \rfloor$ and a fractional part $\{ K x_i \}$. Then $Kx = Kx^{int} + \{ Kx \}$ and we set $\delta = B (\{ K x \}) = [ B(Kx) ]$.

\begin{observation} $\delta \in \ZZ^d$. \end{observation}
Since $K \gamma  = K B x \in \ZZ^d$ and $K B x^{int} \in \ZZ^d$ we obtain that $\delta = B (\{ K x \}) = B (Kx) - B (K x^{int})$ is integral and therefore $\delta \in \ZZ^d$.

\begin{observation} $\delta \in S$. \end{observation}
Since $Level(Kx) = 1$ we obtain that $\sum_{i=0}^d \{ K x_i \} = 1$ and $\delta = B ( \{ K x \}) = ( \{ K x_0 \}) B_0 + \ldots + (\{ K x_d \}) B_d$, we can state $\delta$ as a convex combination of $B_0, B_1 , \ldots B_d$. Therefore $\delta \in S$.

Finally, we can decompose $K \gamma$ into
\begin{align*}
    K \gamma = K B x = B (\{ K x_d \}) + B (K x^{int}) = B (K x^{int}) + \delta =  \delta + \sum_{i=0}^d \Lambda_i B_i
\end{align*}
for some $\Lambda \in \ZZ^{B}_{\geq 0}$.
\end{proof}

The following lemma gives a correlation between the level of some point $Kx$ and the number of jumps.
\begin{lemma} \label{lem-level}
Let $J$ be the number of jumps at $K$, then
\begin{align*}
Level(Kx) = Level((K-1)x)+1 - J.
\end{align*}
%In the case that $\sum_{i=0}^d x_i > 1- \frac{1}{K}$, the above inequality is an equation.
\end{lemma}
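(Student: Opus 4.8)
Since $Level(Kx) = \sum_{i=0}^d \{Kx_i\}$ is a sum over the $d+1$ coordinates, the plan is to analyse a single coordinate $i$ and then add up. Fix $K$. For each $i \in \{0,\ldots,d\}$ I would use $\{t\} = t - \lfloor t\rfloor$ to write
\begin{align*}
\{Kx_i\} - \{(K-1)x_i\} = \bigl(Kx_i - (K-1)x_i\bigr) - \bigl(\lfloor Kx_i\rfloor - \lfloor (K-1)x_i\rfloor\bigr) = x_i - c_i ,
\end{align*}
where I set $c_i := \lfloor Kx_i\rfloor - \lfloor (K-1)x_i\rfloor$.

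Next I would observe that $c_i \in \{0,1\}$: since $x \in [0,1)^{d+1}$ we have $0 \leq Kx_i - (K-1)x_i = x_i < 1$, so the integral part of the multiples of $x_i$ can grow by at most $1$ from the $(K-1)$-th to the $K$-th multiple; moreover $c_i = 1$ occurs exactly when the half-open segment $\left((K-1)x_i,\, Kx_i\right]$ contains an integer, which is precisely the event that component $i$ jumps at $K$. Hence $\sum_{i=0}^d c_i = J$. Summing the displayed identity over all $i$ and using the normalization $\sum_{i=0}^d x_i = 1$ gives
\begin{align*}
Level(Kx) - Level((K-1)x) = \sum_{i=0}^d x_i - \sum_{i=0}^d c_i = 1 - J ,
\end{align*}
and rearranging yields $Level(Kx) = Level((K-1)x) + 1 - J$.

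The computation itself is routine; the one place that deserves care is the identification $\sum_{i=0}^d c_i = J$, i.e. matching the definition of a jump (phrased via $\lceil\cdot\rceil$) with the increment of $\lfloor\cdot\rfloor$ by one. Generically the two conditions coincide, but I would check the boundary situations in which $Kx_i$ or $(K-1)x_i$ is itself an integer and make sure the convention for a jump is used consistently there. It is also worth stressing where the additive constant $1$ comes from: it is exactly $\sum_{i=0}^d x_i$, so one step of the sequence $x, 2x, 3x, \ldots$ changes $Level$ by $1 - J$. This is the lever that will later guarantee that some small multiple $Kx$ attains $Level(Kx) = 1$, feeding into Lemma \ref{lem-konfig}.
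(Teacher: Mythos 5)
Your proof follows essentially the same route as the paper's: write $\{Kx_i\} - \{(K-1)x_i\} = x_i - c_i$ coordinatewise, sum over $i$, and use $\sum_{i=0}^d x_i = 1$; the paper states the per-coordinate identity only for jumping components as $\{Kx_i\} = \{(K-1)x_i\} + x_i - 1$, which is your $c_i = 1$. Your caution about matching the definition of a jump with the increment of $\lfloor\cdot\rfloor$ is well placed and in fact points at a real wrinkle: the paper's definition uses $\lceil K x_i\rceil > \lceil (K-1)x_i\rceil$, but the quantity that enters the proof (both the paper's and yours) is the floor increment $c_i = \lfloor Kx_i\rfloor - \lfloor (K-1)x_i\rfloor$, equivalently the condition $\{(K-1)x_i\} + x_i \geq 1$. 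The two disagree exactly when $Kx_i$ or $(K-1)x_i$ is an integer (try $x_i = 1/2$ with $K=2$, where the floor jumps but the ceiling does not, and with $K=3$, where the opposite happens), and it is the floor condition that makes the stated identity $Level(Kx) = Level((K-1)x) + 1 - J$ true. So the $\lceil\cdot\rceil$ in the paper's definition should be read as $\lfloor\cdot\rfloor$; with that convention your argument is correct and complete.
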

\begin{proof}
    For every component $i$ of $Kx$ which jumps, we obtain that $\{ Kx_i \} = \{ (K-1)x_i + x_i \}= \{ (K-1)x_i \} +x_i -1$.
    Hence 
    \begin{align*}
        & Level(Kx) = \sum_{i=0}^d \{ K x_i \} \\ &=  \sum_{i\text{ jumps at }K} ( \{ (K-1)x_i \} +x_i -1) + \sum_{i\text{ does not jump at }K} ( \{ (K-1)x_i \} + x_i) \\&=  \sum_{i=0}^d \{ (K-1) x_i \} + \sum_{i=0}^d x_i  -J = Level((K-1)x) +1 -J.
    \end{align*}
    %In the case that $\sum_{i=1}^d x_i > 1- \frac{1}{K}$ we know that $\sum_{i=1}^d (K-1) x_i > K-2 + \frac{1}{K}$ and hence $\sum_{i=1}^d \{ (K-1) x_i \} > Level((K-1)x) -1 + \frac{1}{K}$. This gives
    %\begin{align*}
    %\lceil \sum_{i=1}^d \{ (K-1) x_i \} + \sum_{i=1}^d x_i \rceil \geq \lceil Level((K-1)x) -1 + \frac{1}{K} + \sum_{i=1}^d x_i \rceil \\ = Level((K-1)x)  - \lceil 1 + \frac{1}{K} + \sum_{i=1}^d x_i \rceil = Level((K-1)x) + 1,
    %\end{align*}
    %which implies equality for the above inequalities.
\end{proof}

\begin{theorem} \label{thm-main1}
Given $x \in [0,1)^{d+1}$ with $Level(x) = \sum_{i=0}^d x_i = 1$. Then there is a $K \in \mathbb{N}_{>1}$ with $K \leq 2^{2^{O(d)}}$ such that $Level(Kx) = \sum_{i=0}^d \{ K x_i \}  = 1$.
\end{theorem}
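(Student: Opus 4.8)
The plan is to restate the claim in terms of the function $Level$ and the ``jump calculus'' of Lemma~\ref{lem-level}, and then induct on the dimension $d$.

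\emph{Reformulation.} Because $\sum_i x_i=1$ we have $Level(Kx)=\sum_i\{Kx_i\}=K-\sum_i\lfloor Kx_i\rfloor$, so $Level(Kx)$ is always a non-negative integer; it equals $0$ exactly when every $Kx_i$ is integral and otherwise lies in $\{1,\ldots,d\}$. Thus we are being asked to return $Level$ to its second-smallest value: $Level(x)=1$ at $K=1$, and by Lemma~\ref{lem-level} the value of $Level$ can climb by at most one when $K$ increases by one. A convenient sufficient condition I would isolate at the start is: \emph{if exactly two of the $Kx_i$ are non-integral, then $Level(Kx)=1$} (the two non-integral fractional parts lie in $(0,1)$ and their sum is a positive integer, hence $1$); so one concrete target is a bounded $K$ that clears all but two coordinates.

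\emph{The induction.} I would prove a bound $g(d)\le 2^{2^{O(d)}}$ by induction on $d$, with small $d$ done by hand (for $d$ up to a constant every $K$ that can occur already lies below the claimed bound; e.g.\ $d=1$ forces $x=(x_0,1-x_0)$ and $K\in\{2,3\}$). For the step, reorder so $x_0\ge\cdots\ge x_d$, so that the two smallest satisfy $x_{d-1}+x_d\le 2/(d+1)<1$ (for $d\ge 2$), and \emph{merge} them: set $x'=(x_0,\ldots,x_{d-2},\,x_{d-1}+x_d)\in[0,1)^{d}$, which still has coordinate sum $1$. By the induction hypothesis there is $K'\in\mathbb{N}_{>1}$ with $K'\le g(d-1)$ and $\sum_{i=0}^{d-1}\{K'x'_i\}=1$. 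The carry identity used in the proof of Lemma~\ref{lem-level}, namely $\{a\}+\{b\}=\{a+b\}+\chi$ with $\chi\in\{0,1\}$, applied to $a=K'x_{d-1}$ and $b=K'x_d$, gives $Level(K'x)=\sum_{i=0}^{d-1}\{K'x'_i\}+\chi=1+\chi$. If $\chi=0$ we are done with $K=K'$. If $\chi=1$ we are left with $y=\{K'x\}\in[0,1)^{d+1}$ of coordinate sum $2$, but of a restricted shape — two coordinates summing to $1+\varepsilon$, where $\varepsilon=\{K'(x_{d-1}+x_d)\}\in[0,1)$, and the remaining $d-1$ coordinates summing to $1-\varepsilon$ — and I would feed the $d$-coordinate configuration $(y_0,\ldots,y_{d-2},\varepsilon)$, which has sum $1$, back into the induction hypothesis to produce a correction multiplier $\le g(d-1)$, taking for $K$ the product of all the factors produced along the way. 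If only a bounded number of such factors is needed then $g(d)\le g(d-1)^{O(1)}$, which solves to $g(d)=2^{2^{O(d)}}$; and $K$, being a genuine product of factors strictly greater than $1$, satisfies $K>1$.

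\emph{The main obstacle.} The hard part will be the $\chi=1$ branch: a correction can again create a carry, and one must show that after a \emph{bounded} number of correction rounds — a bound not depending on the denominators occurring in $x$ — the carries run out and $Level$ lands exactly on $1$. This forces one to track precisely how each correction simplifies $y$ (shrinking the relevant denominators, or the overflow amount $\varepsilon$, or the count of coordinates still able to carry); and it is exactly here that the bound turns out to be genuinely double-exponential rather than single-exponential, with the worst case realized by Sylvester-type configurations — the very source of the matching lower bound of Section~\ref{lower_bound}. Side issues to handle along the way: coordinates with $x_i=0$ (they never jump, so discard them and reduce $d$), the possibility that all $Kx_i$ become integral at once (this happens only at a true period, hence only when $x$ has small common denominator, a case that is finished directly by taking $K$ to be that period plus one), and making sure the induction always hands back a multiplier strictly larger than $1$ rather than the vacuous $K=1$.
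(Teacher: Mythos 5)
Your approach is genuinely different from the paper's: the paper does \emph{not} induct on dimension. Instead it sorts the coordinates, identifies a cutoff index $\bar d$ where the tail $\sum_{i>\bar d}x_i$ becomes tiny compared to $x_{\bar d}$ (formally, smaller than $x_{\bar d}/\XX(\bar d)$ where $\XX(\bar d)=\prod_{i\le\bar d}\lceil 1/x_i\rceil$), bounds $\XX(\bar d)\le 2^{2^{O(d)}}$ by a one-variable recursion, and then runs a single pigeonhole argument on the ``distance-to-next-jump'' vectors $(D_0(K),\ldots,D_{\bar d}(K))$ for $K=1,\ldots,\XX+1$. Two equal distance vectors at $K$ and $K+Z$ force every head component to jump at $Z$ or $Z+1$, and together with the bound $Level((Z-1)x)\le\bar d+1$ this drops the level to $1$ at $Zx$ or $(Z+1)x$ in one shot. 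There is no correction loop and hence no termination question.

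The gap in your proposal is exactly where you say it is, and it is fatal as written. After the merge step you land at $Level(K'x)=1+\chi$; if $\chi=1$ you propose to feed $\bigl(\{K'x_0\},\ldots,\{K'x_{d-2}\},\varepsilon\bigr)$, i.e.\ essentially $\{K'x'\}$, back into the induction hypothesis, obtaining a $K''$. But repeating your own carry computation gives $Level(K''K'x)=1+\chi'$ with $\chi'\in\{0,1\}$ again: the IH returns \emph{some} $K''$ with $\sum\{K''K'x'_i\}=1$, and you have no control over whether the last two original coordinates still carry under $K''K'$. Nothing in the argument is monotone: the vector handed to the next round of the IH has no smaller denominators, no fewer coordinates, and no smaller $\varepsilon$, so there is no potential function that guarantees the carry bit eventually clears, let alone that it clears within $O(1)$ rounds (which is what the claimed $g(d)\le g(d-1)^{O(1)}$ requires). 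Your observation that the Sylvester configuration is the bottleneck is accurate, but it is an intuition about why the bound must be double-exponential, not an argument that your loop terminates. Until that loop is controlled, the proposal is a plan rather than a proof. If you want to salvage the inductive route, you would need to build the kind of ``tail is negligible relative to the head'' separation that the paper sets up, precisely so that once the head jumps coherently, the tail cannot produce further carries.
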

\begin{proof}
We suppose that $\frac{1}{2} > x_0 \geq x_1 \geq \ldots \geq x_d$. In the case that $x_0 \geq \frac{1}{2}$ we obtain that $Level(2x) \leq 1$ and are done. Let $\bar{d} \leq d$ be the smallest index such that 
\begin{align*}
\sum_{i=\bar{d}+1}^{d}x_{i} < \frac{1}{\XX(\bar{d})} x_{\bar{d}},
\end{align*}
where $\XX(\bar{d}) = \prod_{i=0}^{\bar{d}} p_i$ and $p_i = \lceil \frac{1}{x_i} \rceil$. Intuitively, $\bar{d}$ is chosen such that there is a major jump from $x_{\bar{d}}$ to $x_{\bar{d}+1}$, i.e. $x_{\bar{d}} >> x_{\bar{d}+1}$. Note that the above equation is always fulfilled for $\bar{d} = d$ and since $\frac{1}{\XX(\bar{d})} x_{\bar{d}} < x_{\bar{d}} < 1/2$ we have that $\bar{d} \geq 1$.
First, we prove the following lemma to give bounds for $\XX(\bar{d})$ and $x_{\bar{d}}$
\begin{lemma}
Assuming for every $0 \leq j \leq \bar{d}$ that $\sum_{i=j+1}^{d}x_{i} \geq \frac{1}{\XX(j)} x_{j}$, then the following parameters can be bounded by
\begin{itemize}
\item $\XX(\bar{d}) \leq2^{2^{O(d)}}$ and
\item component $x_{\bar{d}} \geq \frac{1}{2^{2^{\Omega(d)}}}$.
\end{itemize}
\end{lemma}
\begin{proof}
%For every $1 \leq j \leq \bar{d}$ we know that $\sum_{i=j+1}^{d}x_{i} \geq \frac{1}{\XX(j)} x_{j}$. Then for every $1\leq j \leq \bar{d}$ component $x_j \geq \frac{1}{2^{2^{O(j)}}}$ and the term $\XX(i) \leq 2^{j2^{j2}} = 2^{2^{O(j)}}$ unless property (2) is fulfilled.
For $j=0$ we know that $x_0 \geq \frac{1}{d+1}$ as $x_0$ is the largest component. This implies also that $\XX(0) \leq d+1$. We suppose by induction that for every $j \leq \bar{d}$,
\begin{align*}
x_{j} \geq (2^{j2^{2j}} \cdot d^{2^{2j}})^{-1} 
\end{align*}
and 
\begin{align*}
\XX(j) \leq (2^{j2^{2j+1}} \cdot d^{2^{2j+1}}).
\end{align*}
Since $j \leq \bar{d}$ we obtain that $\sum_{i=j+1}^{d} x_{i} \geq \frac{1}{\XX(j)} x_j$ and since the coefficients are sorted in non-increasing order we get $d x_{j+1} \geq \frac{1}{\XX(j)} x_j$. Using the induction hypothesis this gives 
\begin{align*}
x_{j+1} & \geq (2^{j2^{2j+1}} d^{2^{2j+1}})^{-1} \cdot (2^{j2^{2j}} d^{2^{2j}})^{-1} \cdot d^{-1} \\
&\geq (2^{j2^{2j+1} + j2^{2j}} \cdot d^{2^{2j} + 2^{2j+1} +1})^{-1}\\ 
& > (2^{2 \cdot j2^{2j+1}} \cdot d^{2 \cdot 2^{2j+1}})^{-1}\\
&= (2^{j2^{2(j+1)}} \cdot d^{2^{2(j+1)}})^{-1}
\end{align*}
Product $\XX(j+1)$ can be bounded as follows: 
\begin{align*}
\XX(j+1) & = \lceil \frac{1}{x_{j+1}} \rceil \XX(j) \leq (\frac{1}{x_{j+1}}+1) \XX(j)\\
& \leq (2^{(j+1)2^{2(j+1)}} d^{2^{2(j+1)}} +1) \cdot 2^{j2^{2j+1}} d^{2^{2j+1}}\\
& < 2^{(j+1)2^{2(j+1)}+1} d^{2^{2(j+1)}} \cdot 2^{j2^{2j+1}} d^{2^{2j+1}}\\
& = 2^{(j+1)2^{2(j+1)} + 1 + j2^{2j+1}} \cdot d^{2^{2(j+1)} + 2^{2j+1}}\\
& < 2^{2 \cdot (j+1)2^{2(j+1)}} \cdot d^{2 \cdot 2^{2(j+1)}}\\
& = 2^{\cdot (j+1)2^{2(j+1) +1}} \cdot d^{2^{2(j+1)+1}}
\end{align*}
As a result we obtain that $\XX(\bar{d}) \leq (2^{\bar{d}2^{2\bar{d}+1}} \cdot d^{2^{2\bar{d}+1}}) = 2^{2^{O(d)}}$ and $x_{\bar{d}} \geq (2^{\bar{d}2^{2\bar{d}}} \cdot d^{2^{2\bar{d}}})^{-1} = \frac{1}{2^{2^{\Omega(d)}}}$.
\end{proof}

Let $\XX = \XX(\bar{d})$, for each component $0 \leq i \leq d$, we define the \emph{distance} $D_i(K)$ of a multiplicity $K$  by $D_i(K) = j$, where $j \geq 0$ is the smallest integer such that component $i$ jumps at $K+j$. Note that $D_i(K)$ is bounded by $p_i$ as $p_i x_i \geq 1$. We say $Kx \equiv K'x$ if for every $0 \leq i \leq \bar{d}$ the distance $D_i(K) = D_i(K')$.
Consider elements $x, 2x, \ldots, (\XX+1) x$. Since the number of equivalence classes is bounded by $\XX =\prod_{i=1}^{\bar{d}} p_i$, there exist two elements $Kx, (K+Z)x$ with $K,Z \in \ZZ_{\geq 1}$ and $K, (K+Z) \leq \XX+1$ such that $Kx \equiv (K+Z)x$. We will see that the equivalence of two multiplicities implies the existence of a multiplicity $M>1$ with $Level(Mx)=1$.

First, we argue about the level of multiplicity $Z-1$. Note that since $\sum_{i=0}^d x_i = 1$ we have that $Level(Kx) = \sum_{i=0}^d \{ K x_i \}$ for every multiplicity $K \in \ZZ_{\geq 1}$.
The case that $Level((Z-1)x) \geq \bar{d}+2$ is not possible since
\begin{align*}
 Level((Z-1)x) = \sum_{i=0}^d \{ (Z-1) x_i \} \leq \bar{d}+1 + (Z-1) \sum_{i=\bar{d}+1}^{d} x_{i} \leq \bar{d} +1 + (\XX-1) \sum_{i=\bar{d}+1}^{d} x_{i} \\ \leq \bar{d} +1 + \frac{\XX-1}{\XX} x_{\bar{d}} < \bar{d} + 2.
\end{align*}

\begin{case}[1] Suppose $Level((Z-1)x) = \bar{d}+1$ (for $Z \geq 2$). \end{case}

\begin{case}[1a] Suppose $\{ (Z-1)x_i \} \geq 1-x_i$ for all $i = 0,1, \ldots, \bar{d}$. \end{case} In this case every component $0 \leq i \leq \bar{d}$ jumps at $Z$. By Lemma \ref{lem-level}, we can bound the level of $Zx$ by $Level(Zx) \leq Level((Z-1)x)+1 - (\bar{d}+1) = 1$.

\begin{case}[1b] There is an $0 \leq i \leq \bar{d}$ such that $\{ (Z-1) x_i \} < 1-x_i$. \end{case} In this case $\sum_{i=0}^{\bar{d}} \{ (Z-1) x_i \} \leq \bar{d}+1 - x_{i} \leq \bar{d}+1 - x_{\bar{d}}$ and we obtain
\begin{align*}
Level((Z-1)x) =  \sum_{i=0}^d \{ (Z-1) x_i \} < \bar{d} +1 - x_{\bar{d}} + \sum_{i=\bar{d}+1}^{d} (Z-1) x_{i} \leq \bar{d} +1 - x_{\bar{d}} + \frac{1}{X} x_{\bar{d}} < \bar{d} +1
\end{align*}
which contradicts the assumption of Case 1.
\begin{case}[2] Suppose $Level((Z-1)x) \leq \bar{d}$. \end{case}
Since $K \equiv K+Z$ we know that every component $i$ jumps at $K+D_i$ and $K+Z+D_i$, hence for every $0 \leq i \leq \bar{d}$ we obtain $\{ (K+D_i) x_i \} < x_i$ and $\{ (K+Z+ D_i)x_i \} < x_i$. Let $\{ (K+D_i) x_i \} = \alpha_1$ and $\{ (K+Z+D_i) x_i \} = \alpha_2$ for some $\alpha_1, \alpha_2 < x_i$. Then $\{ Zx_i \} = \{ \alpha_2 -\alpha_1 \}$ and hence $\{ Zx_i \} = \alpha_2 -\alpha_1$ if $\alpha_1 \leq \alpha_2$ and $\{ Zx_i \} = 1 + \alpha_2 -\alpha_1$ if $\alpha_1 > \alpha_2$. Since $\alpha_1, \alpha_2 < x_i$ we have $\{ Z x_i \} < x_i$ or $\{ Z x_i \} \geq 1 - x_i$. Hence every component $0 \leq i \leq \bar{d}$ jumps at $Z x_i$ or at $(Z+1)x_i$.
We obtain by Lemma \ref{lem-level} that $Level((Z+1)x) \leq Level((Z-1)x) + 2 - (\bar{d} +1) \leq \bar{d}+ 2 - (\bar{d} +1) = 1$.
\end{proof}

\subsubsection*{Proof of the main Theorem \ref{thm-main}}
Consider the vector $\pp \in int.cone(S \cap \ZZ^d)$. Let $\lambda \in \ZZ^{S \cap \ZZ^d}$ be the integral vector with $\pp = \sum_{s \in S \cap \ZZ^d} \lambda_s s$.

Assume there is a component $\gamma \in (S \cap \ZZ^d ) \setminus B$ with high multiplicity i.e. $\lambda_{\gamma} = 2^{2^{\Omega(d)}}$. Since $\gamma \in S$, there is a $x \in \QQgeq^{d+1}$ with $\sum_{i=0}^d x_i = 1$ and $\gamma = x_0 B_0 + x_1 B_1 + \ldots x_d B_d$. By Theorem \ref{thm-main1} there exists a multiplicity $K = 2^{2^{O(d)}} > 1$ such that $Level(K x) = 1$.
According to Lemma \ref{lem-konfig}, there exists a $\delta \in S \cap \ZZ^{d}_{\geq 0}$ such that $K \gamma = \delta + \sum_{i=0}^d \Lambda_i B_i$ for some $\Lambda \in \ZZgeq^{d+1}$. Then we can construct a $\lambda' \in \ZZ^{d+1}_{\geq 0}$ with $\pp = \sum_{s \in S \cap \ZZ^d} \lambda'_p p$, which has more weight in $B$ as $K>1$: 
\begin{align*}
\lambda' = \begin{cases}\lambda_\gamma - K \\
    \lambda_\delta +1 \\
    \lambda_{B_i} + \Lambda_i & \forall B_i \in B \\
    \lambda_s & \forall s \in (S\cap \ZZ^d) \setminus (B \cup \{ \gamma, \delta \})
\end{cases}
\end{align*}
Since $K >1$, the resulting vector $\lambda'$ has a decreased vertex distance as the sum $\sum_{p \in (\Ptop \cap \ZZ^d) \setminus V_I} \lambda_p$ is reduced at least by $1$.
Using Theorem \ref{thm-supp} applied to components $i \in (S \cap \ZZ^d) \setminus B$, we can construct a $\lambda''$ with $|supp( \lambda'' \setminus B)| \leq 2^d$.
In the case that there is another component $\gamma \in (S \cap \ZZ^d ) \setminus B$ with multiplicity $\lambda''_{\gamma} = 2^{2^{\Omega(d)}}$ we can iterate this process. In the other case, solution $\lambda''$ fulfills the proposed properties.
%Since $K>1$, we obtain $\sum_{s \in S \setminus B} \lambda'_s < \sum_{s \in S \setminus B} \lambda_s$ which is a contradiction to the minimality of the sum $\sum_{s \in (S \setminus B)} \lambda_s$. Therefore, we obtain that every component $\gamma \in (S \setminus X) \cap \ZZ^d$ that $\lambda_{\gamma} = 2^{2^{O(d)}}$.

\subsubsection*{Proof of the structure Theorem \ref{thm-structure}}
\begin{proof} Given polytope $\Ptop = \{ x \in \QQ^d \mid Ax \leq c\}$ for some matrix $A \in \ZZ^{m \times d}$ and a vector $c \in \ZZ^d$ and let $\Ptop_I$ be the integer polytope with vertices $V_I$. The structure theorem follows easily by decomposing the polytope $\Ptop$ into simplices $S$ of the form $S = Conv(B_0, B_1, \ldots B_d)$ for $B_i \in V_i$.

By Caratheodory's Theorem, there exist for each $\gamma \in \Ptop$ vertices $B_0, B_1, \ldots , B_{d} \in V_I$ and a $x \in \QQgeq^{d+1}$ with $\sum_{i=0}^{d} x_i = 1$ such that $\gamma = x_0 B_0 + \ldots + x_{d} B_{d}$.
Consider the vector $\pp \in int.cone(\Ptop \cap \ZZ^d)$. Let $\lambda \in \ZZ^{\Ptop \cap \ZZ^d}$ be the integral vector with $\pp = \sum_{p \in \Ptop \cap \ZZ^d} \lambda_p p$. By Theorem \ref{thm-supp}, we can assume that $supp(\lambda) \leq 2^d$ and hence there are at most $2^d$ simplices $S^{(k)} = Conv(B^{1}_0, B_{1}^{(k)}, \ldots B_{d}^{(k)})$ for $k = 1, \ldots , 2^d$ which contain a point $\gamma \in \Ptop$ with $\lambda_\gamma > 0$.
Finally, we can apply our main Theorem \ref{thm-main} to every simplex $S^{(k)}$ for $k = 1, \ldots , 2^d$ to obtain a vector $\lambda' \in \ZZgeq^{\Ptop \cap \ZZ^d}$ which fulfills the above properties.
\end{proof}

%Since $\delta \in Conv(B_1, \ldots , B_d)$ there exists a vector $x \in \QQ^{d}_{\geq 0}$ such that $\delta = \sum_{1 \leq i \leq d} B_i x_i$ and $\nor{x}_1 \leq 1$.

\subsection{Algorithmic application}

\subsubsection*{Computing $V_I$ in fpt-time}
Given Polytope $\Ptop = \{ x \in \QQ^d \mid Ax \leq c\}$ for some matrix $A \in \ZZ^{m \times d}$ and a vector $c \in \ZZ^d$ such that all coefficients of $A$ and $c$ are bounded by $\Delta$. Cook et al. \cite{cook1992} proved that the number of vertices $V_I$ of the integer polytope $\Ptop_I$ is bounded by $m^d \cdot O((\log \Delta))^d$. In the following we give a brief description on how the set of vertices $V_I$ can be computed in time $|V_I| \cdot d^{O(d)} \cdot (m \log(\Delta))^{O(1)}$ and therefore in fpt-time parameterized by the number of vertices $|V_I|$. For a detailed description of the algorithm we refer to the thesis of Hartmann \cite{hartmann}.

Given at timestep $t$ a set of vertices $V_t \subset V_I$ and the set of facets $F^{(t)} = \{F_1, \ldots , F_\ell \}$ of $conv(V_t)$ corresponding to half-spaces $H_i = \{ x \mid n_i x \leq c_i\}$ for normal vectors $n_1, \ldots , n_\ell \in \QQ^d$ and constants $c_1, \ldots , c_d \in \ZZ$ with $conv(V) = \cap_{i=1}^\ell H_i$. Consider for every $1 \leq i \leq \ell$ the polytope $\Ptop \cap H_{i}^-$ for a halfspace  $H_{i}^- = \{x \mid x \in \Ptop, n_i x \geq c_i \}$.  Compute with Lenstra's algorithm \cite{Lenstra1983} a solution $x^* \in \Ptop \cap \ZZ^d$ of the ILP $\max \{ n_i x \mid x \in (\Ptop \cap \ZZ^d), n_i x \geq c_i \}$. In the case that $n_i x^* > c_i$, solution $x^*$ does not belong to $Conv(V_t)$. Assuming that $x^*$ is a vertex of the integer polytope of $\Ptop \cap H_{i}^-$, solution $x^*$ is also a vertex of the integer polytope $\Ptop_I$. We can add $x^{*}$ to the set of existing vertices $V_t \subset V_I$, construct the increased set of facets $F^{(t+1)}$ of $Conv(V_t \cup \{ x^* \})$ and iterate the procedure. In the case that there is no solution $x^*$ with $n_i x^* > c_i$ for any $1 \leq i \leq \ell$, we have that $\Ptop_I = Conv(V_t)$ and are done.

\subsubsection*{Bin Packing in fpt-time}
In the following we describe the algorithmic use of the presented structure theorem \ref{thm-structure}. Therefore, we follow the approach by Goemans and Rothvo\ss~\cite{goemans2015}.
\begin{theorem}
Given polytopes $\Ptop,\mathcal{Q} \subset \QQ^d$, one can find a $y \in int.cone(\Ptop \cap \ZZ^d) \cap \mathcal{Q}$ and a vector $\lambda \in \ZZgeq^{\Ptop \cap \ZZ^d}$ such that $\pp = \sum_{p \Ptop \cap \ZZ^d} \lambda_p p$ in time $|V_I|^{2^{O(d)}} enc(\Ptop)^{O(1)} enc(\mathcal{Q})^{O(1)}$, where $enc(\Ptop), enc(\mathcal{Q})$ is the encoding length of the polytope $\Ptop, \mathcal{Q}$ or decide that no such $y$ exists.
\end{theorem}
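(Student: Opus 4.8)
The plan is to follow the enumeration-plus-integer-program scheme of Goemans and Rothvo\ss~\cite{goemans2015}, replacing their set of parallelepiped vertices by the vertex set $V_I$ and invoking our structure Theorem~\ref{thm-structure}. First I would compute $V_I$ by the incremental procedure described above, at cost $|V_I|\cdot d^{O(d)}\cdot(m\log\Delta)^{O(1)}$, which is dominated by the final running time. Now fix any $y\in int.cone(\Ptop\cap\ZZ^d)\cap\mathcal{Q}$. By Theorem~\ref{thm-structure} this $y$ has an integral representation $\lambda\in\ZZgeq^{\Ptop\cap\ZZ^d}$ with $|supp(\lambda)\cap V_I|\le d\cdot 2^d$, with at most $2^{2d}$ points in $supp(\lambda)\setminus V_I$, and with coefficient at most $2^{2^{O(d)}}$ on each such point. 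Writing $T=supp(\lambda)\cap V_I$ and listing the points of $supp(\lambda)\setminus V_I$ as $q_1,\dots,q_{2^{2d}}\in\Ptop\cap\ZZ^d$ (padding with arbitrary lattice points carrying coefficient $0$ if there are fewer than $2^{2d}$), one obtains
\begin{align*}
y=\sum_{v\in T}\lambda_v v+\sum_{i=1}^{2^{2d}}\mu_i q_i,\qquad T\subseteq V_I,\quad |T|\le d\cdot 2^d,\quad \mu_i\in\{0,1,\dots,2^{2^{O(d)}}\},\quad y\in\mathcal{Q}.
\end{align*}

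The algorithm enumerates the two small pieces of this data: there are at most $|V_I|^{2^{O(d)}}$ subsets $T\subseteq V_I$ with $|T|\le d\cdot 2^d$, and at most $(2^{2^{O(d)}}+1)^{2^{2d}}=2^{2^{O(d)}}$ tuples $(\mu_1,\dots,\mu_{2^{2d}})$. Once $T$ and the $\mu_i$ are fixed as constants, the only remaining unknowns are $\lambda\in\ZZgeq^{T}$ and $q_1,\dots,q_{2^{2d}}\in\ZZ^d$, and the conditions $Aq_i\le c$ for all $i$ (so that $q_i\in\Ptop$) together with $\sum_{v\in T}\lambda_v v+\sum_i\mu_i q_i\in\mathcal{Q}$ are linear in these unknowns; this is an integer feasibility problem in dimension $n=|T|+2^{2d}\cdot d=2^{O(d)}$ whose encoding length is polynomial in $enc(\Ptop)$, $enc(\mathcal{Q})$ and $2^{O(d)}$. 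I would solve it with Lenstra's algorithm~\cite{Lenstra1983}, whose running time is $2^{\poly(n)}$ times a polynomial in the input size, hence $2^{2^{O(d)}}\cdot(enc(\Ptop)\,enc(\mathcal{Q}))^{O(1)}$ since $n=2^{O(d)}$. From the first feasible integer solution found I read off $y$ and a witness $\lambda\in\ZZgeq^{\Ptop\cap\ZZ^d}$ (the values $\lambda_v$ on $T$, the multiplicities $\mu_i$ on the $q_i$, and $0$ elsewhere); if no guess yields a feasible program I output that no such $y$ exists. Soundness is immediate, since any feasible integer solution of a guessed program is by construction an explicit point of $int.cone(\Ptop\cap\ZZ^d)\cap\mathcal{Q}$, and completeness follows from the displayed representation, which matches one of the enumerated pairs $\bigl(T,(\mu_i)_i\bigr)$. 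The total running time is the product of the two enumeration sizes with the cost of one call to Lenstra's algorithm, that is $|V_I|^{2^{O(d)}}\cdot 2^{2^{O(d)}}\cdot(enc(\Ptop)\,enc(\mathcal{Q}))^{O(1)}$, and since $|V_I|\ge d+1\ge 2$ the factor $2^{2^{O(d)}}$ is absorbed into $|V_I|^{2^{O(d)}}$, giving the claimed bound.

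The step I expect to be the main obstacle is the treatment of the part of the representation lying outside $V_I$, which is also the only place where the argument genuinely departs from Goemans--Rothvo\ss: in their theorem the non-vertex coefficients are forced to be $\le 1$, whereas Theorem~\ref{thm-structure} only bounds them by $2^{2^{O(d)}}$. One therefore cannot afford one lattice-point variable per unit of that part, since this would blow the integer program up to dimension $2^{2^{O(d)}}$ and Lenstra's algorithm up to triply exponential time in $d$. Guessing the $2^{2d}$ multiplicities $\mu_i$ in advance is exactly what repairs this: it keeps the unknown non-vertex part a genuinely linear expression $\sum_i\mu_i q_i$ in only $2^{O(d)}$ variables, while the additional enumeration costs merely a factor $2^{2^{O(d)}}$, which is swallowed by $|V_I|^{2^{O(d)}}$. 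A smaller point worth verifying is that the integer program need not be bounded (for instance when $\mathcal{Q}$ is unbounded the $\lambda_v$ may grow arbitrarily), but this causes no difficulty since Lenstra's algorithm decides integer feasibility of arbitrary rational polyhedra.
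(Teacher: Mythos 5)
Your proposal is correct and follows essentially the same enumeration-plus-Lenstra strategy as the paper. The one place you genuinely deviate is in handling the coefficients on the non-vertex points: you enumerate the tuple $(\mu_1,\dots,\mu_{2^{2d}})$ in advance, absorbing the extra $2^{2^{O(d)}}$ factor into $|V_I|^{2^{O(d)}}$, whereas the paper keeps these multiplicities as unknowns and encodes each $\mu_j q_j$ in binary inside the ILP via variables $x^{(j)}_i$ with fixed coefficients $2^i$ and constraints $A x^{(j)}_i \le c$, so that the ILP dimension stays $2^{O(d)}$ without a separate enumeration over multiplicities. Both repairs address exactly the obstacle you identified (Goemans--Rothvo{\ss} have $\lambda_p\le 1$ off their distinguished set, Theorem~\ref{thm-structure} only gives $2^{2^{O(d)}}$), and both land on the same running time; yours trades a slightly larger guessing phase for a somewhat cleaner ILP. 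One cosmetic point: when padding to exactly $2^{2d}$ non-vertex points with $\mu_i=0$, you should drop the constraints $Aq_i\le c$ for those indices (or simply only introduce $q_i$ for the guessed value $k=|supp(\lambda)\setminus V_I|$, as the paper does), since otherwise the program could be spuriously infeasible when $\Ptop\cap\ZZ^d$ is small; with $\mu_i=0$ those variables contribute nothing and need no constraint.
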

\begin{proof}
Let $\Ptop = \{ x \in \QQ^d \mid Ax \leq c\}$ and $\mathcal{Q} = \{ x \in \QQ^d \mid \tilde{A}x \leq \tilde{c} \}$ be the given polytopes for a matrix $A \in \ZZ^{m \times d}$ and a matrix $\tilde{A} \in \ZZ^{\tilde{m}\times \tilde{d}}$.
First, we compute the set of vertices of $\Ptop_I$ in time $|V_I|^{2^{O(d)}} enc(\Ptop)^{O(1)} enc(\mathcal{Q})^{O(1)}$ as described above. Suppose that there is a vector $\pp \in int.cone(\Ptop \cap \ZZ^d) \cap \mathcal{Q}$, then by Theorem \ref{thm-structure}, we know there is a vector $\lambda \in \ZZgeq^{\Ptop \cap \ZZ^d}$ with $\pp = \sum_{p \in \Ptop \cap \ZZ^d} \lambda_s s$ such that 
\begin{enumerate}
\item $\lambda_p \leq 2^{2^{O(d)}} \forall p \in (\Ptop \cap \ZZ^d) \setminus V_I$,
\item $|supp(\lambda) \cap V_I| \leq d \cdot 2^d$,
\item $|supp(\lambda) \setminus V_I| \leq 2^{2d}$.
\end{enumerate}
At the expense of a factor $\binom{|V_I|}{d2^{d}} = |V_I|^{2^{O(d)}}$ we can guess the support $V_\lambda \subseteq V_I$ of $\lambda$ restricted to components $\lambda_p$ with $p \in V_I$ i.e. $V_\lambda = supp(\lambda) \cap V_I$. For each $p \in V_\lambda$ we use variables $\bar{\lambda}_p \in \ZZgeq$ to determine the multiplicities of $\lambda_p$. 
Furthermore, we guess the number of different points $p \not\in V_i$ used in $\lambda$ i.e. $k = | supp(\lambda) \setminus V_I| \leq 2^{2d}$. We use variables $x^{(j)}_i$ for $j = 1, \ldots , k$ to determine the points $p \not\in V_I$ and their multiplicity $\lambda_p$. Note that in the following ILP, we encode the multiplicity of a $\lambda_p$ with $p \not\in V_I$ binary, therefore the number of variables $x^{(j)}_i$  can be bounded by $2^{2d} \cdot \log (2^{2^{O(d)}}) = 2^{O(d)}$. And finally we use a vector $y \in \ZZ^d$ to denote the target vector in polytope $\mathcal{Q}$.
\begin{align*}
Ax^{(j)}_i  &\leq b \qquad \forall i = 1 , \ldots , 2^{O(d)} \text{ and } \forall j = 1 , \ldots , k\\
\sum_{p \in  V_\lambda} \bar{\lambda}_p p  + \sum_{j=1}^k \sum_{i=1}^{2^{O(d)}} 2^j x^{(j)}_i &= y\\
\tilde{A} y &\leq \tilde{b}\\
x_i &\in \ZZ^d \qquad i = 1, \ldots , k\\
\bar{\lambda}_p & \in \ZZgeq \qquad \forall p \in V_\lambda\\
y \in \ZZ^d
\end{align*}
Using the algorithm of Lenstra or Kannan (\cite{Lenstra1983},\cite{Kannan1987}) to solve the above ILP which has $k2^{O(d)} + d + d2^d = 2^{O(d)}$ variables and $mk+ d + \tilde{m} + d|V_\lambda| = m 2^{O(d)} + \tilde{m}$ constraints, takes time $(2^{O(d)})^{2^{O(d)}} \cdot (m 2^{O(d)} + \tilde{m})^{O(1)} \log(\bar{\Delta})^{O(1)} = 2^{2^{O(d)}} enc(\Ptop)^{O(1)} enc(\mathcal{Q})^{O(1)}$, where $\bar{\Delta} = \max\{ 2^{2^{O(d)}}, d! \Delta^d , \tilde{\Delta} \}$.
The total running time is hence of the form: $|V_I|^{2^{O(d)}} enc(\Ptop)^{O(1)} enc(\mathcal{Q})^{O(1)}$
\end{proof}
We can apply this theorem to the bin packing problem by choosing $\Ptop = \{ \begin{pmatrix}x \\ 1 \end{pmatrix} \in \QQgeq^{d+1} \mid s^T x \leq 1 \}$ and $\mathcal{Q} = \{ b \} \times [0,a]$ to decide if items $b$ can be packed into at most $a$ bins. Using binary search, on the number of used bins, we can solve the bin packing problem in time $|V_I|^{2^{O(d)}} \cdot \log(\Delta)^{O(1)}$, where $\Delta$ is the largest multiplicity of item sizes or the largest denominator appearing in an itemsize $s_1, \ldots , s_d$. Since $|V_I| \geq d+1$ this running time is fpt-time, parametrized by the number of vertices $|V_I|$ and therefore we obtain Theorem \ref{thm-fpt}.

%\subsection{About the connection to the integrality gap}

\section{Lower Bound} \label{lower_bound}
In this section we give a construction of a bin packing instance $(s,b)$ with vertex distance $Dist(b) = 2^{2^{\Omega(d)}}$. We consider the case that all item sizes $s_1, \ldots , s_d$ are of the form $s_i = \frac{1}{a_i}$ for some $a_i \in \ZZ_{\geq 1}$. In this case, all vertices of the knapsack polytope $\Ptop = \{ x \in \QQgeq \mid s_1 x_1 + \ldots + s_d x_d \leq 1 \}$ are of the form $B_i = (0,\ldots,0, a_i ,0, \ldots , 0)^T$ and therefore integral. We obtain that $\Ptop_I = \Ptop = Conv(0,B_1, \ldots , B_d)$.

Our approach for the proof of the lower bound is as follows: We prove the existence of a parallelepipped $\Pi = \{ x_1 B_1 + \ldots + x_d B_d \mid x_i \in [0,1] \}$ with a special element $g \in (\Ptop \cap \ZZ^d) \setminus V_I$ such that the unique optimal packing of the bin packing instance $K \cdot g$ (for a possibly large multiplicity $K \in \ZZgeq$) is to use $K$ times the configuration $g$. In this case, weight can not be shifted to the vertices $B_1, \ldots , B_d$ and hence the instance $K g$ implies a vertex distance of $Dist(Kg) = K$. We show that the special element $g$ can be determined by a set of modulo congruences. Therefore, we are able to use basic number theory to construct a bin packing instance with double exponential vertex distance.

First, we take a close look at the parallelepipped $\Pi$ and $Cone(B)$. Recall that two points $\pp, \pp' \in Cone(B)$ are \emph{equivalent} if $[ \pp ] = [ \pp' ]$. Each point $\pp \in Cone(B)$ is equivalent to a point in the parallelepiped $\PP$.
\begin{lemma} \label{lem-group}
Using operation $+$ defined by $p+p' = [ p + p' ]$ for some $p,p' \in \PP \cap \ZZ^d$ then $G(\PP) = (\PP \cap \ZZ^d, +)$ is an abelian group with $|G(\PP)| = |det(B)|$ many elements.
\end{lemma}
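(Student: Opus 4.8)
The plan is to identify $G(\PP) = (\PP \cap \ZZ^d, +)$ with a quotient of the ambient lattice. First I would set $\Lambda = B \ZZ^{d}$, the sublattice of $\ZZ^d$ generated by the columns $B_1, \ldots, B_d$ (working in the case $B_0 = 0$, as in the lower bound section, so that $\PP$ is the half-open parallelepiped spanned by $B_1, \ldots, B_d$). The key observation is that the map $\varphi \colon \PP \cap \ZZ^d \to \ZZ^d / \Lambda$ sending $p \mapsto p + \Lambda$ is a bijection: every point $\pp \in Cone(B) \cap \ZZ^d$ decomposes uniquely as $\pp = B x^{int} + [Bx]$ with $B x^{int} \in \Lambda$ and $[Bx] \in \PP$, so each coset of $\Lambda$ that meets $Cone(B)$ contains exactly one representative in $\PP$; and since $B$ is nonsingular (the $B_i$ are affinely independent vertices of the full-dimensional simplex $S$), every coset of $\Lambda$ has a representative in $Cone(B)$, hence in $\PP$. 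Thus $\varphi$ is a bijection onto the finite group $\ZZ^d/\Lambda$.

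Next I would check that $\varphi$ is a group homomorphism for the operation $p + p' := [p + p']$. This is immediate from the definition of $[\,\cdot\,]$: for $p, p' \in \PP \cap \ZZ^d$ we have $p + p' = [p+p'] + B z$ for some integer vector $z$, i.e. $(p + p') - [p+p'] \in \Lambda$, so $\varphi([p+p']) = \varphi(p+p') = \varphi(p) + \varphi(p')$ in $\ZZ^d/\Lambda$. Associativity, commutativity, the neutral element $[0] = 0$, and inverses then transfer from $\ZZ^d/\Lambda$ through the bijection $\varphi$, so $G(\PP)$ is an abelian group isomorphic to $\ZZ^d/\Lambda$. Finally, the order of $\ZZ^d/\Lambda$ equals the index $[\ZZ^d : B\ZZ^d] = |\det B|$, which gives $|G(\PP)| = |\det(B)|$.

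The only genuinely delicate point is the bijectivity of $\varphi$ — specifically, that every coset of $\Lambda$ has a representative inside the bounded region $\PP$ (not just somewhere in $\ZZ^d$). This follows because for any $v \in \ZZ^d$, writing $v = B y$ with $y \in \QQ^d$ (possible since $B$ is invertible over $\QQ$), the point $[B y] = B\{y\} \in \PP$ differs from $v$ by $B \lfloor y \rfloor \in \Lambda$; and $[By]$ is integral because $[By] = v - B\lfloor y\rfloor$ is a difference of integer vectors. Uniqueness of the representative in $\PP$ is the standard fact that the half-open fundamental parallelepiped of a full-rank lattice tiles space, i.e. no two distinct points of $\PP$ are congruent mod $\Lambda$. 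Everything else is bookkeeping, so I expect no real obstacle beyond stating these lattice facts cleanly; one should just be slightly careful about half-open versus closed parallelepiped conventions so that each coset has exactly one representative.
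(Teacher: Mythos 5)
Your proof takes the same route as the paper: it identifies $G(\PP)$ with the quotient group $\ZZ^d / B\ZZ^d$ and invokes the standard lattice fact that $[\ZZ^d : B\ZZ^d] = |\det B|$, which is precisely what the paper states in a single sentence (citing Barvinok for the determinant count). Your more detailed write-up, including the correct caution about using a half-open rather than closed parallelepiped so that each coset has exactly one representative, fills in exactly the details the paper leaves implicit.
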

The proof that $G(\Pi)$ is a group can be easily seen, since $G(\Pi)$ is the quotient group of $\ZZ^d$ and the lattice $\ZZ B_1 \oplus \ldots \oplus \ZZ B_d$.  For the fact that $|G(\Pi)| = det(B)$ we refer to \cite{Barvinok2007}.
In the considered case that $B_i = (0,\ldots,0, a_i ,0, \ldots , 0)^T$, the group $G(\Pi)$ is isomorphic to $(\ZZ /a_1 \ZZ) \times \ldots \times (\ZZ /a_d \ZZ)$. 
Recall that the set $\Ptop \cap \ZZ^d \subset \Pi \cap \ZZ^d$ represents all integral points of the knapsack polytope and each $\delta \in \Ptop \cap \ZZ^d$ therefore represents a way of packing a bin with items from sizes $s_1, \ldots , s_d$. We call $\delta \in \Ptop \cap \ZZ^d$ a \emph{configuration}.

In the following subsection \ref{sec-mirup}, we also give an easy observation on how the vertex distance is connected to the integrality gap of the bin packing problem.

%Suppose the roundup property for $\{ {\lambda^{(2)}} \}$ is fulfilled, then there exists a $\lambda^{(x)} \in \ZZgeq^{\Ptop \cap \ZZ^d}$ with $\nor{\lambda^{(x)}}_1 = \lceil \rceil$ such that $\sum_{p \in \Ptop \cap \ZZ^d} \lambda^{(x)}_p p = \{ b' \}$. Then we can define a vector $\mu \in \ZZgeq^{\Ptop \cap \ZZ^d}$ by $\lambda-\lambda' + \lfloor x \rfloor + \lambda^{(x)}$ with $\sum_{p \in \Ptop \cap \ZZ^d} \lambda_p p = b$

%Consider a bin packing instance with sizes $s_i = \frac{1}{a_i}$ where $a_i \in \ZZ_{\geq 1}$ for all $1\leq i \leq d$. In this case, the knapsack polytope $\Ptop$ is given by the convex hull $\Ptop = Conv(0,B_1, \ldots , B_d)$, where $B_i = (0, \ldots ,0 , a_i , 0, \ldots ,0)^T$ and hence the polytope $\Ptop = Conv(0,B_1, \ldots , B_d)$ is a simplex with integral vertices. In this section we show that there exist knapsack polytopes of this shape which involve an instance $b \in int.Cone(\Ptop_I)$ with vertex distance $Dist(b) = 2^{2^{\Omega(d)}}$.

\subsection{Preliminaries}
In this section we state some basic number theoretic theorems that we will use in the following. For details and proofs, we refer to the books of Stark \cite{stark1970} and Graham, Knuth and Patashnik \cite{Graham1994}.
\begin{theorem}[\cite{stark1970}]\label{coprime}
    Let $a_1, \ldots a_d \in \ZZ$ with $gcd(a_1 , \ldots , a_d) = 1$, then there exist $v_1, \ldots , v_d \in \ZZ$ such that
    \begin{align*}
        a_1 v_1 + \ldots a_d v_d = 1.
    \end{align*}
\end{theorem}
\begin{theorem}[Chinese remainder theorem \cite{stark1970}]\label{chinese}
    Suppose $a_1, \ldots , a_d \in \ZZ$ are pairwise coprime. Then, for any given sequence of integers $i_1, \ldots , i_d$, there exists an integer x solving the following system of simultaneous congruences.
        \begin{align*}
            x \equiv i_j \mod a_j \text{ for $1 \leq j \leq d$}.
        \end{align*}
    Furthermore, $x$ is unique $\mod \prod_{i=1}^d a_i$.
\end{theorem}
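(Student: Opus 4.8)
The plan is to prove existence by patching together solutions of the individual congruences and to prove uniqueness by a divisibility argument, in both cases bootstrapping from Theorem~\ref{coprime}.

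\textbf{Existence.} For each index $j$ I would set $N_j = \prod_{k \neq j} a_k$. Pairwise coprimality of the $a_k$ gives $\gcd(a_j, N_j) = 1$, using repeatedly that $\gcd(a,b) = \gcd(a,c) = 1$ implies $\gcd(a,bc) = 1$. Applying Theorem~\ref{coprime} to the pair $a_j, N_j$ yields integers $u_j, v_j$ with $a_j u_j + N_j v_j = 1$, and I set $e_j := N_j v_j$. Then $e_j \equiv 1 \pmod{a_j}$, while $e_j \equiv 0 \pmod{a_k}$ for every $k \neq j$ since $a_k \mid N_j$. The candidate solution is $x := \sum_{j=1}^d i_j e_j$: reducing modulo any $a_j$, every summand except the $j$-th vanishes and the $j$-th contributes $i_j \cdot 1$, so $x \equiv i_j \pmod{a_j}$ for all $j$.

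\textbf{Uniqueness.} If $x$ and $x'$ both solve the system, then $a_j \mid (x - x')$ for every $j$, and an induction on $d$ shows $\prod_{j=1}^d a_j \mid (x - x')$; the inductive step uses that $\gcd(a,b) = 1$ together with $a \mid m$ and $b \mid m$ forces $ab \mid m$. Hence $x \equiv x' \pmod{\prod_i a_i}$. Conversely any integer congruent to a solution modulo $\prod_i a_i$ is again a solution, so the solution set is exactly one residue class modulo $\prod_{i=1}^d a_i$.

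The only step that needs a moment's care is the divisibility lemma ``$\gcd(a,b)=1$, $a \mid m$, $b \mid m$ $\Rightarrow$ $ab \mid m$'', which I would derive from Theorem~\ref{coprime}: writing $1 = au + bv$ gives $m = aum + bvm$, where $ab \mid aum$ because $b \mid m$ and $ab \mid bvm$ because $a \mid m$, hence $ab \mid m$. The same lemma, applied inductively, also gives $\gcd(a_j, N_j) = 1$. I do not expect any genuine obstacle here: CRT is classical and essentially all of its arithmetic content is already contained in Theorem~\ref{coprime}; the only work is the routine bookkeeping above.
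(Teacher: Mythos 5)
The paper does not prove this statement: Theorem~\ref{chinese} is quoted as a classical background fact, with \cite{stark1970} cited for its proof, so there is no in-paper argument to compare yours against. That said, your proof is correct and is the standard textbook proof of CRT. The existence part via the elements $e_j = N_j v_j$ with $e_j \equiv 1 \pmod{a_j}$ and $e_j \equiv 0 \pmod{a_k}$ for $k \neq j$, and the uniqueness part via the lemma that $\gcd(a,b)=1$ together with $a \mid m$, $b \mid m$ forces $ab \mid m$, are both exactly right; deriving both the divisibility lemma and the fact $\gcd(a_j, N_j)=1$ from Theorem~\ref{coprime} is a clean way to keep the argument self-contained within the tools the paper introduces. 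The only small point worth making explicit is that you use both directions of B\'ezout: Theorem~\ref{coprime} gives the representation $1 = au + bv$ from coprimality, while showing $\gcd(a,bc)=1$ from $\gcd(a,b)=\gcd(a,c)=1$ also needs the trivial converse (a common divisor of $a$ and $bc$ divides any integer combination, hence divides $1$). That is routine, so there is no gap.
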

\begin{theorem}[\cite{stark1970}]\label{inverse}
    Given congruence $x \mod a$. If $x$ and $a$ are coprime, there exists an inverse element $x^{-1} \in \ZZ / a \ZZ$ such that $x x^{-1} \equiv 1 \mod a$.
\end{theorem}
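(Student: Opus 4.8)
The plan is to deduce Theorem~\ref{inverse} directly from the $d=2$ case of Theorem~\ref{coprime} (B\'ezout's identity). The hypothesis that $x$ and $a$ are coprime means precisely $\gcd(x,a)=1$, so Theorem~\ref{coprime} applies with $(a_1,a_2)=(x,a)$, yielding integers $v_1,v_2\in\ZZ$ with
\begin{align*}
x v_1 + a v_2 = 1 .
\end{align*}

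Reducing this identity modulo $a$ annihilates the term $a v_2$ and leaves $x v_1 \equiv 1 \bmod a$. Hence the residue class $x^{-1}$ of $v_1$ in $\ZZ/a\ZZ$ satisfies $x\, x^{-1}\equiv 1 \bmod a$, which is exactly the asserted statement. The only point deserving a word of care is that $x^{-1}$ is well defined as an element of $\ZZ/a\ZZ$: if $v_1'$ is another integer with $x v_1'\equiv 1 \bmod a$, then $x(v_1-v_1')\equiv 0 \bmod a$, and multiplying this congruence by $v_1$ (which we have just shown to be an inverse of $x$) gives $v_1-v_1'\equiv 0 \bmod a$; so the inverse is in fact unique in $\ZZ/a\ZZ$, even though the statement only claims existence.

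There is no genuine obstacle here: the result is an immediate corollary of B\'ezout's identity, its entire content being the reduction modulo $a$ of the relation $x v_1 + a v_2 = 1$. If one preferred not to cite Theorem~\ref{coprime}, an equivalent constructive route is to run the extended Euclidean algorithm on $x$ and $a$, which terminates with last nonzero remainder $\gcd(x,a)=1$ while simultaneously producing the coefficients $v_1,v_2$; but since Theorem~\ref{coprime} is already available above, the short deduction is the cleanest option.
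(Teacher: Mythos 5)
Your deduction is correct and is the standard argument: coprimality of $x$ and $a$ gives $\gcd(x,a)=1$, Theorem~\ref{coprime} (B\'ezout) with $(a_1,a_2)=(x,a)$ yields $xv_1+av_2=1$, and reducing modulo $a$ gives $xv_1\equiv 1\bmod a$. The paper itself does not prove this statement but merely cites it as a classical fact from \cite{stark1970}; your derivation from the already-stated Theorem~\ref{coprime} is exactly the textbook route, and the added remark on uniqueness of the inverse is correct though not needed for the claim.
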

Sylvester's sequence is defined by,
\begin{align*}
S_1 = 2\\
S_j = 1 + \prod_{i=1}^{j-1} S_i
\end{align*}
and has following properties (see \cite{Graham1994})
\begin{align*}
S_n \approx 1.264^{2^{n}}\\
\sum_{i=1}^{j-1} \frac{1}{S_i} = 1- \frac{1}{S_j -1}
\end{align*}

\subsection{Proof of the lower bound}
We start by defining the \emph{size} of an element $\pi \in \Pi$ by 
\begin{align*}
Size(\pi)= \sum_{i=1}^d s_i \pi_i.
\end{align*}
In our case, the sizes $s_i$ are given by $s_i = \frac{1}{a_i}$ and vectors $B_i = (0, \ldots ,0 , a_i , 0, \ldots ,0)^T$ for some $a_i  \in \ZZ_{\geq 1}$. Hence, the size of a $B_i$ equals to $1$ and for each $x \in [0,1)^d$ with $Bx = \pi$, we have that $\sum_{i=1}^d x_i = Size(\pi)$.
Since the matrix $B$ is a diagonal matrix with entries $a_i$, the determinant equals $det(B) = \prod_{i=1}^d a_i$.
We define for $1 \leq i \leq d$ that
\begin{align*}
    R_i = \frac{det(B)}{a_i} = \prod_{j \neq i} a_j.
\end{align*}
In the following lemma we show that the fractional value of the size $\{ Size(\Pi) \}$ is unique for every element $\pi \in G(\Pi )$.
\begin{lemma} \label{lem-size}
Given parallelepiped $\Pi = \{ x_1 B_1 + \ldots + x_d B_d \mid x_i \in [0,1)^d \}$ with $B_i = (0 , \ldots , 0 , a_i , 0 , \ldots ,0)^T$. If $a_1, \ldots , a_d$ are pairwise coprime, then for every $0 \leq a < det(B)$, there exists a unique vector $\pi \in \Pi \cap \ZZ^d$ and a vector $x \in [0,1)^d$ with $Bx = \pi$,  such that 
\begin{align*}
Size(\pi)= \sum_{i=1}^d x_i = z  + a / det(B),
\end{align*}
for some $z \in \ZZgeq$.
\end{lemma}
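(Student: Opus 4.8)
The plan is to exploit that $B$ is diagonal. Since $B = \mathrm{diag}(a_1,\ldots,a_d)$, a lattice point $\pi \in \Pi \cap \ZZ^d$ is exactly a tuple $(\pi_1,\ldots,\pi_d)$ with $0 \le \pi_i \le a_i-1$, and the unique $x \in [0,1)^d$ with $Bx = \pi$ is $x_i = \pi_i/a_i$. Hence
\[
Size(\pi) = \sum_{i=1}^d x_i = \sum_{i=1}^d \frac{\pi_i}{a_i} = \frac{1}{det(B)}\sum_{i=1}^d \pi_i R_i ,
\]
so that $det(B)\cdot\{Size(\pi)\} = \bigl(\sum_{i} \pi_i R_i\bigr) \bmod det(B)$. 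Therefore the lemma is equivalent to the assertion that the map
\[
\Phi \colon \prod_{i=1}^d \{0,1,\ldots,a_i-1\} \longrightarrow \{0,1,\ldots,det(B)-1\}, \qquad \Phi(\pi) = \Bigl(\sum_{i=1}^d \pi_i R_i\Bigr) \bmod det(B)
\]
is a bijection: given the preimage $\pi$ of $a$, the integer $z$ in the statement is just $\lfloor Size(\pi)\rfloor \ge 0$, and the vector $x$ is determined by $\pi$ since $B$ is invertible.

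To show $\Phi$ is a bijection I would pass through the Chinese remainder theorem. Since $a_1,\ldots,a_d$ are pairwise coprime, Theorem \ref{chinese} yields a ring isomorphism $\ZZ/det(B)\ZZ \cong \prod_{i} \ZZ/a_i\ZZ$ given by componentwise reduction. Now $R_i = \prod_{j\ne i} a_j$ is divisible by $a_j$ for every $j\ne i$, so under this isomorphism $\Phi(\pi)$ corresponds to the tuple $\bigl(\pi_1 R_1 \bmod a_1,\ \ldots,\ \pi_d R_d \bmod a_d\bigr)$, i.e.\ in the $j$-th coordinate only the term $\pi_j R_j$ survives. Moreover $R_j$ is a product of integers each coprime to $a_j$, hence coprime to $a_j$, so by Theorem \ref{inverse} it is invertible modulo $a_j$; consequently $\pi_j \mapsto \pi_j R_j \bmod a_j$ is a bijection of $\ZZ/a_j\ZZ$. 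Taking the product over $j$ and composing with the CRT isomorphism shows that $\Phi$ is a bijection. Concretely, for $0\le a < det(B)$ the unique preimage is the tuple $\pi$ with $\pi_j \equiv a\, R_j^{-1} \pmod{a_j}$ and $0\le \pi_j \le a_j-1$.

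The argument is mostly bookkeeping once the diagonal structure is exploited; the only points requiring care are the CRT identification, namely that the cross terms $\pi_i R_i$ with $i\ne j$ vanish modulo $a_j$, and that each $R_j = det(B)/a_j$ is coprime to $a_j$. If one prefers to avoid exhibiting the inverse, it suffices to verify injectivity of $\Phi$ directly: if $\sum_i \pi_i R_i \equiv \sum_i \pi_i' R_i \pmod{det(B)}$, then reducing modulo $a_j$ gives $\pi_j R_j \equiv \pi_j' R_j \pmod{a_j}$, whence $\pi_j = \pi_j'$; one then concludes by counting, since both the domain and codomain have cardinality $\prod_i a_i = det(B)$, a fact that also follows from Lemma \ref{lem-group}.
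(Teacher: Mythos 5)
Your proof is correct, but it takes a genuinely different route from the paper's. The paper proceeds constructively: it invokes Bézout's identity to build an explicit vector $x$ with $\sum_i x_i \equiv 1/\det(B) \pmod 1$ and $Bx \in \Pi \cap \ZZ^d$, then shows that the multiplicities $Kx$ for $K = 0, \ldots, \det(B)-1$ produce elements $B(\{Kx\}) \in \Pi \cap \ZZ^d$ of size $\equiv K/\det(B)$, and finally closes the counting by appealing to $|G(\Pi)| = \det(B)$ from Lemma~\ref{lem-group}. You instead observe directly that $\Pi \cap \ZZ^d$ is $\prod_i\{0,\ldots,a_i-1\}$, rewrite $\det(B)\cdot Size(\pi)$ as $\sum_i \pi_i R_i$, and pass through the CRT ring isomorphism $\ZZ/\det(B)\ZZ \cong \prod_i \ZZ/a_i\ZZ$, under which the size map factors coordinatewise into the unit multiplications $\pi_j \mapsto \pi_j R_j \bmod a_j$; bijectivity is then immediate and you even read off the explicit preimage $\pi_j \equiv a\,R_j^{-1} \pmod{a_j}$. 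Both arguments hinge on the pairwise coprimality of the $a_i$, but yours gives a transparent product-of-bijections decomposition with an explicit inverse, while the paper's version is deliberately couched in terms of multiplicities $Kx$ of a fixed generator because that framing is reused verbatim in Corollary~\ref{cor-generator} (where the full generator $g$ is shown to generate $G(\Pi)$ cyclically with $Size(Kg) \equiv (\det(B)-K)/\det(B)$); if you adopted your CRT route you would need to re-derive that cyclic structure separately.
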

\begin{proof}
Since $a_1 \ldots , a_d$ are pairwise coprime, we have that $gcd(R_i, R_{i+1}) = \prod_{j \neq i,i+1} a_j$ and hence $gcd(R_1, \ldots , R_d) = 1$. By Theorem \ref{coprime}, there exist $v_1, \ldots , v_d \in \ZZ$ such that $v_1 R_1 + \ldots + v_d R_d = 1$. 
For $v'_i = v_i R_i \mod det(B)$ the sum $\sum_{i=1}^d v'_i \equiv 1 \mod det(B)$. Consider the vector $x = (\frac{v'_1}{det(B)}, \ldots , \frac{v'_d}{det(B)})^T$, then $\sum_{i=1}^d x_i = \sum_{i=1}^d \frac{v'_i}{det(B)} = \frac{1}{det(B)} + z$ for some $z \in \ZZgeq$.
The vector $Bx = \frac{v'_1}{det(B)} B_1 + \ldots + \frac{v'_d}{det(B)} B_d$ is integral since for every $1 \leq i \leq d$ the congruence $v'_i a_i \equiv v_i R_i a_i \equiv v_i det(B) \equiv 0 \mod det(B)$ holds and hence each item sizes $a_i x_i = \pi_i = \frac{v'_i}{det(B)} a_i \in \ZZgeq$.

Consider multiplicities $x, 2x, 3x, \ldots$. As above we can rewrite each element $Kx \in \QQgeq^d$ by $Kx = Kx^{int} + \{ Kx \}$ with $\{ Kx \} = (\{ K x_1 \}, \ldots , \{ K x_d \} )^T$ and $\{ K x_i \} < 1$, which implies that $B(\{ Kx \}) \in \Pi$. Since $B(Kx)$ is integral and $B(Kx^{int})$ is integral, the vector $B(\{ Kx \})$ is integral as well. Furthermore, the sum of all component $\{ Kx \}$ sums up to $\sum_{i=1}^d \{ K x_i \} = K \sum_{i=1}^d x_i - \sum_{i=1}^d \lfloor Kx_i \rfloor = \frac{K}{det(B)} + z$, for some $z \in \ZZgeq$. Hence for each multiplicity $Kx$ in $0x, x , 2x, \ldots , (det(B)-1)x$ there is a vector $B(\{ Kx \}) \in \Pi$ with $\sum_{i=1}^d \{ K x_i \} = z + \frac{K}{det(B)}$ and since $\Pi$ contains exactly $det(B)$ many elements (see Lemma \ref{lem-group}), each element of $G(\Pi)$ corresponds to a unique element of $0x ,x, 2x, \ldots , (det(B)-1) x$.
%Note that since every configuration $B_i$ contains items of size exactly $1$, the total some of item sizes in $B(Kx)$ sums up to $z + \frac{K}{det(B)}$ for some $z \in \ZZgeq$.
\end{proof}
Consider the specific element $g \in \Pi \cap \ZZ^d$ with fractional vector $x \in [0,1)^d$ such that $Bx = g$ and $Size(g) = \frac{det(B)-1}{det(B)} +z$. We call $g$ the \emph{full generator} of the group $G(\Pi)$.
\begin{corollary} \label{cor-generator}
    For every element $\pi \in G(\Pi)$ there exists a multiplicity $K$ such that $Kg = \pi$, i.e. the full generator $g$ generates the group $\Pi$ and hence $G(\PP) = <g>$ is a cyclic group. Element $Kg \in \Pi$ has a size of $z + \frac{det(B)-K}{det(B)}$ for some $z \in \ZZgeq$.
\end{corollary}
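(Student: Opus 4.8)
The plan is to read everything off from Lemma~\ref{lem-size}, which says that the fractional part of the size is a complete invariant on $G(\Pi)$: the assignment $\pi \mapsto \{Size(\pi)\}$ is a bijection from $G(\Pi) = \Pi \cap \ZZ^d$ onto $\{\,0,\tfrac1{\det(B)},\dots,\tfrac{\det(B)-1}{\det(B)}\,\}$ (injectivity is the uniqueness statement of Lemma~\ref{lem-size}, and both sets have $\det(B)$ elements by Lemma~\ref{lem-group}). So it is enough to determine which of these fractional sizes the iterates of $g$ realise.

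First I would fix, via Lemma~\ref{lem-size} applied with $a=\det(B)-1$, a vector $x\in[0,1)^d$ with $Bx=g$ and $Size(g)=\sum_i x_i = m-\tfrac1{\det(B)}$ for some integer $m\ge1$ (using $\tfrac{\det(B)-1}{\det(B)}=1-\tfrac1{\det(B)}$). Exactly as in the proof of Lemma~\ref{lem-size}, for every $K\in\ZZgeq$ the $K$-fold sum of $g$ in the group equals $[Kg]=B(\{Kx\})$, which is an integral point of $\Pi$, and its size is
\[
Size([Kg]) \;=\; \sum_{i=1}^d\{Kx_i\} \;=\; K\sum_{i=1}^d x_i \;-\; \sum_{i=1}^d\lfloor Kx_i\rfloor \;=\; Km - \frac{K}{\det(B)} - \sum_{i=1}^d\lfloor Kx_i\rfloor .
\]
Since the left-hand side is nonnegative, this number is of the form $z+\tfrac{\det(B)-K}{\det(B)}$ with $z\in\ZZgeq$ (reading $\det(B)-K$ modulo $\det(B)$ when $K\ge\det(B)$), which already proves the claimed size formula; in particular $\{Size([Kg])\}=\tfrac{(-K)\bmod\det(B)}{\det(B)}$, and this is a genuine fraction of denominator $\det(B)$ since $\gcd(\det(B)-1,\det(B))=1$ leaves no room for cancellation in $\sum_i x_i$.

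Finally, letting $K$ range over $0,1,\dots,\det(B)-1$, the residue $(-K)\bmod\det(B)$ ranges over all of $\{0,1,\dots,\det(B)-1\}$, so by the uniqueness part of Lemma~\ref{lem-size} the elements $[0g],[1g],\dots,[(\det(B)-1)g]$ are pairwise distinct. Since $|G(\Pi)|=\det(B)$ by Lemma~\ref{lem-group}, these $\det(B)$ elements exhaust $G(\Pi)$; hence every $\pi\in G(\Pi)$ equals $[Kg]$ for a suitable $K$, i.e. $g$ generates $G(\Pi)$ and the group is cyclic. I do not anticipate a real obstacle here, as Lemmas~\ref{lem-group} and~\ref{lem-size} already do the heavy lifting; the only point needing care is the floor-term bookkeeping in the size computation, which is harmless precisely because $\tfrac{\det(B)-1}{\det(B)}$ is in lowest terms.
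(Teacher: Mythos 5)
Your proof is correct and takes essentially the same route as the paper: both rely on Lemma~\ref{lem-size} to show that $\{Size(\cdot)\}$ is a complete invariant on $G(\Pi)$, compute $Size([Kg]) = z + \frac{\det(B)-K}{\det(B)}$ from the floor/fractional-part decomposition, and then invoke the uniqueness in Lemma~\ref{lem-size} together with $|G(\Pi)|=\det(B)$ to conclude that the iterates of $g$ exhaust the group. The paper is more terse (it simply says ``by the same argument as before''), whereas you spell out the bijection and the bookkeeping explicitly, but the underlying idea and dependencies are identical.
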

\begin{proof}
In the proof of the lemma above, we showed that the element $Bx$ with $\sum_{i=1}^d x_i = z + \frac{1}{det(B)}$ generates $G(\Pi)$ as each multiplicity $Kx$ of $x$ yields an element $B(\{ Kx \}) \in G(\Pi)$ of size $z + \frac{K}{det(B)}$. We consider the full generator $g$ with $g = B x'$ for some $x'$ with $\sum_{i=1}^d x'_i = z + \frac{det(B)-1}{det(B)}$ for some $z \in \ZZgeq$. By the same argument as before, the multiplicities $Kx'$ yield elements $\pi' = B(\{ Kx' \}) \in G(\Pi)$ with $Size(\pi') = z + \frac{det(B) - K}{det(B)}$ for some $z \in \ZZgeq$.
\end{proof}
%unique packing
As above, we consider multiplicities $K g$ of a vector $g \in \Pi$ and some $K >0$. We say that $K g \in cone(B)$ is \emph{unique} if $g \in \Ptop$ and $2 g, \ldots , K g \not\in \Ptop$ i.e. $g$ is a configuration and $2 g, \ldots , K g$ are not. In the following lemma we prove that if $K g$ is unique and $g$ is a full generator, then using $K$-times configuration $g$ is the unique optimal packing for instance $K g$.
\begin{lemma} \label{lem-unique}
Let $g$ be the full generator of $G(\Pi)$. If $K g \in cone(B)$ is unique, then there is no $\lambda \in \ZZgeq^{(\Ptop \cap \ZZ^d)}$ with $\lambda_{g} \neq K$ such that $\sum_{p \in \Ptop \cap \ZZ^d} \lambda_p p = K g$ and $|\lambda | = K$.
\end{lemma}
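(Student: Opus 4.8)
The plan is to combine the linear functional $Size$ with the cyclic structure of $G(\PP)$. Write $N=|det(B)|$. By Lemma~\ref{lem-group} and Corollary~\ref{cor-generator}, $G(\PP)$ is cyclic of order $N$ with generator $g$; and since $g\in\Ptop$ (this is part of the hypothesis that $Kg$ is unique), the integer part of $Size(g)=z+\frac{N-1}{N}$ must vanish, so $Size(g)=\frac{N-1}{N}$. I would also note right away that uniqueness forces $K\le N-1$: if $K\ge N$ then the identity $Ng$ of $G(\PP)$, which is represented by $0\in\Ptop$, occurs among $2g,\dots,Kg$, contradicting that none of those lies in $\Ptop$. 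In particular $\frac{K}{N}<1$.

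The heart of the proof is to classify, according to their class in $G(\PP)$, the configurations (points of $\Ptop\cap\ZZ^d$) that $\lambda$ may use. The elementary observation I would start from: if a configuration $p$ is written $p=q+\sum_i c_iB_i$ with $q\in\PP$ and $c_i\in\ZZ$, then each coordinate of $q$ lies in $[0,a_i)$, so $p\ge 0$ forces all $c_i\ge 0$, and then $1\ge Size(p)=Size(q)+\sum_i c_i$ forces $\sum_i c_i\le 1$; since $Size(q)>0$ whenever the integral point $q\ne 0$, this leaves only $p=q$, or ($q=0$ and $p=B_i$ for some $i$). Applying this with $q=[p]$ gives: if $[p]=0$ then $p\in\{0,B_1,\dots,B_d\}$, with waste $1-Size(p)$ equal to $1$ resp.\ $0$; and if $[p]=[mg]$ with $m\in\{1,\dots,N-1\}$ then necessarily $p=[mg]\in\Ptop$, and by Corollary~\ref{cor-generator} $Size([mg])=z+\frac{N-m}{N}$ with $z\le\frac mN<1$, so $z=0$ and $1-Size(p)=\frac mN$. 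Finally the hypothesis $2g,\dots,Kg\notin\Ptop$ eliminates the cases $m\in\{2,\dots,K\}$ altogether, and $m=1$ forces $p=g$.

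With the classification in hand the conclusion is one line of counting. Applying $Size$ to $\sum_{p}\lambda_p p=Kg$ and using $Size(Kg)=K\,Size(g)=K\frac{N-1}{N}$ gives
\[
\sum_{p}\lambda_p\bigl(1-Size(p)\bigr)=K-K\frac{N-1}{N}=\frac{K}{N}.
\]
By the classification, every $p$ with $\lambda_p>0$ has waste $1-Size(p)$ equal to $0$ (if $p=B_i$), to $\frac1N$ (if $p=g$), or at least $\frac{K+1}{N}$ (any other admissible configuration: $p=0$ gives waste $1>\frac KN$, and $p=[mg]$ with $m\ge K+1$ gives waste $\frac mN\ge\frac{K+1}{N}$). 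Since each $\lambda_p$ is a nonnegative integer and $\frac{K+1}{N}>\frac KN$, no configuration of the last kind can appear at all; hence $supp(\lambda)\subseteq\{g,B_1,\dots,B_d\}$ and the displayed identity collapses to $\lambda_g\cdot\frac1N=\frac KN$, i.e.\ $\lambda_g=K$ — which is exactly the claim (and then $|\lambda|=K$ leaves no room for the $B_i$, so in fact $\lambda=K\cdot\mathbf 1_g$).

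The step I expect to be the main obstacle — and the only place where the exclusions ``$2g,\dots,Kg\notin\Ptop$'' are really used — is the lifting inside the second paragraph: passing from a statement about elements of the finite group $G(\PP)$ to the statement that no genuine configuration of the bin packing instance can represent any of those group elements. This is exactly where coordinatewise nonnegativity of configurations (as opposed to arbitrary lattice translates) is essential; everything surrounding it is bookkeeping with the functional $Size$, and I would not expect the uniqueness hypothesis to be needed anywhere else — in particular not via any separate ``group equation'' obtained by reducing $\sum_p\lambda_p p=Kg$ modulo the lattice.
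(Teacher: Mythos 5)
Your proof is correct and proceeds by the same core argument as the paper: compute the total free bin space $K/\det(B)$, identify each used configuration with a power $[mg]$ of the full generator, and use the uniqueness hypothesis to conclude that any power $m\in\{2,\dots,K\}$ is excluded while any power $m>K$ has free space exceeding the available total on its own. You are in fact a bit more careful than the paper's version, which applies the free-space formula $K_i/\det(B)$ directly to a bin configuration $c_i$ without distinguishing $c_i$ from its reduction $[c_i]$ in $G(\Pi)$ — a distinction that matters exactly for $c_i\in\{0,B_1,\dots,B_d\}$ (where $[c_i]=0$ but $1-Size(c_i)\ne 1$), a boundary case that your explicit classification plus the final $|\lambda|=K$ count handles cleanly.
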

\begin{proof}
    Consider bin packing instance $Kg \in cone(B)$ and a packing of the instance into bins $1, \ldots , K$. Since $g$ contains items of size $\frac{det(B)-1}{det(B)}$, items in instance $Kg$ have a total size of $K \frac{det(B)-1}{det(B)}$ and therefore, the bins $1,\ldots, K$ have total free space of $\frac{K}{det(B)}$.
    Each bin configuration $c_1, \ldots , c_K$ of bins $1, \ldots , K$ belongs to $\Ptop$ and hence to $\Pi(G)$. By Cororllary \ref{cor-generator} for each $c_i$ there exists a multiplicity $K_i \in \ZZ_{\geq 1}$ such that $K_i g = c_i$.
    Assuming that $c_i \neq g$ and hence $K_i > 1$ we know that $K_i > K$ as by definition of the uniqueness of $K g$, elements $2g, \ldots , Kg$ are no configurations. However, a bin with configuration $K_i g = c_i \in \Ptop$ with $K_i >K$ has free space $\frac{K'}{det(B)}> \frac{K}{det(B)}$ and hence more free space than the total sum of free space in bins $1, \ldots , K$. Therefore, a configuration $\neq g$ can not appear in an optimal packing of the instance $K g$.
    The unique way of packing instance $Kg$ into $K$ bins is to use $K$ times configuration $g$.
\end{proof}
Consider the full generator $g = x_1 B_1 + \ldots + x_d B_d \in \Ptop$ with $x_i \geq 0$, we say $g$ has the \emph{long-run} property if $(1-\epsilon) \frac{1}{S_i} \leq x_i < \frac{1}{S_i}$ for $1\leq i \leq d-1$ and some $\epsilon < (\frac{1}{S_d-1})^2$, where $S_i$ is the $i$-th sylvester number. The following inequality gives a lower bound for $\nor{x}_1$:
\begin{align*}
    \sum_{i=1}^{d-1} x_i & \geq (1-\epsilon) \sum_{i=1}^{d-1} \frac{1}{S_i} = (1-\epsilon) (1-\frac{1}{S_d-1}) = 1- \epsilon - \frac{1-\epsilon}{S_d-1} \\ 
    & >  1- \frac{1}{(S_d-1)^2} - \frac{1}{S_d-1} =  1 - \frac{1}{(S_d-1)^2} + \frac{1}{(S_d-1)(S_d-2)} - \frac{1}{S_d-2}  > 1 - \frac{1}{S_d-2}
\end{align*}
If $g$ is a configuration and hence $\nor{x}_1 \leq 1$, we can bound $x_d$ from above by
\begin{align*}
    x_d \leq 1 - \sum_{i=1}^{d-1} x_i < \frac{1}{S_d-2}
\end{align*}

%\begin{align*}
%    x_d \leq 1 - \sum_{i=1}^{d-1} x_i \leq 1- (1-\epsilon) \sum_{i=1}^{d-1} \frac{1}{S_i} = 1- (1-\epsilon) (1-\frac{1}{S_d-1}) = \frac{1-\epsilon}{S_d-1} + \epsilon \\ 
%    < \frac{1}{S_d-1} + \frac{1}{(S_d-2)^2} =  \frac{1}{S_d-2} - \frac{1}{(S_d-1)(S_d-2)} + \frac{1}{(S_d-2)^2} <  \frac{1}{S_d-2}
%\end{align*}

Recall that the following statements are equivalent:
\begin{itemize}
    \item $\{ Kg \} \in G(\Pi)$ is a configuration i.e. $Kg \in \Ptop$
    \item $Level(Kx) = 1$
\end{itemize}

\begin{lemma} \label{lem-longrun}
    If $g$ is a configuration and $g$ has the long run property, then $(S_d-2) g$ is unique for $d \geq 3$.
\end{lemma}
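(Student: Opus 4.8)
The plan is to reduce the statement to an arithmetic inequality about the coordinates of $g$. Write $g=Bx$ with $x\in[0,1)^d$; since $g$ is a configuration we already have $g\in\Ptop$, so by the equivalences recalled just above the lemma, "$(S_d-2)g$ is unique" is exactly the assertion that for every $j$ with $2\le j\le S_d-2$ the representative $[jg]=B\{jx\}\in\Pi\cap\ZZ^d$ is \emph{not} a configuration, i.e. $Size([jg])=\sum_{i=1}^{d}\{jx_i\}>1$. Put $\delta_i:=\tfrac1{S_i}-x_i$, so the long-run property gives $0<\delta_i\le\epsilon/S_i$ for $1\le i\le d-1$, and (as derived right before the lemma) $g$ being a configuration forces $0\le x_d<\tfrac1{S_d-2}$. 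The first step I would carry out is the bookkeeping of the fractional parts: for $1\le i\le d-1$ the number $j\delta_i$ is at most $(S_d-2)\epsilon/S_i<\tfrac1{(S_d-1)S_i}<\tfrac1{S_i}$, so $\{jx_i\}=1-j\delta_i$ when $S_i\mid j$ (the fractional part wraps up to just below $1$) and $\{jx_i\}=\tfrac{j\bmod S_i}{S_i}-j\delta_i$ when $S_i\nmid j$; and $jx_d<1$ gives $\{jx_d\}=jx_d$. Writing $T:=\{1\le i\le d-1:S_i\mid j\}$ and $E:=\sum_{i=1}^{d-1}j\delta_i$, this produces
\begin{align*}
\sum_{i=1}^{d}\{jx_i\}=|T|-E+\sum_{\substack{i\le d-1\\ i\notin T}}\frac{j\bmod S_i}{S_i}+jx_d ,
\end{align*}
where a short estimate using $\sum_{i=1}^{d-1}\tfrac1{S_i}=1-\tfrac1{S_d-1}$ and $\epsilon<(\tfrac1{S_d-1})^2$ gives $E<\tfrac1{S_d-1}$. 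Finally, since the Sylvester numbers are pairwise coprime, $\prod_{i\in T}S_i$ divides $j\le S_d-2<S_d-1=\prod_{i=1}^{d-1}S_i$, so $T$ is a proper subset of $\{1,\dots,d-1\}$.

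I would then split into two cases. If $T\neq\emptyset$, pick any $\ell\in\{1,\dots,d-1\}\setminus T$; then $\tfrac{j\bmod S_\ell}{S_\ell}\ge\tfrac1{S_\ell}\ge\tfrac1{S_{d-1}}>\tfrac1{S_d-1}>E$, where the strict inequality $S_{d-1}<S_d-1$ is exactly the place $d\ge3$ is needed (for $d\ge3$ one has $S_d-1=S_{d-1}\prod_{i=1}^{d-2}S_i\ge 2S_{d-1}$). Hence $\sum_{i=1}^{d}\{jx_i\}\ge|T|+\tfrac1{S_{d-1}}+jx_d-E>1$, because $|T|\ge1$, $jx_d\ge0$ and $\tfrac1{S_{d-1}}-E>0$. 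If $T=\emptyset$, i.e. $S_i\nmid j$ for all $i\le d-1$, I would introduce $h(j):=\sum_{i=1}^{d-1}\tfrac{j\bmod S_i}{S_i}+\tfrac{j}{S_d-1}$; the Sylvester identity gives $h(j)=j-\sum_{i=1}^{d-1}\lfloor j/S_i\rfloor\in\ZZ$, and since $j\bmod S_i\ge1$ for every $i\le d-1$ and $j\ge2$ we get $h(j)\ge\sum_{i=1}^{d-1}\tfrac1{S_i}+\tfrac2{S_d-1}=1+\tfrac1{S_d-1}>1$, hence $h(j)\ge2$. Substituting $\sum_{i\le d-1}\tfrac{j\bmod S_i}{S_i}=h(j)-\tfrac{j}{S_d-1}$ into the displayed identity yields
\begin{align*}
\sum_{i=1}^{d}\{jx_i\}=h(j)-j\Bigl(\tfrac1{S_d-1}-x_d\Bigr)-E\ge 2-\frac{S_d-2}{S_d-1}-E> 2-\frac{S_d-2}{S_d-1}-\frac1{S_d-1}=1,
\end{align*}
using $0\le x_d$ (so that $j(\tfrac1{S_d-1}-x_d)\le\tfrac{j}{S_d-1}\le\tfrac{S_d-2}{S_d-1}$) and $E<\tfrac1{S_d-1}$. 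In either case $Size([jg])>1$, so $[jg]\notin\Ptop$ for all $2\le j\le S_d-2$, which together with $g\in\Ptop$ gives the claim.

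The step I expect to be the real obstacle is the case $T=\emptyset$: for the "idealized" coordinates $x_i=1/S_i$ ($1\le i\le d-1$) the quantity $\sum_i\{jx_i\}$ would equal exactly $1$ for many $j$, so the bound is genuinely tight and cannot afford any slack to be lost. The argument goes through only because the long-run property provides the \emph{strict} inequality $x_i<1/S_i$ — this is precisely what upgrades "$h(j)\ge1$" to "$h(j)\ge2$" via $j\bmod S_i\ge1$ — while $\epsilon$ is chosen small enough to keep the accumulated rounding error $E$ strictly below $\tfrac1{S_d-1}$, and the hypothesis $d\ge3$ makes $\tfrac1{S_{d-1}}$ strictly dominate $E$ in the complementary case. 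Lining up these three margins simultaneously is the crux; the remaining estimates with Sylvester numbers are routine and I would keep them terse.
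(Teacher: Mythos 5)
Your proof is correct. It reaches the paper's conclusion by a somewhat different bookkeeping: the paper routes the whole argument through the jump recursion of Lemma~\ref{lem-level}, iterating it to get $Level(Kx)=K-J$ with $J$ the total number of jumps in $2x,\ldots,Kx$, and then bounds $J\le\sum_{i=1}^{d-1}\lfloor (K-1)/S_i\rfloor\le\lfloor (K-1)\sum_{i<d}1/S_i\rfloor\le K-2$ in a single stroke, with no case distinction. You instead compute each fractional part $\{jx_i\}$ explicitly (as $1-j\delta_i$ or $\tfrac{j\bmod S_i}{S_i}-j\delta_i$) and split on whether $T=\{i\le d-1:S_i\mid j\}$ is empty, using $h(j)\in\ZZ$ to force $h(j)\ge2$ in the delicate $T=\emptyset$ case. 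Both arguments rest on the same three margins: the Sylvester identity $\sum_{i<d}1/S_i=1-\tfrac{1}{S_d-1}$, the smallness of $\epsilon$ (keeping the accumulated error $E$ below $\tfrac{1}{S_d-1}$ and forcing $x_0,x_d<\tfrac{1}{S_d-2}$), and an integrality trick (your $h(j)\in\ZZ$ plays the role of the paper's integral $Level$). The paper's version is shorter because Lemma~\ref{lem-level} together with $\sum\lfloor a_i\rfloor\le\lfloor\sum a_i\rfloor$ absorbs the case analysis; yours is more self-contained (it never invokes Lemma~\ref{lem-level}) and makes the near-tightness of the $T=\emptyset$ case visible, which is genuinely informative even if it costs a few extra lines.
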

\begin{proof}
    Let $x \in [0,1]^{d+1}$ such that $x_0 0 + x_1 B_1 + \ldots + x_d B_d = g$ with $\sum_{i=0}^d x_i = 1$. We consider the level $Level(Kx)$ of multiplicities of $x$. Recall that $Level(Kx) = 1$ if and only if $\{ K g \} \in \Pi$ is a configuration. Hence, it remains to prove that that $Level(Kx) > 1$ for every $1 < K \leq S_d-2$.

    By Lemma \ref{lem-level} we know that level $Level(Kx) = Level((K-1)x) - J_K +1$, where $J_K$ is the number of jumps at $K$. This implies by induction that $Level(Kx) = K -J$, where $J$ is the total sum of all jumps in $2x, \ldots , Kx$.  Using that $\sum_{i=1}^{d-1} x_i > 1 - \frac{1}{S_d-2}$, we obtain that $x_0,x_d < \frac{1}{S_d -2}$ and hence $Kx_0, Kx_d < 1$ for $K \leq S_d -2$. This means that component $0$ and component $d$ do not jump in $2x, \ldots , (S_d -2)x$.
    
\begin{observation} For every $1 \leq i \leq d-1$, component $i$ jumps at $1+S_i, 1 + 2S_i,1 + 3S_i, \ldots , 1 + \lfloor \frac{S_d-2}{S_i} \rfloor S_i$. \end{observation}
    Since $(1-\epsilon) \frac{1}{S_i} \leq x_i <  \frac{1}{S_i}$ for $1 \leq i <d$ we know on the one hand that $M S_i x_i < M$ and on the other hand 
    $(1 + M S_i) x_i \geq (1-\epsilon) (\frac{1}{S_i} + M) > M$ as $\epsilon (\frac{1}{S_i} + M) \leq \epsilon (\frac{1+ (S_d-2)}{S_i}) < \frac{S_d-1}{(S_d-1)^2} \frac{1}{S_i} < \frac{1}{S_i}$ for $i \leq d-1$ and $M \leq \frac{S_d-2}{S_i}$. 
    Hence component $i$ jumps at $1+S_i$ from $0$ to $1$ and at $1+2S_i$ from $1$ to $2$ and so on. The total number of jumps $J_K (i)$ in component $i$ can therefore be bounded by $1+ J_K(i) S_i \leq K$ and hence $J_K(i) \leq \lfloor \frac{K-1}{S_i} \rfloor$.

    The total number of jumps $J$ up to $K \leq S_d -2$ in components $0, \ldots , d$  sums up to
    \begin{align*}
        J = \sum_{i=0}^{d} \lfloor \frac{K-1}{S_i} \rfloor = \sum_{i=1}^{d-1} \lfloor \frac{K-1}{S_i} \rfloor \leq \lfloor (K-1) \sum_{i=1}^{d-1} \frac{1}{S_i} \rfloor
    \end{align*}
    Since $\sum_{i=1}^{d-1} \frac{1}{S_i} = 1 - \frac{1}{S_d-1}$ we obtain for $K \leq S_d -2$ 
    \begin{align*}
    J \leq \lfloor (K-1) \sum_{i=1}^{d-1} \frac{1}{S_i} \rfloor = \lfloor (K-1) (1 - \frac{1}{S_d-1}) \rfloor  \leq K-2
    \end{align*}
    which implies that $Level(Kx) = K - J \geq K- (K-2) = 2$ and therefore $\{ Kg \} \not\in \Ptop$ for $K=2 , \ldots , S_d-2$.
%    Let $\ell$ be the $\ell$-th sylvester number such that $S_{\ell} \leq K < S_{\ell+1}$. 
%    \begin{align*}
%    Level(Kx) = \sum_{i=1}^d \{ K x_i \} = \sum_{i=1}^\ell \{ K x_i \} +  \sum_{i=\ell+1}^d \{ K x_i \} > \sum_{i=1}^{\ell-1} (1-\epsilon) \frac{1}{S_i} - K \epsilon
%    \end{align*}
%    To give a bound for the first sum, we will use that for every $\epsilon < \frac{1}{S_d}$ the fractional part $\{K x_i \} \geq \{ K (1-\epsilon) \frac{1}{S_i} \} \geq \frac{1}{S_i} - K \epsilon$.
%    \begin{align*}
%    \sum_{i=1}^\ell \{ K x_i \} \geq \sum_{i=1}^\ell \frac{1}{S_i} - K \epsilon = 1 - \frac{1}{S_{\ell+1}-1} - K \epsilon
%    \end{align*}
    
%    \begin{align*}
%    \sum_{i=\ell+1}^d \{ K x_i \}  \stackrel{K < x_i}{=}  \sum_{i=\ell+1}^d K x_i \geq (1-\epsilon) K \sum_{i=\ell+1}^d S_i
%    \\ = (1-\epsilon) K (\frac{1}{S_{\ell+1}-1}-\frac{1}{S_{d}-1} )
%    \end{align*}
%    Combined this gives\{ \lambda_1 B_1, \ldots , \lambda_d B_d \mid \lambda_i \in [0,1) \}
%    \begin{align*}
%    Level(Kx) \geq 1 - \frac{1}{S_{\ell+1}-1} - K \epsilon + (1- \epsilon)K (\frac{1}{S_{\ell+1}-1}-\frac{1}{S_{d}-1} ) 
%    \end{align*}
\end{proof}

\begin{lemma} \label{lem-remainder}
    An element $g \in G(\Pi)$ is a full generator if and only if for all $1 \leq i \leq d$
    \begin{align*}
        g_i \equiv - R_i^{-1} \mod a_i.
    \end{align*}
\end{lemma}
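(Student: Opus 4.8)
The plan is to translate the defining property of a full generator --- a prescribed fractional part of its size --- into a single congruence modulo $\det(B)$, and then to split that congruence into coordinate-wise pieces via the Chinese remainder theorem.

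First I would recall that, by definition together with Lemma \ref{lem-size}, an element $g \in G(\Pi)$ (identified with its unique integral representative in $\Pi \cap \ZZ^d$, so $0 \le g_i < a_i$) is the full generator precisely when $Size(g) = z + \frac{\det(B)-1}{\det(B)}$ for some $z \in \ZZgeq$, that is, when $\det(B)\cdot Size(g) \equiv -1 \pmod{\det(B)}$. Since $Size(g) = \sum_{i=1}^d g_i/a_i$ and $R_i = \det(B)/a_i$, we have $\det(B)\cdot Size(g) = \sum_{i=1}^d g_i R_i$, so $g$ is a full generator if and only if
\begin{align*}
\sum_{i=1}^d g_i R_i \equiv -1 \pmod{\det(B)}.
\end{align*}

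Next, assuming as before that $a_1,\dots,a_d$ are pairwise coprime, we have $\det(B) = \prod_i a_i$, so by the Chinese remainder theorem (Theorem \ref{chinese}) the displayed congruence holds modulo $\det(B)$ if and only if it holds modulo $a_j$ for every $1 \le j \le d$. Fixing $j$, for $i \neq j$ we have $a_j \mid R_i$, hence $R_i \equiv 0 \pmod{a_j}$ and the sum collapses to $\sum_i g_i R_i \equiv g_j R_j \pmod{a_j}$. Moreover $\gcd(R_j, a_j) = 1$, so $R_j$ has an inverse $R_j^{-1}$ in $\ZZ/a_j\ZZ$ (Theorem \ref{inverse}), and $g_j R_j \equiv -1 \pmod{a_j}$ is equivalent to $g_j \equiv -R_j^{-1} \pmod{a_j}$. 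Combining over all $j$ gives exactly the stated characterization; conversely this system of congruences has a unique solution with $0 \le g_i < a_i$, so it indeed singles out one element of $G(\Pi)$.

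There is no real obstacle here: the only point that needs care is the first reduction, where one must know that the notion of "full generator" is well defined --- i.e. that exactly one element of $G(\Pi)$ realizes the fractional size $\frac{\det(B)-1}{\det(B)}$ --- which is precisely Lemma \ref{lem-size}. Everything else is a routine application of the Chinese remainder theorem together with the divisibility pattern $a_j \mid R_i$ for $i \neq j$ and the invertibility of $R_j$ modulo $a_j$.
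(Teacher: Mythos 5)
Your proof is correct and follows essentially the same path as the paper's: translate the defining size condition into $\sum_i g_i R_i \equiv -1 \pmod{\det(B)}$, split by the Chinese remainder theorem, use $a_j \mid R_i$ for $i \neq j$ to collapse the sum, and invert $R_j$ modulo $a_j$. Your closing remark about uniqueness (via Lemma~\ref{lem-size} and CRT) makes the ``if and only if'' a bit more explicit than the paper, which leaves the converse implicit in the chain of equivalences, but it is the same argument.
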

\begin{proof}
Consider the full generator $g$ of a group $G(\PP)$. By definition of the full generator, we obtain that there exists a $z \in \ZZgeq$ such that 
\begin{align*}
Size(g) = \sum_{i=1}^d s_i g_i = \sum_{i=1}^d \frac{g_i}{a_i} = z + \frac{det(B)-1}{det(B)} 
\end{align*}
and hence
\begin{align*}
    det(B)-1 + z \cdot det(B) = det(B) \sum_{i=1}^d \frac{g_i}{a_i} = \sum_{i=1}^d R_i g_i
\end{align*}
By definition of the modulo operation this equation is equivalent to
\begin{align} \label{cong-1}
    \sum_{i=1}^d R_i g_i \equiv det(B) -1 \mod (det(B)).
\end{align}
As $det(B) -1 \equiv -1 \mod a_i$ for each $1 \leq i \leq d$, we obtain by the Chinese remainder Theorem \ref{chinese} (assuming that all $a_i$'s are coprime), that congruence (\ref{cong-1}) is equivalent to the following system of congruences:
\begin{align*}
   \sum_{i=1}^d R_i g_i \equiv - 1 \mod a_i \qquad \text{ for $1 \leq i \leq d$}
\end{align*}
As $R_i \equiv 0 \mod a_j$ for any $i \not= j$, we obtain that $\sum_{i=1}^d R_i g_i \equiv R_j g_j \mod a_j$ and hence
\begin{align*}
        g_i \equiv - R_i^{-1} \mod a_i.
    \end{align*}
\end{proof}
Sylvester's sequence $S_i$ grows double exponentially by approximately $S_i \approx 1.264^{2^{i}}$ and therefore $S_i = 2^{2^{\Omega (i)}}$.
It remains to prove the existence of sizes $s_1, \ldots , s_d$ with group $G(\Pi)$ such that the full generator of $G(\Pi)$ has the long-run property. The following theorem concludes the proof of a double exponential lower bound.
\lowerbound*
\begin{proof}
Given parallelepiped $\Pi = \{ x_1 B_1 + \ldots + x_d B_d \mid x_i \in [0,1) \}$ with configurations $B_i = (0, \ldots ,0 , a_i , 0 , \ldots , 0)^T$. Assume there are sizes $s_i$ such that group $G(\Pi)$ with full generator $g \in \Ptop$ has the long-run property. Then $K$ times configuration $\begin{pmatrix} 1 \\ g \end{pmatrix}$ is by Lemma \ref{lem-unique} the unique representation of the vector $b = \begin{pmatrix} K \\ Kg \end{pmatrix} \in int.cone(\Ptop' \cap \ZZ^d)$ where $\Ptop' = Conv(B'_0, \ldots , B'_d)$ with $B'_0 = (1, 0, \ldots ,0)^T$ and $B'_i = \begin{pmatrix} 1 \\ B_i \end{pmatrix}$.
According to Lemma \ref{lem-longrun} this implies a vertex distance of
$Dist(b) = S_d -2 = 2^{2^{\Omega(d)}}$.
%According to Lemma \ref{lem-longrun} it remains to prove the existence of a (cyclic) group $G(\PP)$ which has a generating element $g$ with the long-run property.
%Since $(S_d-2)g \in Cone(B)$ , there exists a $\lambda^f \in  \QQgeq^d$ such that $(S_d-2)g= \lambda^{f}_1 B_1 + \ldots + \lambda^{f}_d B_d$. By Lemma \ref{lem-longrun} we know that the unique optimal integral solution $\lambda \in \ZZgeq^{S\cap \ZZ}$ of is:
%\begin{align*}
%\lambda_s = \begin{cases} S_d -2 & \text{ if $s = g$}\\
%0 & \text{ otherwise}\end{cases}
%\end{align*}
%This implies a distance of $\nor{\lambda^f} + \nor{\lambda} \leq 2 S_d - 5$.
Therefore, it remains to prove the existence of sizes $s_1, \ldots , s_d$ with group $G(\Pi)$ such that the full generator $g$ of $G(\Pi)$ has the long-run property.
In the following we give an inductive construction of the sizes $s_i = \frac{1}{a_i}$:

First, choose $a_1$ arbitrarily such that there is an $m_1$ with $(1-\epsilon) \frac{1}{S_1} \leq \frac{m_1}{a_1} < \frac{1}{S_1}$. This is possible for every $a_1 > \frac{S_1}{\epsilon}$ that is not a multiple of $S_1 = 2$. In this case $m_1$ can be chosen by $m_1 = \lfloor \frac{a_1}{S_1} \rfloor$ and we obtain $\frac{\lfloor a_1 / S_1 \rfloor}{a_1} < \frac{1}{S_1}$ and $\frac{\lfloor a_1 / S_1 \rfloor}{a_1} \geq \frac{(a_1 / S_1) -1}{a_1} \geq \frac{1}{S_1} - \frac{1}{a_1 S_1} \geq (1-\epsilon) S_1$. Additionally, we assume w.l.o.g. that $m_1$ and $a_1$ are coprime.

For $1 \leq i < d$ choose $a_{i+1}$ such that there exists an $m_{i+1}$ with $(1-\epsilon) \frac{1}{S_{i+1}} \leq \frac{m_{i+1}}{a_{i+1}} < \frac{1}{S_{i+1}}$. The existence of the $m_{i+1}$ can be shown for any $a_{i+1} > \frac{S_{i+1}}{\epsilon}$ that is not a multiple of $S_{i+1}$ by the same argument as above for $m_1$. Additionally we choose $a_{i+1}$ such that the following conditions hold:
\begin{align} \label{def-a1}
a_{i+1} &\equiv (\prod_{j=1}^{i-1}a_j)^{-1} \cdot (-m_i)^{-1} \mod a_i
\end{align}
\begin{align} \label{def-a2}
a_{i+1} &\equiv 1 \mod a_j \text{ for $j = 1, \ldots , i-1$}
\end{align}
Remark the following points, where we use the fact that $gcd(a,b) = gcd(a \mod b , b)$ for numbers $a,b \in \ZZ$.
\begin{itemize}
\item The inverse element of $\prod_{j=1}^{i-1}a_j$ and $-m_i$ in $\ZZ / a_i \ZZ$ exists since $a_1, \ldots a_j$ are coprime to $a_i$ and $m_i$ is coprime to $a_i$ (see Theorem \ref{inverse}),
\item since $a_1, \ldots a_i$ are coprime, by the chinese remainder theorem \ref{chinese}, there exists a unique element $a_{i+1} \mod (\prod_{j=1}^i a_j)$ satisfying the above inequalities,
\item condition (\ref{def-a1}) implies that $a_{i+1}$ is coprime to $a_i$ as $m_i$ is coprime to $a_i$ and $\prod_{j=1}^{i-1}a_j$ is coprime to $a_i$ (coprimeness carries over to the inverse),
\item condition (\ref{def-a2}) implies that $a_{i+1}$ is coprime to $a_1, \ldots , a_{i-1}$.
%\item there exists an $m_{i+1}$ (coprime to $a_{i+1}$) such that $(1-\epsilon) \frac{1}{S_{i+1}} \leq \frac{m_{i+1}}{a_{i+1}} < \frac{1}{S_{i+1}}$.
\end{itemize}

\begin{claim}[1]
The full generator $g$ of the constructed group $G(\PP)$ has the long-run property.
\end{claim}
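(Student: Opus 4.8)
The plan is to identify the coordinates of the full generator $g$ explicitly with the integers $m_1,\dots,m_{d-1}$ fixed during the construction, since those were chosen precisely so that $(1-\epsilon)\tfrac{1}{S_i}\le \tfrac{m_i}{a_i}<\tfrac{1}{S_i}$. Writing $g=x_1B_1+\dots+x_dB_d$ with $B_i=(0,\dots,0,a_i,0,\dots,0)^T$, we have $g_i=a_ix_i$ and $x_i\in[0,1)$, so the long-run property is exactly the statement $(1-\epsilon)\tfrac{1}{S_i}\le \tfrac{g_i}{a_i}<\tfrac{1}{S_i}$ for $1\le i\le d-1$. Because $0\le g_i<a_i$ and also $0<m_i<a_i$, it is enough to prove the single congruence $g_i\equiv m_i \pmod{a_i}$ for each such $i$: the range constraint then forces $g_i=m_i$, hence $x_i=m_i/a_i$ and the required bounds.

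By Lemma~\ref{lem-remainder}, the full generator satisfies $g_i\equiv -R_i^{-1}\pmod{a_i}$ where $R_i=\prod_{j\ne i}a_j$, so it suffices to show $R_i\equiv(-m_i)^{-1}\pmod{a_i}$ (all inverses in sight exist because $a_1,\dots,a_d$ are pairwise coprime and each $m_i$ is coprime to $a_i$, as recorded in the construction). I would compute $R_i\bmod a_i$ directly from the two defining congruences for the $a_k$'s. Split $R_i=\big(\prod_{j=1}^{i-1}a_j\big)\cdot a_{i+1}\cdot\big(\prod_{k=i+2}^{d}a_k\big)$. For every $k$ with $i+2\le k\le d$, condition~(\ref{def-a2}) (applied with index $k-1\ge i+1$) gives $a_k\equiv 1\pmod{a_i}$, so the last product is $\equiv 1$. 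Condition~(\ref{def-a1}) with index $i$ gives $a_{i+1}\equiv\big(\prod_{j=1}^{i-1}a_j\big)^{-1}(-m_i)^{-1}\pmod{a_i}$. Multiplying the three pieces, the factor $\prod_{j=1}^{i-1}a_j$ cancels its inverse and we obtain $R_i\equiv(-m_i)^{-1}\pmod{a_i}$, as wanted. For $i=1$ the empty products equal $1$ and the same computation goes through.

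Putting the two steps together gives $-R_i^{-1}\equiv m_i$, hence $g_i\equiv m_i\pmod{a_i}$ and therefore $g_i=m_i$ for $1\le i\le d-1$, so $g$ has the long-run property. I do not expect any genuine difficulty here; the real content is the modular evaluation of $R_i\bmod a_i$, and conditions~(\ref{def-a1}) and~(\ref{def-a2}) were engineered precisely so that this evaluation returns $(-m_i)^{-1}$. The only points that need a little attention are the bookkeeping of which inverses are well defined (covered by the coprimality assertions established when the $a_k$'s were chosen) and the observation that the a priori congruence $g_i\equiv m_i\pmod{a_i}$ upgrades to an equality only because both $g_i$ and $m_i$ are known to lie in $[0,a_i)$.
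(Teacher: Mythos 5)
Your proof is correct and follows essentially the same route as the paper: invoke Lemma~\ref{lem-remainder}, reduce $R_i \bmod a_i$ via the two construction congruences so that $g_i\equiv m_i\pmod{a_i}$, and upgrade to $g_i=m_i$ using that both lie in $[0,a_i)$. (As a minor aside, your bookkeeping is actually slightly more careful than the paper's displayed chain, which appears to have the reference labels $(\ref{def-a1})$ and $(\ref{def-a2})$ swapped over the two $\equiv$ steps; the underlying computation is the same.)
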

To prove that $g = x_1 B_1 + \ldots + x_d B_d$ has the long-run property, we show for all $1 \leq i < d$ that $\frac{1}{S_i} (1-\epsilon) \leq \frac{g_i}{a_i} < \frac{1}{S_i}$. 
By Lemma \ref{lem-remainder}
\begin{align*}
    g_i \equiv - R_i^{-1} \mod a_i
\end{align*}
By construction of the $a_i$ we obtain for $g_1, \ldots , g_{d-1}$ the following congruences $\mod a_j$:
\begin{align*}
    g_i &\equiv - \left( \prod_{j=1}^{i-1} a_j \cdot \prod_{j=i+1}^{d} a_j \right)^{-1} \stackrel{(\ref{def-a1})}{\equiv} - \left( \prod_{j=1}^{i-1} a_j \cdot a_{i+1} \right)^{-1} \stackrel{(\ref{def-a2})}{\equiv} - \left( (\prod_{j=1}^{i-1} a_j) \cdot (\prod_{j=1}^{i-1} a_j)^{-1} \cdot (-m_i)^{-1} \right)^{-1}
    \\ &\equiv m_i \mod a_i 
\end{align*}
Since for every $\delta \in \Pi$ we have that $\delta_i < a_i$, we know $g_i = m_i$. By definition of $m_i$ we obtain $(1-\epsilon)\frac{1}{S_i} \leq x_i = \frac{m_i}{a_i} < \frac{1}{S_i}$, which proves Claim 1.

\begin{claim}[2]
The full generator $g$ is a configuration.
\end{claim}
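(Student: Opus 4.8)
The plan is to reduce the claim to showing that the nonnegative integer $z$ occurring in the full‑generator identity $Size(g)=z+\frac{det(B)-1}{det(B)}$ is in fact $0$. Recall that $g$ is a configuration iff $g\in\Ptop$, i.e.\ iff $Size(g)=\sum_{i=1}^d s_i g_i=\sum_{i=1}^d x_i\le 1$. Since by the definition of the full generator (supplied by Lemma~\ref{lem-size}) we have $Size(g)=z+1-\frac{1}{det(B)}$ with $z\in\ZZgeq$, and $1-\frac{1}{det(B)}<1$ whereas $2-\frac{1}{det(B)}>1$, the inequality $Size(g)\le 1$ is equivalent to $z=0$. Hence it suffices to bound $Size(g)$ strictly below $2-\frac{1}{det(B)}$.

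First I would plug in Claim~1: since $g$ has the long‑run property, $g_i=m_i$ and $x_i=\frac{m_i}{a_i}<\frac{1}{S_i}$ for $1\le i\le d-1$, so the Sylvester identity $\sum_{i=1}^{d-1}\frac{1}{S_i}=1-\frac{1}{S_d-1}$ yields
\[
 \sum_{i=1}^{d-1} x_i < 1-\frac{1}{S_d-1}.
\]
Because $g\in\Pi$ forces $0\le g_d<a_d$, i.e.\ $x_d<1$, adding the two estimates gives $Size(g)=\sum_{i=1}^d x_i<2-\frac{1}{S_d-1}$.

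Next I would compare $\frac{1}{S_d-1}$ with $\frac{1}{det(B)}$. Every $a_i$ in the construction was chosen larger than $\frac{S_i}{\epsilon}>S_i>1$, so $det(B)=\prod_{i=1}^d a_i\ge a_d>S_d>S_d-1$, whence $\frac{1}{det(B)}<\frac{1}{S_d-1}$ and therefore
\[
 Size(g)<2-\frac{1}{S_d-1}<2-\frac{1}{det(B)}.
\]
By the reduction in the first paragraph this forces $z=0$, so $Size(g)=1-\frac{1}{det(B)}<1$ and $g\in\Ptop$, i.e.\ $g$ is a configuration.

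I do not expect a real obstacle here; the only point that needs a little care is the reduction step — one must notice that it is enough to exclude $z\ge 1$ and that the crude bound $x_d<1$ on the last coordinate already suffices, so that the sharper estimate $x_d<\frac{1}{S_d-2}$ (which in the write‑up is only obtained \emph{after} $g$ is known to be a configuration) is not needed, avoiding any circularity.
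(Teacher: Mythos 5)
Your proof is correct and follows essentially the same approach as the paper: reduce to showing $z=0$ (the paper phrases this contrapositively as assuming $Level(x)>1$ and deriving a contradiction), use the long-run bound $x_i<\frac{1}{S_i}$ for $i<d$ together with the crude bound $x_d<1$, and then compare $\frac{1}{S_d-1}$ with $\frac{1}{det(B)}$. The only cosmetic difference is how you obtain $det(B)>S_d-1$: you invoke the construction's choice $a_d>\frac{S_d}{\epsilon}>S_d$, whereas the paper derives $a_i\geq S_i+1$ for $i<d$ from the long-run property itself (since $g_i\geq 1$ and $\frac{g_i}{a_i}<\frac{1}{S_i}$) and uses $\prod_{i=1}^{d-1}(S_i+1)>\prod_{i=1}^{d-1}S_i=S_d-1$; the paper's variant has the minor advantage of not depending on the specific magnitude chosen for $a_d$ in the construction, but both are valid.
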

Suppose $Level(x) > 1$, then we know by Lemma \ref{lem-size} that $\sum_{i=1}^d x_i = z + \frac{det(B)-1}{det(B)}$ for some $z \in \ZZ_{\geq 1}$.
\begin{align*}
\sum_{i=1}^d x_i < \sum_{i=1}^{d-1} \frac{1}{S_i} + x_d = 
(1 - \frac{1}{S_d -1}) + x_d < (1 - \frac{1}{S_d -1}) + 1
\end{align*}
Since $x_i = \frac{g_i}{a_i} < \frac{1}{S_i}$ for $1 \leq i <d$, we know that $a_i \geq S_i +1$ and hence $det(B) > \prod_{i=1}^{d-1} a_i > \prod_{i=1}^{d-1} (S_i +1) > S_d -1$ which implies:
\begin{align*}
\sum_{i=1}^d x_i < (1 - \frac{1}{det(B)}) + 1 = 1 + \frac{det(B)-1}{det(B)}
\end{align*}
This is a contradiction to $Level(x) > 1$.
\end{proof}

\subsection{Relation between $Dist$ and the IRUP} \label{sec-mirup}
In this section we study briefly the connection between the vertex distance and the modified integer roundup property (MIRUP) which is defined in the following.
Let $\Ptop = \{ x \in \QQgeq^d \mid s^T x \leq 1 \}$ be the knapsack polytope for given sizes $s_1, \ldots s_d \in (0,1]$. For given multiplicities $a_1 , \ldots , a_d $, a packing of the items into a minimum number of bins is given by a solution of the following ILP:
\begin{align} \label{ilpf}
  \min \{ \nor{\lambda}_1 \mid \sum_{p \in \Ptop \cap \ZZ^d} \lambda_p p = \pp , \lambda \in \ZZgeq^d \}.
\end{align}
The relaxed linear program (LP) is defined by
\begin{align} \label{lpf}
  \min \{ \nor{\lambda}_1 \mid \sum_{p \in \Ptop\cap \ZZ^d} \lambda_p p = \pp , \lambda \in \QQgeq^d \}. 
\end{align}
Let $\lambda^*$ be an optimal solution of the ILP (\ref{ilpf}) and let $\lambda^f$ be an optimal solution of the relaxed linear program (\ref{lpf}), then the integrality gap of an instance $(s,b)$ is defined by:
\begin{align*}
    \nor{\lambda}_1 - \nor{\lambda^f}_1
\end{align*}
%In the following we want to investigate the relation between $Dist(b)$ and the integrality gap.
A well known conjecture by Scheithauer and Terno \cite{Scheithauer199793} concerning the integrality gap for bin packing instance is that for any instance $I$, we have that $\nor{\lambda^*}_1 \leq \lceil \nor{\lambda^f}_1 \rceil +1$ which is the so called modified integer roundup property (MIRUP). The integer roundup property (IRUP) is fulfilled if $\nor{\lambda^*}_1 \leq \lceil \nor{\lambda^f}_1 \rceil$. In general, bin packing instances where die IRUP is not fulfilled appear rarely. In the literature those kind of instances are studied and constructions of instances are given where die IRUP does not hold (see \cite{Scheithauer199793}, \cite{caprara2014}).
In the following we show that a bin packing instance with a large vertex distance $Dist(b)$ implies the existence of many subinstances where the IRUP does not hold. Specifically, we show the following theorem:
\begin{theorem}
Given a bin  packing instance $(s,b)$ corresponding to a vector $b \in int.cone(\Ptop \cap \ZZ^d)$ with vertex distance $Dist(b)$ and let $\lambda \in \ZZ^{\Ptop \cap \ZZ^d}$ be a solution with $\sum_{p \in \Ptop \cap \ZZ^d} \lambda_p p = b$ and vertex distance $\sum_{p \in (\Ptop \cap \ZZ^d) \setminus V_I} \lambda_p = Dist(b)$. For every $\gamma \in (\Ptop \cap \ZZ^d) \setminus V_I$ with $\lambda_\gamma = d+Z$ for some $Z \in \ZZgeq$, there exist at least $Z$ instances where the IRUP does not hold.
\end{theorem}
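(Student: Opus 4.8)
The plan is to take the $Z$ instances to be the multiples $(d{+}1)\gamma,(d{+}2)\gamma,\dots,(d{+}Z)\gamma$ of the configuration $\gamma$, and to show that each bin packing instance $m\gamma$ with $d+1\le m\le d+Z$ violates the IRUP, i.e.\ $\OPT(m\gamma)>\lceil \LP(m\gamma)\rceil$, where $\OPT(m\gamma)=\nor{\lambda^*}_1$ and $\LP(m\gamma)=\nor{\lambda^f}_1$ are the optimal integral and fractional values of the instance $m\gamma$. The argument splits into a lower bound on $\OPT(m\gamma)$ coming from the minimality of the vertex distance of $\lambda$, and an upper bound on $\LP(m\gamma)$ coming from the fact that $\gamma$ is not a vertex.

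First I would prove $\OPT(m\gamma)\ge m$. Suppose some integral packing $\mu\in\ZZgeq^{\Ptop\cap\ZZ^d}$ of $m\gamma$ used $\nor{\mu}_1<m$ configurations. Since $\lambda_\gamma=d+Z\ge m$, the vector $\lambda':=\lambda-m\,e_\gamma+\mu$ is a non-negative integral packing of $b$ (indeed $\sum_p\lambda'_p p=b-m\gamma+m\gamma=b$), and, because $\gamma\notin V_I$,
\[
\sum_{p\in(\Ptop\cap\ZZ^d)\setminus V_I}\lambda'_p \;=\; Dist(b)-m+\sum_{p\in(\Ptop\cap\ZZ^d)\setminus V_I}\mu_p\;\le\; Dist(b)-m+\nor{\mu}_1\;<\;Dist(b),
\]
contradicting the assumed minimality of the vertex distance attained by $\lambda$. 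Hence every integral packing of $m\gamma$ uses at least $m$ configurations, and since $m$ copies of the configuration $\gamma$ pack $m\gamma$, we get $\OPT(m\gamma)=m$.

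It remains to bound $\LP(m\gamma)\le m-1$ for $m\ge d+1$. The idea is to write $\gamma$ via Caratheodory's theorem as a convex combination $\gamma=\sum_{i=0}^d x_iB_i$ of at most $d+1$ vertices of $\Ptop_I$, where for the knapsack polytope one may always take the origin as one of them, say $B_0=0$; since $\gamma\notin V_I$ this combination is non-trivial, so $x_0>0$. The fractional solution that uses $m x_i$ copies of $B_i$ for $i=1,\dots,d$ packs $m\gamma$ exactly (the copies of the origin contribute nothing) at cost $\sum_{i=1}^d m x_i=m(1-x_0)$. Thus if one can choose the decomposition so that $x_0\ge \tfrac1{d+1}$, then $\LP(m\gamma)\le m(1-x_0)\le m-\tfrac{m}{d+1}\le m-1$ for $m\ge d+1$, so $\lceil\LP(m\gamma)\rceil\le m-1<m=\OPT(m\gamma)$, i.e.\ the IRUP fails for $m\gamma$; ranging over $m=d+1,\dots,d+Z$ gives the $Z$ instances. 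The step I expect to be the real obstacle is exactly this lower bound $x_0\ge\tfrac1{d+1}$ (equivalently, ruling out that $\gamma$ lies essentially on the facet $s^Tx=1$, in which case $m\gamma$ has fractional value $\approx m$ and no gap): one has to exploit that $\gamma$ appears with the large multiplicity $d+Z$ in a vertex-distance-minimal solution — an "almost full" configuration only admits very few integral packings of its small multiples by the level/jump analysis of Lemma~\ref{lem-level} and Theorem~\ref{thm-main1}, and hence cannot be used that often — to force $\gamma$ away from the boundary and thereby control $x_0$.
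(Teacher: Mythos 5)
Your approach diverges from the paper's in a way that turns out to be fatal, and you have correctly put your finger on exactly where. You take the $Z$ instances to be the multiples $m\gamma$ for $m=d{+}1,\dots,d{+}Z$, prove $\OPT(m\gamma)=m$ via the minimality of the vertex distance of $\lambda$ (this part is correct and mirrors the paper's computation of $Dist(m\gamma)=m$), and then need $\lceil\LP(m\gamma)\rceil<m$, which you reduce to showing $x_0\ge \tfrac1{d+1}$ where $x_0=1-s^T\gamma$ is the weight on the origin in the Carath\'eodory decomposition of $\gamma$. But this bound is false, and it cannot be forced by the large multiplicity of $\gamma$: the paper's own lower-bound construction in Section~\ref{lower_bound} produces a vertex-distance-minimal solution in which a single $\gamma$ (the \emph{full generator} $g$, with $\operatorname{Size}(g)=\tfrac{\det(B)-1}{\det(B)}$, i.e.\ $x_0=\tfrac1{\det(B)}$) appears with multiplicity $S_d-2=2^{2^{\Omega(d)}}\gg d+1$. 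For that $\gamma$ one has $\LP(m\gamma)=m(1-x_0)=m(1-\tfrac1{\det B})$, whose ceiling equals $m$ for all $m<\det B$, so the multiples $m\gamma$ \emph{do} satisfy IRUP. Your heuristic that ``an almost full configuration cannot appear often in a vertex-distance-minimal solution'' is exactly backwards---almost-full configurations are the ones that generate large vertex distance.

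The paper avoids this by choosing different instances: not $m\gamma$ itself, but its \emph{fractional part} $[m\gamma]=\sum_i\{\lambda^f_i\}B_i$ relative to a basic feasible LP solution $\lambda^f$ supported on $d$ vertices. For this instance $\lceil\LP([m\gamma])\rceil\le\lceil\|\{\lambda^f\}\|_1\rceil\le d$; if $[m\gamma]$ satisfied IRUP one could pack it into at most $d$ bins, glue that packing to the integral part $\sum_i\lfloor\lambda^f_i\rfloor B_i$ (all vertices, hence free for vertex distance), and obtain a packing of $m\gamma$ with vertex distance $\le d<m=Dist(m\gamma)$, contradicting the minimality argument you already have. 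Distinctness of the $Z$ instances $[(d{+}1)\gamma],\dots,[(d{+}Z)\gamma]$ follows from a similar minimality argument. So the correct fix is not to sharpen your $x_0$ bound but to replace the instances $m\gamma$ with the residuals $[m\gamma]$, which converts the $\OPT$-vs-$\LP$ comparison into the vertex-distance comparison you have already established.
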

\begin{proof}
Given an instance $b \in int.cone(\Ptop \cap \ZZ^d)$ with $Dist(b)$. Then there exists an integral optimal solution $\lambda \in \ZZ^{\Ptop \cap \ZZ^d}$ with $\sum_{p \in \Ptop \cap \ZZ^d} \lambda_p p = b$ and $\sum_{p \in (\Ptop \cap \ZZ^d) \setminus V_I} \lambda_p = Dist(b)$. 
We consider for a $\gamma \in (\Ptop \cap \ZZ^d) \setminus V_I$ with $\lambda_\gamma = d+Z$ for some $Z \in \ZZ_{\geq 1}$ the instances $(d+1) \gamma , \ldots , (d+Z) \gamma$. Let $b' \in int.cone(\Ptop \cap \ZZ^d)$ be the vector corresponding to a multiplicity $(d+Z') \gamma$ for a $Z' \leq Z$. Note that by definition of $b'$, we have that $Dist(b') = d+ Z'$, as $\gamma$ is chosen from $(\Ptop \cap \ZZ^d) \setminus V_I$ and the existence of a solution $\lambda''$ for $b'$ with smaller vertex distance would imply a small vertex distance for $b$.
Since $b' \in Cone(\Ptop \cap \ZZ^d)$, there exist a basic feasible solution ${\lambda}^{f} \in \QQgeq^d$ corresponding to vectors $B_1, \ldots , B_d \in \Ptop \cap \ZZ^d$ of LP (\ref{lpf}) with $b' = {\lambda}^{f}_1 B_1 + \ldots+ {\lambda}^{f}_d B_d$ and $\nor{\lambda^f}_1 \leq d+Z'$. Using Caratheodory's theorem, we can assume w.l.o.g. that $B_1, \ldots , B_d$ are vertices.

%$B_1, \ldots , B_d \in V_I$ such that $b' \in Cone(B_1, \ldots , B_d)$ and hence there exists a ${\lambda}^{f} \in \QQgeq^d$ with $b' = {\lambda}^{f}_1 B_1 + \ldots+ {\lambda}^{f}_d B_d$ with $\nor{{\lambda}^{f}}_1 \leq \nor{\lambda'}_1$. 

{\bf Claim}: The vector $[ b' ] = \{ {\lambda}^{f}_1 \} B_1 + \ldots + \{ {\lambda}^{f}_d \} B_d$ does not fulfill the integer roundup property.\\
Suppose the roundup property for $[ b' ]$ is fulfilled, then there exists a packing of instance $[ b' ]$ into $\lceil \nor{ \{{\lambda}^{f} \} }_1 \rceil$ bins. Using the decomposition of $b' = {B \lambda^{f}}^{int} + [ b' ]$ into an integral part ${B \lambda^f}^{int} = \lfloor {\lambda}^{f}_1 \rfloor B_1 + \ldots + \lfloor {\lambda}^{f}_d \rfloor B_d$ and the fractional part $[ b' ]$, we obtain a packing for $b'$ into $\lceil \nor{ {\lambda}^{f} }_1 \rceil \leq d+Z'$ bins (which implies optimality). The  constructed packing has vertex distance of $\leq \lceil \nor{ \{ {\lambda}^{f} \} }_1 \rceil$. Since $\lceil \nor{ \{ {\lambda}^{f} \} }_1 \rceil \leq d < d+Z' = Dist(b')$, this is a contradiction to the minimality of the vertex distance for $b'$.

{\bf Claim}: Let vectors $b^{(1)}, b^{(2)} \in int.cone(\Ptop \cap \ZZ^d)$ be given which correspond to multiplicities $K_1 \gamma$ and $K_2 \gamma$ of vector $\gamma \in (\Ptop \cap \ZZ^d) \setminus V_I$ with $K_1  <  K_2$. Then $[ b^{(1)} ] \not= [ b^{(2)} ]$.\\
%As above let ${B^{(1)}}_1, \ldots {B^{(1)}}_d \in V_I$ with ${b^{(1)}} \in Cone({B^{(1)}}_1, \ldots , {B^{(1)}}_d)$ and let ${B^{(2)}}_1, \ldots {B^{(2)}}_d \in V_I$ with ${b^{(2)}} \in Cone({B^{(2)}}_1, \ldots , {B^{(2)}}_d)$.
By a similar argument as in the previous claim, we can argue in this case. Since $b^{(1)}, b^{(2)} \in Cone(\Ptop \cap \ZZ^d)$, there exist  basic feasible solutions ${\lambda}^{(1)}, {\lambda}^{(2)} \in \QQgeq^d$ corresponding to vectors $B_1, \ldots , B_d \in V_I$ of LP (\ref{lpf}) with $b^{(i)} = {\lambda}^{(i)}_1 B_1 + \ldots+ {\lambda}^{(i)}_d B_d$ for $i=1,2$ and ${\lambda}^{(1)} \leq {\lambda}^{(2)}$ as $K_1 \gamma \leq K_2 \gamma$.
Suppose that $[ {b^{(1)}} ] = [ {b^{(2)}} ]$, then we obtain for the difference $(K_2-K_1) \gamma$ corresponding to ${b^{(2)}}-{b^{(1)}} = {{B} {\lambda^{(2)}}}^{int} + [ {b^{(2)}} ] - {{B} {\lambda^{(1)}}}^{int} - [ {b^{(1)}} ] = {B} {{\lambda^{(2)}}}^{int} - {B}{{\lambda^{(1)}}}^{int}$. Therefore, the difference ${b^{(2)}}-{b^{(1)}}$ can be written by the (positive) sum of vertices $B_1 , \ldots ,B_d$. This implies a packing for ${b^{(2)}}$ by $b^{(2)} = b^{(1)} + (b^{(2)}-b^{(1)})$ with vertex distance $Dist({b^{(1)}}) < Dist({b^{(2)}})$ which contradicts the minimality of vertex distance $b^{(2)}$. 

As a conclusion of the above claims, we obtain for each multiplicity $(d+1) \gamma, \ldots , (d+Z)\gamma$ of $\gamma$ the instances $[(d+1) \gamma], \ldots , [(d+Z)\gamma] \in G(\Pi)$, where die IRUP does not hold.
\end{proof}
Note that an instance with large vertex distance (e.g. double exponential in $d$) implies the existence of solutions with large (double exponential) multiplicities $\lambda_\gamma$ as the number of non-zero components can be bounded by the theorem of Eisenbrand and Shmonin \cite{eisenbrand2006} applied to points in $(\Ptop \cap \ZZ^d) \setminus V_I$.

Together with the construction of the previous subsection where we created a bin packing instance $b$ with a unique solution $\lambda \in int.cone(\Ptop \cap \ZZ^d)$ with $\lambda_\gamma = 2^{2^{\Omega(d)}}$ for some $\gamma \in \Ptop \cap \ZZ^d$, we obtain that $b$ has double exponentially many subinstances where the IRUP is not fulfilled.

%By increasing one component iteratively by one, we can find different vectors $\lambda^{(1)} \leq \lambda^{(2)} \leq \ldots \leq \lambda^{(Z)} = \lambda$ with the above requirements, where each $\lambda^{(i)}$ implies a unique instance $\{ b^{(i)} \} = \{ \sum_{p \in \Ptop \cap \ZZ^d} \lambda^{(i)}_p p \}$, where the roundup property does not hold.

\printbibliography

\end{document}